\newtheorem{theorem}{Theorem}[section]
\newtheorem{lemma}[theorem]{Lemma}
\newtheorem{proposition}[theorem]{Proposition}
\newtheorem{definition}[theorem]{Definition}
\newtheorem{lemma and definition}[theorem]{Lemma and Definition}
\newtheorem*{theorem*}{Theorem}
\newtheorem*{conjecture*}{Conjecture}
\newtheoremstyle{myrem}
 {7pt}
 {7pt}
 {}
 { }
 {\bf}
 {.}
 { }
 {}
 \theoremstyle{myrem}
 \newtheorem{remark}[theorem]{Remark}
 \newtheoremstyle{myrem}
 {7pt}
 {7pt}
 {}
 { }
 {\bf}
 {.}
 { }
 {}
 \theoremstyle{myrem}
 \newtheorem*{example}{Example}
 \newtheoremstyle{myrem}
 {7pt}
 {5pt}
 {}
 { }
 {\bfseries}
 {.}
 { }
 {}
 \theoremstyle{myrem}
 \newtheorem*{claim}{Claim}
\newcommand{\K}{\mathscr{K}}
\newcommand{\Pf}{\text{Pf}}
\newcommand{\Z}{\mathbb{Z}}
\newcommand{\R}{\mathbb{R}}
\newcommand{\trong}{t_{\text{in}}}
\newcommand{\ngoai}{t_{\text{out}}}
\title{A Pfaffian formula for the Ising partition function of surface graphs}
\author{Anh Minh Pham}
\address{Hanoi National University of Education, 136 Xuan Thuy Road, Cau Giay District, Hanoi, Vietnam}
\email{minhpa@hnue.edu.vn}
\subjclass[2010]{Primary 82B20; Secondary 05C70, 05C10, 57M15} 
\keywords{Ising partition function, surface graph, Kasteleyn orientation, crossing orientation, Pfaffian.}
\begin{document}
	
	\begin{abstract}
		We give a Pfaffian formula to compute the partition function of the Ising model on any graph $G$ embedded in a closed, possibly non-orientable surface. This formula, which is suitable for computational purposes, is based on the relation between the Ising model on $G$ and the dimer model on its terminal graph $G^T$. By combining the ideas of Loebl-Masbaum \cite{Loeb11}, Tesler \cite{Tes2000}, Cimasoni \cite{Cim09, Cim10} and Chelkak-Cimasoni-Kassel \cite{Chel15}, we give an elementary proof for the formula.
	\end{abstract}
	\maketitle
\section{Introduction} The Ising model is probably one of the most famous models in statistical physics. It was introduced by Wilhelm Lenz in 1920 and named after his PhD student Ernst Ising. The model can be defined as follows. Given a finite graph $G$ with vertex set $V(G)$ and edge set $E(G)$, a \emph{spin configuration} on $G$ is a function $\sigma:V(G)\to \{\pm1\}$. If $G$ is endowed with a positive edge-weight system $J=(J_e)_{e\in E(G)}$, the \emph{energy} of such a configuration $\sigma$ is defined by $$\mathscr{H}(\sigma)=-\sum_{e=(u,v)\in E(G)}J_e\sigma(u)\sigma(v).$$

By fixing an \emph{inverse temperature} $\beta\geq0$, one can define a probability measure on the set $\Omega(G)$ of all spin configurations on $G$ by $$\mu^J_{\beta}(\sigma)=\frac{\exp(-\beta\mathscr{H}(\sigma))}{Z^J_\beta(G)},$$ where 
 the normalisation constant
$$Z^J_{\beta}(G):=\sum_{\sigma\in \Omega(G)} \exp(-\beta H(\sigma))$$ is called the \emph{partition function} of the \emph{Ising model} on $G$, or simply the \emph{Ising partition function} of $G$. By the observation of van der Waerden \cite{Wae41}, calculating the partition function $Z^J_{\beta}(G)$ of the Ising model on the graph $G$ with fixed $J$ and $\beta$ can be transformed to calculating its high-temperature expansion $$Z_{\mathcal{I}}(G)=\sum_{G'\subset G}\prod_{e\in E(G')}x_e,$$ where the sum is over all even subgraphs $G'$ of $G$, i.e., subgraphs such that each vertex of $G$ is met by an even number of edges of $G'$, and $x_e=\tanh(\beta J_e)$. The aim of this article is to give a formula for computing this high-temperature expansion efficiently, when the graph $G$ is embedded in a closed, possibly non-orientable surface.
\smallskip

Before stating the formula, let us first quickly summarise some results that have been achieved for this computation so far. The very first solution to the computation of the Ising partition function for a square lattice $G$ is due to Onsager \cite{Onsager44}. However his solution seemed to be very difficult, motivating Kac and Ward to find an easier approach \cite{Kac52}: they defined a so-called \emph{Kac-Ward matrix} $M(G)$ with rows and columns indexed by oriented edges of $G$, and proved that $Z_{\mathcal{I}}(G)=(\det M(G))^{1/2}$. This formula is now known as \emph{Kac-Ward formula}. Unfortunately, some arguments in \cite{Kac52} are not correct. It was only in 1999, by Dolbilin \emph{et al.} \cite{Dol99}, that  the first direct combinatorial proof of the Kac-Ward formula for any planar graph $G$ with straight edges was obtained. In the meantime, there were several attempts to prove the Kac-Ward formula by Sherman \cite{Sherman60}, Hurst-Green \cite{Hurst60}, Kasteleyn \cite{Kas61} and Fisher \cite{Fis61}  indirectly (still for a square lattice) by relating the high-temperature expansion with the dimer partition function of some associated graph. The latter then can be given by the Pfaffian of a skew-symmetric adjacency matrix of the associated graph. It should be recalled that this $``$Pfaffian method'' was then extended by Kasteleyn \cite{Kas63,Kas67} to any planar graph, by Galluccio-Loebl \cite{Gal99} to any graph embedded in an orientable surface with a particular drawing in the plane, by Tesler \cite{Tes2000} to any graph drawn in the plane, and by Cimasoni-Reshetikhin \cite{CimRes07,Cim09} to any graph embedded in a surface, to get a so-called \emph{Pfaffian formula} for the dimer partition function. Afterwards, inspired by the two approaches mentioned above, Cimasoni  \cite{Cim10} gave two different proofs for a general Kac-Ward formula, which holds for any graph embedded in an orientable surface. 

Let us discuss these two proofs in more detail. The first one, which is completely combinatorial, relies on the proof of Dolbilin \emph{et al.} \cite{Dol99} for planar graphs. In fact these authors developed nearly all of necessary tools to obtain the general Kac-Ward formula, but only a right notion of Kac-Ward matrices for surface graphs was missing. Actually it turns out that Kac-Ward matrices in this general case can be encoded using spin structures \cite[Definition 1]{Cim10}, and the proof in \cite{Dol99} can be adapted. The second proof of the general Kac-Ward formula is based on the \emph{Fisher correspondence} $G\mapsto \Gamma_G$ where $\Gamma_G$ is an associated graph still embedded in the same surface as $G$. Then by relating the determinant of generalised Kac-Ward matrices of $G$ with the Pfaffian of adjacency matrices of $\Gamma_G$, and using the Pfaffian formula to compute the dimer partition function of $\Gamma_G$ one obtains the general Kac-Ward formula.
\smallskip

Following these two proofs, it is possible to extend the general Kac-Ward formula to graphs embedded in (possibly) non-orientable surfaces. 
Let us consider the approach that goes through the dimer partition function of an associated graph. It was first observed by Kasteleyn \cite{Kas63} that one can transform the Ising model on a graph $G$ to the dimer model on its associated \emph{terminal graph} $G^T$. (However $G^T$ in general is not embedded in the same surface as $G$, so we can not apply the Pfaffian formula for $G^T$.) Building on this idea of Kasteleyn, Chelkak-Cimasoni-Kassel \cite{Chel15} also obtained the general Kac-Ward formula for graphs embedded in an orientable surface as in \cite{Cim10}. In more detail, they multiplied a generalised Kac-Ward matrix $KW_\lambda(G)$ encoded by a spin structure $\lambda$ with suitable ones to obtain a new matrix $\widehat{K}_\lambda$ which is skew-symmetric, whose determinant   is the same as that of $KW_\lambda(G)$. Since this new matrix can be thought of as a weighted adjacency matrix of the terminal graph $G^T$, its Pfaffian $\Pf\, (\widehat{K}_\lambda)$ counts weighted dimer configurations of $G^T$ with signs. The latter quantity then can be proved combinatorially to be equal to the Ising high-temperature expansion twisted by signs, and therefore we end up with the general Kac-Ward formula. We also would like to recall that, by \cite[Section 2.2]{Chel15}, the signs of dimer configurations of $G^T$ in the expansion of $\Pf\, (\widehat{K}_\lambda)$ amounts to saying that the  spin structure $\lambda$ is equivalent to a \emph{crossing orientation} of $G^T$ as defined by Tesler \cite{Tes2000}. Therefore it is possible to replace the generalised Kac-Ward matrix $KW_\lambda(G)$ by an adjacency matrix of $G^T$ with respect to the corresponding crossing orientation so that the general Kac-Ward formula for $G$ boils down to a Pfaffian formula for $G^T$.
\smallskip

 Inspired by the fact mentioned above, in this paper, we give a formula to express the Ising partition function of $G$ in terms of Pfaffians of suitable adjacency matrices of $G^T$. More precisely, we define \emph{good orientations} on $G^T$ with respect to a particular drawing of $G^T$ in the plane (cf. Definition \ref{def: Kasteleyn}). Then if the graph $G$ is embedded in a possibly non-orientable surface $\Sigma$, our formula reads
 $$Z_{\mathcal{I}}(G,x)=\frac{1}{2^{b_1/2}}\bigg|\sum_{q}\exp \left(\frac{i\pi}{4}\right)^{-\text{Br}(q)} \text{Pf}(A^{K_q,\omega}(G^T,x^T))\bigg|.$$ In this formula,  $b_1=\dim H_1(\Sigma;\Z_2)$, the sum is taken over all quadratic enhancements $q$ of $\Sigma$, $\text{Br}(q)$ denotes the Brown invariant of $q$, $K_q$ is an orientation derived from a good orientation $K$, and $A^{K_q,\omega}(G^T,x^T)$ is the twisted adjacency matrix of the terminal graph $(G^T,x^T)$ with respect to $K_q$ and a function $\omega$ characterising the orientability of $\Sigma$.
 This main result can be found in Theorem \ref{theo: general} with more details. A practical version can be found in Theorem \ref{theo: second}.
 
Last but not least, we would like to emphasise that one can use the whole proof of Tesler \cite{Tes2000} for crossing orientations on arbitrary graphs to prove our formula. However, it turns out that in the case of the terminal graph, the method developed by Tesler can be reduced extremely. Hence together with an idea of Loebl-Masbaum \cite{Loeb11}, we get an elementary, self-contained proof for the Pfaffian formula. Let us also mention that using the method of Dolbilin \emph{et al.} \cite{Dol99} (and then developed by Cimasoni \cite{Cim10}) one can have a Kac-Ward formula for graphs embedded in (possibly) non-orientable surfaces. In this case generalised Kac-Ward matrices can be encoded using pin$^-$ structures, which generalise spin structures to non-orientable surfaces \cite{KirTay90}. However our approach here does not need to use that geometrical notion.
\smallskip

The paper is organised as follows. In Section \ref{sec: Pf formula for Ising} we fix once and for all a drawing of the surface graph $G$  and define orientations of interest. We also state the Pfaffian formula in the orientable case (Theorems \ref{theo: first} and \ref{theo: main}). Section \ref{sec: Proof} is devoted to prove  Theorems \ref{theo: first} and \ref{theo: main}. We begin Section \ref{sec: Proof} with some preliminaries, recalling some results with detailed proof as a preparation for the non-orientable case. The main result of Section \ref{sec: Proof} is Proposition \ref{pro: main1}, which is still valid in the case of non-orientable surfaces. Using this result with slight modifications, in Section \ref{sec: non-orientable case Ising} we state and prove the general Pfaffian formula (Theorems \ref{theo: second} and \ref{theo: general}) for graphs embedded in non-orientable surfaces.
\subsection*{Acknowledgments} This work is supported by the NAFOSTED grant of Vietnam (Grant No. 101.04-2018.03). The author would like to thank his superadvisor David Cimasoni for helpful discussions.
\section{Statement of the Pfaffian formula in the orientable case}
\label{sec: Pf formula for Ising}
The most general Pfaffian formula for the Ising partition function of graphs embedded in surfaces can be found in Theorems \ref{theo: second} and \ref{theo: general} where surfaces are possibly non-orientable. However for simplicity, in this section let us first consider the case of orientable surfaces as a warm-up case, and study the case of non-orientable surfaces later.
\subsection{Graph drawing and good orientations}\label{subsec: graph drawing and good orientations}
Before stating the formula, let us first give some basic notions and terminology that are needed. More precisely, we shall describe a particular drawing of the graph $G$ and its terminal graph $G^T$ in the plane, together with some particular choice of orientations.

First of all, we will need a fixed drawing of $G$ in the plane which is specified as follows. Recall that $G$ is embedded in the orientable surface $\Sigma$ of genus $g$. Firstly, we represent $\Sigma$ as a planar $4g \text{-gon}$ $\mathcal{P}$ with $2g$ pairs of sides identified following the word $a_1b_1a_1^{-1}b_1^{-1}\cdots a_gb_ga_g^{-1}b_g^{-1}$, and draw $G$ in this polygon so that all the edges of $G$ intersect the $4g$ sides transversely. Secondly, for each pair of identified sides of $\mathcal{P}$, we add a strip to the outside of $\mathcal{P}$ connecting them (so that this strip is consistent with the side identification), and then extend all the edges of $G$ intersecting these two sides inside this strip (still following the side identification) so that their extended parts intersect each other transversely. Finally, ignoring all the sides of $\mathcal{P}$ as well as the added strips, we are left with the desired drawing of the graph $G$ (see Figure $\ref{fig: drawing}$). For further purposes, let us call edges of $G$ which are partly outside $\mathcal{P}$ the $\textit{outside}$ edges.
\begin{figure}[t]
	\centering
	\includegraphics[height=180pt]{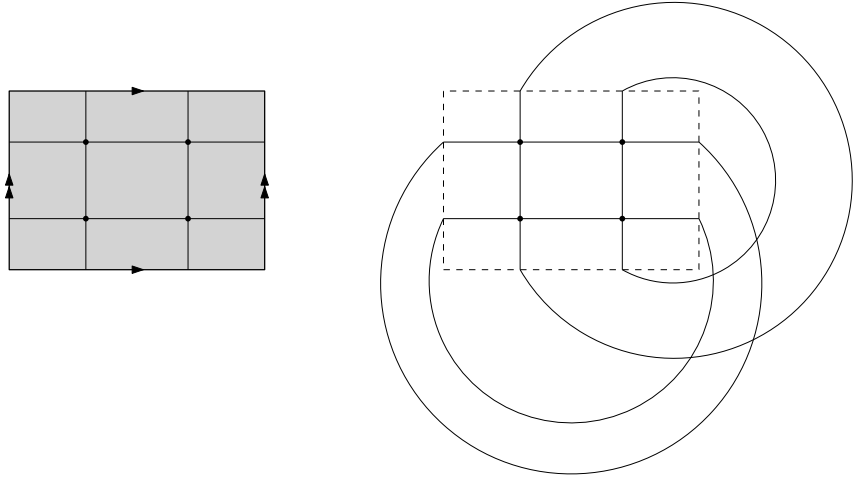}
	\caption{The $2\times 2$ square lattice embedded in the torus and its drawing in the plane.} \label{fig: drawing}
\end{figure}

Now let us continue with the terminal graph $G^T$, and particular orientations of interest. The idea of using the terminal graph to treat the Ising model is due to Kasteleyn \cite{Kas63} for a square lattice. Then Chelkak-Cimasoni-Kassel  \cite{Chel15} adapt this idea to graphs embedded in orientable surfaces. Here we also employ this idea to our setting. Recall that the $\emph{terminal graph}$ \index{terminal graph} $G^T$ (cf. \cite{Kas63}) associated to a given graph $G$  is the one obtained by replacing each vertex $v$ of degree $d(v)$ of $G$ by a complete graph $K_{d(v)}$ (see Figure $\ref{fig: terminal graph}$). Note that this transformation is local, so the fixed drawing of $G$ from the beginning also induces a drawing of $G^T$ in the plane. We shall say that an edge of $G^T$ is $\emph{short}$ if it is an edge of one of complete graphs, and $\emph{long}$ otherwise (i.e., if it comes from an edge of $G$). The long edges of $G^T$ which partly lie outside $\mathcal{P}$ will be also called $\emph{outside}$ edges, and $\emph{inside}$ otherwise. Given edge weights $x=(x_e)_{e \in E(G)}$ on $G$, we shall denote by $x^T$ the edge weights on $G^T$ obtained by assigning weight 1 to all long edges and weight $(x_e x_{e'})^{1/2}$ to the short edge corresponding to the two adjacent edges $e,e' \in E(G)$.

For further purposes, note that all the long edges of $G^T$ form a dimer configuration. Let us call the latter the \emph{standard dimer configuration} of $G^T$ and denote it by $D_0$.
\begin{figure}[h]
	\centering
	\includegraphics[height=100pt]{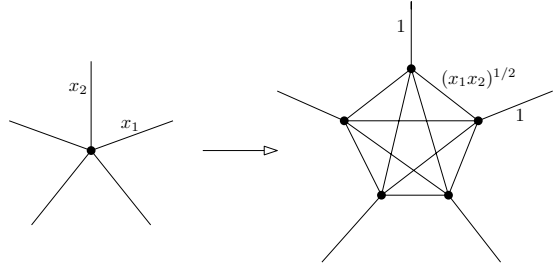}
	\caption{Local transformation at each vertex of $G$ to obtain $G^T$.}\label{fig: terminal graph}
\end{figure}

We now turn to some specific orientations on $G^T$. First of all, let us assume from now on that the plane is oriented counterclockwise. As mentioned above, we always fix the drawing of $G$ as well as the induced drawing of $G^T$ in the plane. With this drawing of $G^T$, we are interested in two types of faces: a face $f$ of $G^T$ is called $\emph{inside}$ if it is homeomorphic to a disc, and completely lies inside $\mathcal{P}$; the $\emph{outside}$ face $f_e$ corresponding to an outside edge $e$, by ignoring all the other outside edges, is the unique face formed by $e$ and some edges along the boundary of the subgraph obtained from $G^T$ by removing all the other outside edges (see Figure $\ref{fig: Kasteleyn}$). We also recall that each vertex $v\in V(G)$ of degree $d(v)$ gives rise to $d(v)$ vertices of the complete graph $K_{d(v)}$, and furthermore this complete graph can be drawn inside a small disc with the vertices on its boundary. Let us label these vertices by $1,\dots, d(v)$ with respect to the clockwise orientation (the starting vertex is not important). Inspired by Kasteleyn \cite{Kas63} and Tesler \cite{Tes2000} we define an orientation of interest as follows.
\begin{figure}[t]
	\centering
	\includegraphics[height=160pt]{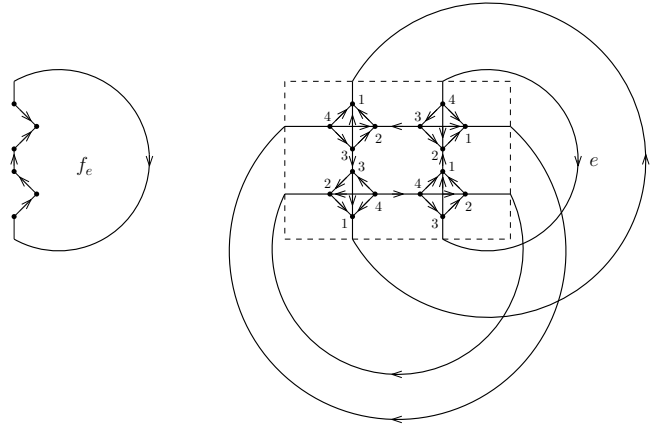}
	\caption{An outside face and a good orientation on the terminal graph of the $2\times 2$ square lattice.}\label{fig: Kasteleyn}
\end{figure}
\begin{definition} \label{def: Kasteleyn} An orientation $K$ on the edges of the terminal graph $G^T$ is called $\emph{good}$ \index{good orientation} if the three following conditions hold.
	\begin{enumerate} 
		\item[(i)] $K$ is from big-labelled vertices to small-labelled vertices on short edges.
		\item[(ii)]  For every inside face $f$ of $G^T$, the number $n^K(\partial f)$ of edges on its boundary $\partial f$ where the orientation of $\partial f$ is different from $K$ is odd.
		\item[(iii)] For every outside face $f_e$, $n^K(\partial f_e)$ is also odd.
	\end{enumerate}
\end{definition}
Note that there always exists good orientations (see Figure $\ref{fig: Kasteleyn}$ for example), however the proof of this fact is postponed until Proposition $\ref{pro: Kasteleyn}$. Let us pick a good orientation and denote it by $K_0$. We now show how to derive other specific orientations from $K_0$. Recall that the $4g\text{-gon}$ $ \mathcal{P}$ has $4g$ sides symbolised by the word $a_1b_1a_1^{-1}b_1^{-1}\cdots a_gb_ga_g^{-1}b_g^{-1}$, and this word induces the basis $\mathcal{B}:=\{[a_1],\dots,[a_g],[b_1],\dots,[b_g]\}$ of $H_1(\Sigma;\mathbb{Z}_2)$, the first homology group of $\Sigma$ with coefficients in $\Z_2$. For each $\Delta\in H_1(\Sigma;\mathbb{Z}_2)$, let us write $$\Delta=\sum_{i=1}^g\epsilon_i[a_i]+\sum_{i=1}^g\epsilon'_i[b_i],$$ with $\epsilon_i,\epsilon'_i\in \mathbb{Z}_2$. So by fixing the basis $\mathcal{B}$ (that we always do), one can identify $\Delta$ with $(\epsilon,\epsilon')\in \mathbb{Z}_2^{2g}$ where $\epsilon=(\epsilon_1,\dots ,\epsilon_g)$ and $\epsilon'=(\epsilon'_1,\dots ,\epsilon'_g)$ are elements of $\mathbb{Z}_2^g$. Then for each $\Delta\in H_1(\Sigma;\mathbb{Z}_2)$, or equivalently, for each $(\epsilon,\epsilon')\in \mathbb{Z}_2^{2g}$, let us denote by $K_{\epsilon,\epsilon'}$ the orientation obtained by inverting $K_0$ on every edge $e$ each time $e$ crosses a side $a_i$ (resp. $b_j$) of $\mathcal{P}$ with $\epsilon_i=1$ (resp. $\epsilon'_j=1$). The collection $\{K_{\epsilon,\epsilon'}:(\epsilon,\epsilon')\in \mathbb{Z}_2^{2g}\}$ consists of orientations of interest.
\smallskip

We are now ready to state the Pfaffian formula.
\subsection{Statement of the Pfaffian formula}\label{subsec: state Pf for Ising}
Recall that if $(G,x)$ is an edge-weighted graph of $2n$ vertices labelled by $\{1,\dots, 2n\}$ and $K$ is an arbitrary orientation on its edges, then the adjacency matrix \index{adjacency matrix} of $G$ with respect to $K$, denoted by $A^K(G,x)$, has entries defined by
\begin{equation*}
a_{ij}=\sum_{e=(i,j)}\epsilon^K_{ij}(e)x(e).
\end{equation*}
In this equality the sum is taken over all the edge $e$ of $G$ between two vertices $i,j$, and $$\epsilon^K_{ij}(e)= \left\{  \begin{array}{ll}
+1& \text{if $e$ is oriented by $K$ from $i$ to $j$};\\
-1& \text{if not}.
\end{array} \right. $$ The first version of the Pfaffian formula for the Ising partition function can be stated as follows.
\begin{theorem}
	\label{theo: first}
	Let $(G,x)$ be a weighted graph embedded in the orientable surface $\Sigma$ of genus $g$. Then the Ising partition function of $G$ is given by $$Z_{\mathcal{I}}(G,x)=\frac{1}{2^g}\bigg|\sum_{(\epsilon,\epsilon')\in \mathbb{Z}_2^{2g}}(-1)^{\sum\limits_{i=1}^g \epsilon_i\epsilon'_i} \text{Pf}(A^{K_{\epsilon,\epsilon'}}(G^T,x^T))\bigg|,$$ where $A^{K_{\epsilon,\epsilon'}}(G^T,x^T)$ is the adjacency matrix of the terminal graph $(G^T,x^T)$ with respect to the orientation $K_{\epsilon,\epsilon'}$.
\end{theorem}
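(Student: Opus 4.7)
I would first expand each Pfaffian as a signed sum over dimer configurations (perfect matchings) $D$ of $G^T$,
$$\Pf(A^{K_{\epsilon,\epsilon'}}(G^T,x^T))=\sum_D\sigma_{K_{\epsilon,\epsilon'}}(D)\prod_{e\in D}x^T(e),$$
and then use the correspondence $D\mapsto(G'_D,m_D)$, where $G'_D\subset G$ is the even subgraph of edges whose long edge in $G^T$ is \emph{absent} from $D$, and $m_D$ is the induced pairing of these edges at each vertex. A direct telescoping of the short-edge weights $(x_ex_{e'})^{1/2}$ shows $\prod_{e\in D}x^T(e)=\prod_{e\in G'_D}x_e$, so the weight depends only on $G'_D$; moreover, the homology class $[D\triangle D_0]\in H_1(\Sigma;\Z_2)$ (where $D_0$ is the standard all-long dimer configuration) depends only on $G'_D$, since the short edges of $G^T$ lie in contractible discs inside $\mathcal{P}$.

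Next I would record that, since $K_{\epsilon,\epsilon'}$ is obtained from $K_0$ by flipping every outside long edge crossing a marked side,
$$\sigma_{K_{\epsilon,\epsilon'}}(D)=\sigma_{K_0}(D)\cdot(-1)^{|D\cap F(\epsilon,\epsilon')|},$$
with $F(\epsilon,\epsilon')$ the set of flipped edges. Modulo $2$, $|D\cap F|$ equals $|D_0\cap F|$ plus the intersection pairing of $[D\triangle D_0]$ with the class $\Delta=\sum\epsilon_i[a_i]+\sum\epsilon'_j[b_j]$. Substituting and invoking the Gaussian identity
$$\frac{1}{2^g}\sum_{(\epsilon,\epsilon')\in\Z_2^{2g}}(-1)^{\sum_{i=1}^g\epsilon_i\epsilon'_i+\beta\cdot\epsilon+\alpha\cdot\epsilon'}=(-1)^{\alpha\cdot\beta},$$
which follows by separation of variables, converts the alternating sum of Pfaffians into
$$\frac{1}{2^g}\sum_{(\epsilon,\epsilon')}(-1)^{\sum_i\epsilon_i\epsilon'_i}\Pf(A^{K_{\epsilon,\epsilon'}})=\pm\sum_{G'\text{ even}}\left(\sum_{D\mapsto G'}\sigma_{K_0}(D)\right)(-1)^{q_0([G'])}\prod_{e\in G'}x_e,$$
where $q_0\colon H_1(\Sigma;\Z_2)\to\Z_2$ is a quadratic form whose associated bilinear form is the intersection pairing.

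The heart of the proof, and its main obstacle---essentially Proposition~\ref{pro: main1}---is to show that the inner factor $\left(\sum_{D\mapsto G'}\sigma_{K_0}(D)\right)(-1)^{q_0([G'])}$ is \emph{independent of $G'$}. Condition~(i) of a good orientation (short edges oriented from high to low label) forces the restriction of $A^{K_0}$ to each complete graph $K_{d(v)}$ to be the ``descending'' adjacency matrix, whose Pfaffian equals $+1$; summing the $(2m_v-1)!!$ choices of local matching at each vertex $v$ collapses this signed count to $+1$, so that $\sum_{D\mapsto G'}\sigma_{K_0}(D)=\pm1$ is a single global sign attached to $G'$. Conditions~(ii) and~(iii) for inside and outside faces then serve as the Kasteleyn face-parity conditions: following the Loebl--Masbaum streamlining of Tesler's argument, any two even subgraphs are related by a sequence of face moves and handle moves, and one verifies that the inside face parities absorb the sign change produced by homologically trivial moves while the twist by $q_0$, supplemented by the outside face parities, absorbs the sign change produced by homologically nontrivial ones. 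Once constancy is established, the summation collapses to $\pm Z_{\mathcal{I}}(G,x)=\pm\sum_{G'\text{ even}}\prod_{e\in G'}x_e$, and taking absolute values yields the formula.
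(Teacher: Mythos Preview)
Your overall strategy is sound and rests on the same key ingredients as the paper---Proposition~\ref{pro: main1}, the dimer-to-even-subgraph correspondence (Lemma~\ref{lem: Z}), and the identity $t_{\text{out}}(D)\equiv q_0([D])$ of Proposition~\ref{pro: outside intersections}. Organizationally you take a more direct path: the paper first proves the Arf-invariant version (Theorem~\ref{theo: main}) via Lemma~\ref{lem: Arf} and then specializes to obtain Theorem~\ref{theo: first}, whereas you collapse the $(\epsilon,\epsilon')$-sum in one stroke with the Gaussian identity. That is a legitimate shortcut if Theorem~\ref{theo: first} is the only goal.

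There is, however, a concrete bookkeeping slip that makes your intermediate claim false as written. The factor $(-1)^{|D_0\cap F(\epsilon,\epsilon')|}$ is \emph{not} a global constant that can be absorbed into your floating~$\pm$: it equals $(-1)^{\Delta\cdot[D_0]}$, which is linear in $(\epsilon,\epsilon')$ and therefore shifts the argument of the Gaussian. The correct output of your Gaussian sum is $(-1)^{q_0([D])}=(-1)^{q_0([D_0]+[G'])}$, not $(-1)^{q_0([G'])}$. Consequently the quantity you single out as the ``heart'', namely $\bigl(\sum_{D\mapsto G'}\sigma_{K_0}(D)\bigr)(-1)^{q_0([G'])}$, is \emph{not} independent of~$G'$ in general---it carries an extra factor $(-1)^{[D_0]\cdot[G']}$. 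The correct constant is $\bigl(\sum_{D\mapsto G'}\sigma_{K_0}(D)\bigr)(-1)^{q_0([D])}=\epsilon_0$, and this follows cleanly from Proposition~\ref{pro: main1}, Proposition~\ref{pro: outside intersections}, and Lemma~\ref{lem: dimer} (the last being exactly your ``Pfaffian of the descending $K_{2m}$ equals~$1$'' observation). The paper confronts this same $[D_0]$-shift explicitly, via the reindexing $q\mapsto q+[D_0]^*$ leading to equation~(\ref{eq: Z^q}). Once you track the shift correctly, your argument goes through.
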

Before going further, let us give an example to see how the formula above works.
\begin{figure}[h]
	\centering
	\includegraphics[height=120pt]{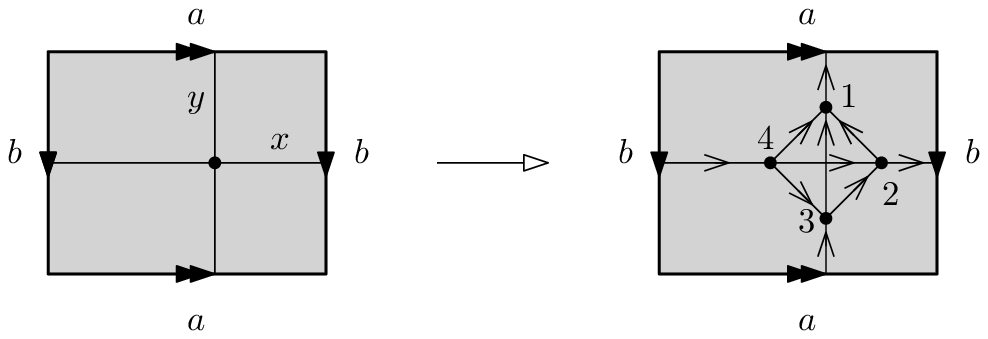}
	\caption{The graph $G\subset\mathscr{T}$ and its associated terminal graph $G^T$ with a good orientation $K$.\label{fig: exampleTorus}}
\end{figure}
\begin{example}
	Let $G$ be the $1\times 1$ square lattice embedded in the torus $\mathscr{T}$ with horizontal weight $x$ and vertical weight $y$. Label the vertices of $G^T$ as in Figure \ref{fig: exampleTorus}, and choose $K$ a good orientation. By definition we have $$A^{K}(G^T,x^T)=\begin{pmatrix} 0 & -\sqrt{xy} & 1-y & -\sqrt{xy}\\
	\sqrt{xy} & 0 & -\sqrt{xy} & 1-x \\
	-1+y & \sqrt{xy} & 0 & -\sqrt{xy} \\
	\sqrt{xy} & -1+x & \sqrt{xy} & 0
	\end{pmatrix}$$ whose Pfaffian is $\Pf(A^{K}(G^T,x^T))=\Pf(A^{K_{0,0}}(G^T,x^T))=xy-1+x+y$. Note that in this case $K_{10}$ is obtained by inverting $K$ on the long edge connecting the vertices labelled 1 and 3, and similarly for $K_{01}$ and $K_{11}$. Hence one can find easily $$\Pf (A^{K_{1,0}}(G^T,x^T))=xy+1-x+y,$$ $$\Pf (A^{K_{0,1}}(G^T,x^T))=xy+1+x-y,$$ 
	$$\Pf (A^{K_{1,1}}(G^T,x^T))=xy-1-x-y.$$
	Then the formula in Theorem \ref{theo: first} gives
	\begin{multline*}
	Z_{\mathcal{I}}(G,x)=\frac{1}{2}\big|\Pf (A^{K_{0,0}}(G^T,x^T))+\Pf (A^{K_{0,1}}(G^T,x^T))\\+\,\Pf (A^{K_{1,0}}(G^T,x^T))-\,\Pf (A^{K_{1,1}}(G^T,x^T)) \big|,\end{multline*} leading to $Z_{\mathcal{I}}(G,x)=xy+x+y+1$, which is trivially correct.
\end{example}
\smallskip
We now continue with the second version of the Pfaffian formula. To do so one needs the terminology of quadratic forms and Arf invariant that we now very briefly recall. Readers are referred to \cite[Chapter 9]{Saveliev12} for more details.

Given $(H,\cdot)$ a vector space of finite dimension over $\mathbb{Z}_2$ together with an alternating bilinear form $(\cdot)$, a function $q:H\rightarrow \mathbb{Z}_2$ is called 
a $\emph{quadratic form}$ \index{quadratic form} on $(H,\cdot)$ if for every $x,y\in H$ we have $$q(x+y)=q(x)+q(y)+x\cdot y.$$ Note that every quadratic form is completely determined by its value on a basis of $H$, and these values can be chosen freely. Moreover, if $(\cdot)$ is non-degenerate then one can define the $\emph{Arf invariant}$ of a quadratic form $q$ as a number $\text{Arf}(q) \in\mathbb{Z}_2$ by the formula 
$$(-1)^{\text{Arf}(q)}=\frac{1}{\sqrt{|H|}}\sum_{x\in H}(-1)^{q(x)}.$$

Coming back to our context, we are interested in quadratic forms on $(H_1(\Sigma;\Z_2),\cdot)$ where $(\cdot)$ is the intersection form on  $H_1(\Sigma;\Z_2)$.  Denote the set of all such forms by $\mathcal{Q}(\Sigma)$. Now imagine that the $4g \text{-gon}$ $\mathcal{P}$ representing $\Sigma$ is embedded in $\mathbb{R}^3$ together with the added strips. One can expand all the strips suitably so that we get back the original surface $\Sigma$ with $\mathcal{P}$ as its subset. Then one can see that $H_1(\Sigma;\mathbb{Z}_2)$ is isomorphic to $ H_1(\Sigma,\mathcal{P};\mathbb{Z}_2)$, the first relative homology group of $\Sigma$ with respect to $\mathcal{P}$. Thus a quadratic form $q\in\mathcal{Q}(\Sigma)$ can be considered as a $\mathbb{Z}_2-$valued function on $H_1(\Sigma,\mathcal{P};\mathbb{Z}_2)$. Let $q_0$ be the quadratic form getting value 0 on the basis $\mathcal{B}$. For each quadratic form $q\in \mathcal{Q}(\Sigma)$, we define $K_q$ as the orientation obtained by inverting $K_0$ on every outside edge $e$ such that $q([e])\not= q_0([e])$. (It is trivial that by this notation one has $K_0=K_{q_0}$.) Then we can state the second version of the Pfaffian formula as follows.
\begin{theorem}
	\label{theo: main}
	Let $(G,x)$ be a weighted graph embedded in an orientable surface $\Sigma$ of genus $g$. Then the Ising partition function of $G$ is given by $$Z_{\mathcal{I}}(G,x)=\frac{\epsilon_0}{2^g}\sum_{q\in \mathcal{Q}(\Sigma)}(-1)^{\text{Arf}(q)} \text{Pf}(A^{K_q}(G^T,x^T)),$$ where $\epsilon_0=\pm 1$ is a constant. In this formula, $A^{K_q}(G^T,x^T)$ is the adjacency matrix of the terminal graph $(G^T,x^T)$ with respect to the orientation $K_q$.
\end{theorem}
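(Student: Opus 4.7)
The plan is to derive Theorem~\ref{theo: main} directly from Theorem~\ref{theo: first} by reindexing the sum over $(\epsilon,\epsilon')\in\Z_2^{2g}$ as a sum over quadratic forms $q\in\mathcal{Q}(\Sigma)$. A quadratic form $q\in\mathcal{Q}(\Sigma)$ is determined by, and can be prescribed arbitrarily on, the basis $\mathcal{B}$, so the map $\Phi:q\mapsto(\epsilon,\epsilon')$ defined by $\epsilon_i:=q([b_i])$ and $\epsilon'_i:=q([a_i])$ is a bijection $\mathcal{Q}(\Sigma)\xrightarrow{\sim}\Z_2^{2g}$ sending $q_0$ to $(0,0)$.

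Under this bijection I would first check that the orientation $K_q$ coincides with $K_{\Phi(q)}$. Both agree with $K_0$ off the outside edges, and an outside edge $e$ crossing the side $a_i$ (resp.\ $b_j$) of $\mathcal{P}$ becomes, after identification, a loop whose intersection numbers with $\mathcal{B}$ identify $[e]$ with $[b_i]$ (resp.\ $[a_j]$) in $H_1(\Sigma;\Z_2)$, using non-degeneracy of the intersection form and the symplectic relations $[a_i]\cdot[b_j]=\delta_{ij}$, $[a_i]\cdot[a_j]=[b_i]\cdot[b_j]=0$. Thus the defining condition $q([e])\ne q_0([e])=0$ for $K_q$ translates to $q([b_i])=1$ (resp.\ $q([a_j])=1$), i.e.\ $\epsilon_i=1$ (resp.\ $\epsilon'_j=1$), which is exactly the condition defining $K_{\Phi(q)}$.

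Next, I would compute $\mathrm{Arf}(q)$ in the symplectic basis $\mathcal{B}$. Writing $x=\sum_i\alpha_i[a_i]+\sum_i\beta_i[b_i]$ and iterating the identity $q(u+v)=q(u)+q(v)+u\cdot v$, the symplectic relations kill every cross term with $i\ne j$, yielding
$$q(x)=\sum_{i=1}^g\bigl(\alpha_i\,q([a_i])+\beta_i\,q([b_i])+\alpha_i\beta_i\bigr)\pmod 2.$$
Hence $\sum_x(-1)^{q(x)}$ factors as $\prod_i\sum_{(\alpha_i,\beta_i)\in\Z_2^2}(-1)^{\alpha_i q([a_i])+\beta_i q([b_i])+\alpha_i\beta_i}$; a direct check shows each factor equals $2\,(-1)^{q([a_i])q([b_i])}$, so dividing by $\sqrt{|H_1(\Sigma;\Z_2)|}=2^g$ gives $\mathrm{Arf}(q)=\sum_i q([a_i])q([b_i])=\sum_i\epsilon_i\epsilon'_i\pmod 2$.

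Substituting these identifications into Theorem~\ref{theo: first} rewrites the expression inside the absolute value as $\sum_{q\in\mathcal{Q}(\Sigma)}(-1)^{\mathrm{Arf}(q)}\Pf(A^{K_q}(G^T,x^T))$; since $Z_{\mathcal{I}}(G,x)\ge 0$, dropping the absolute value introduces at most an overall sign $\epsilon_0=\pm1$, which is the formula of Theorem~\ref{theo: main}. The only non-formal step is the Arf-invariant computation: I would need to verify carefully that iterating $q(u+v)=q(u)+q(v)+u\cdot v$ produces no off-diagonal cross terms across different symplectic pairs, which is precisely what the vanishing of $[a_i]\cdot[a_j]$, $[b_i]\cdot[b_j]$, and $[a_i]\cdot[b_j]$ for $i\ne j$ guarantees. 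Everything else is bookkeeping about how outside edges of $G^T$ encode the basis $\mathcal{B}$.
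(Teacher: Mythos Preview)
Your reindexing argument is correct as a formal implication from Theorem~\ref{theo: first} to Theorem~\ref{theo: main}: the bijection $\Phi$ is the right one, and your Arf computation in the symplectic basis is accurate. The difficulty is that in the paper's logical structure the implication runs the other way. Theorem~\ref{theo: first} is not established independently; it is deduced \emph{from} Theorem~\ref{theo: main} by exactly the reindexing you describe (in reverse), using Lemma~\ref{lem: Arf}$(ii)$ to convert $(-1)^{\mathrm{Arf}(q)}$ into $(-1)^{q_0(\Delta)}=(-1)^{\sum\epsilon_i\epsilon'_i}$. So taking Theorem~\ref{theo: first} as input makes your argument circular.

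The paper's direct proof of Theorem~\ref{theo: main} rests on the key combinatorial fact Proposition~\ref{pro: main1}, namely $\epsilon^{K_0}(D)=\epsilon_0(-1)^{t(D)}$ for every dimer configuration $D$ of $G^T$; this is where the constant $\epsilon_0$ actually comes from, not from dropping an absolute value. One then combines this with Proposition~\ref{pro: outside intersections} (identifying $t_{\mathrm{out}}(D)$ with $q_0([D])$), Lemma~\ref{lem: Z} (relating the twisted sum $Z^q(G,x)$ to dimers on $G^T$), and the Arf identities of Lemma~\ref{lem: Arf} to assemble the Pfaffian formula. Only afterwards is Theorem~\ref{theo: first} read off. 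A minor side remark on your orientation check: an outside edge $e$ may traverse several strips, so $[e]$ is in general a combination of basis elements rather than a single $[b_i]$ or $[a_j]$; the clean way to phrase the comparison $K_q=K_{\epsilon,\epsilon'}$ is via $\Delta\cdot[e]$ with $\Delta=\sum_i\epsilon_i[a_i]+\sum_i\epsilon'_i[b_i]$, as the paper does.
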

\section{Proof of the Pfaffian formula}\label{sec: Proof}
\subsection{Preliminaries} \label{subsec: preliminaries} In this subsection, we recall some results of Loebl-Masbaum \cite{Loeb11}, Tesler \cite{Tes2000} and Chelkak-Cimasoni-Kassel \cite{Chel15} as a preparation for the proofs of Theorems \ref{theo: first} and \ref{theo: main}; the easy proofs are included for completeness.

Let us first come back to the drawing of $G$ and see how the intersection form $(\cdot)$ on $H_1(\Sigma;\mathbb{Z}_2)$ relates to the one in the plane. Recall that in the previous section, we identify $H_1(\Sigma;\mathbb{Z}_2)$ with $H_1(\Sigma,\mathcal{P};\mathbb{Z}_2)$.
\begin{lemma}\label{lem: intersection} Let $e_1,e_2$ be two outside edges of $G$ which induce two relative homology classes $[e_1],[e_2]\in H_1(\Sigma,\mathcal{P};\mathbb{Z}_2)$. Then we have the following equality modulo 2 $$[e_1]\cdot[e_2]=e_1\cdot e_2,$$ where the intersection in the left-hand side is in $H_1(\Sigma;\mathbb{Z}_2)$, while the right-hand side is the geometric intersection number in the plane.
\end{lemma}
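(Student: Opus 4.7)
My plan is to prove Lemma \ref{lem: intersection} by expressing both sides of the identity combinatorially in terms of the number of times $e_1$ and $e_2$ cross each side of $\mathcal{P}$. For each outside edge $e$ and each index $1\leq i\leq g$, write $n^a_i(e),n^b_i(e)\in\mathbb{Z}_2$ for the mod-$2$ count of crossings of $e$ with the polygon sides $a_i$ and $b_i$.

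First I would identify $[e]\in H_1(\Sigma;\mathbb{Z}_2)$ in the symplectic basis $\mathcal{B}$. Since $\mathcal{P}$ is contractible, the isomorphism $H_1(\Sigma;\mathbb{Z}_2)\cong H_1(\Sigma,\mathcal{P};\mathbb{Z}_2)$ lets us close $e$ up by an arc $\alpha$ inside $\mathcal{P}$ joining its endpoints; the closed curve $\tilde{e}=e\cup\alpha$ represents the same class and, because $\alpha$ meets no side, its intersection on $\Sigma$ with the side curves $a_j$ and $b_j$ equals $n^a_j(e)$ and $n^b_j(e)$ respectively. Matching coefficients against the symplectic pairing $[a_i]\cdot[b_j]=\delta_{ij}$ yields
\begin{equation*}
[e]=\sum_{i=1}^g\bigl(n^b_i(e)[a_i]+n^a_i(e)[b_i]\bigr),
\end{equation*}
and hence
\begin{equation*}
[e_1]\cdot[e_2]=\sum_{i=1}^g\bigl(n^a_i(e_1)\,n^b_i(e_2)+n^b_i(e_1)\,n^a_i(e_2)\bigr)\pmod 2.
\end{equation*}

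For the planar side, I would classify the geometric crossings of $e_1$ and $e_2$ by where they occur: inside $\mathcal{P}$, inside a single strip, or at the overlap of two distinct strips. Crossings inside $\mathcal{P}$ are ruled out because $G$ is embedded in $\Sigma$ and the interior of $\mathcal{P}$ embeds into $\Sigma$. The decisive geometric fact is that the word $a_1b_1a_1^{-1}b_1^{-1}\cdots a_gb_ga_g^{-1}b_g^{-1}$ makes the four sides of each handle interleave along $\partial\mathcal{P}$, while sides from different handles occupy disjoint arcs of $\partial\mathcal{P}$; by a standard chord-diagram argument, the two strips for $(a_i,b_i)$ must meet transversely in exactly one region in the plane, whereas strips from different handles can be drawn disjointly. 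With the conventions of Section \ref{subsec: graph drawing and good orientations}, the arcs inside a single strip are drawn mutually non-crossing, so the only contributing crossings sit at the overlap of the $(a_i,b_i)$ strips, where each of the $n^a_i(e_1)$ arcs of $e_1$ in strip $a_i$ meets each of the $n^b_i(e_2)$ arcs of $e_2$ in strip $b_i$ exactly once, and symmetrically with $e_1,e_2$ interchanged. Summing these contributions over $i$ reproduces exactly the expression for $[e_1]\cdot[e_2]$ above.

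The step that requires the most care is the geometric claim that, in the drawing of Section \ref{subsec: graph drawing and good orientations}, the arcs within a single strip can be arranged without mutual crossings and that the two strips of a single handle meet transversely in exactly one region; both facts rest on the combinatorics of the interleaving pattern in the polygon word and the explicit identification of opposite sides, and constitute the principal verification in the argument.
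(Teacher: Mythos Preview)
Your argument is correct but takes a genuinely different route from the paper's. The paper closes each $e_i$ to a loop $e_i'$ by an arbitrary arc inside $\mathcal{P}$, observes that the planar crossings of $e_1'$ and $e_2'$ split as (crossings inside $\mathcal{P}$) $+$ (crossings outside $\mathcal{P}$), identifies the first summand with $[e_1]\cdot[e_2]$ (since on $\Sigma$ the loops are embedded outside $\mathcal{P}$, so all their crossings on $\Sigma$ lie inside $\mathcal{P}$) and the second with $e_1\cdot e_2$, and finishes by noting that any two closed curves in the plane intersect evenly. This is a two-line topological argument, completely indifferent to how the strips are actually drawn.

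Your approach instead computes both sides explicitly in the symplectic basis and matches coefficients. This is more concrete and makes the identity very explicit, at the cost of having to analyse the strip geometry. One caution: the paper's description of the drawing does not literally promise that arcs within a single strip are disjoint, nor that strips from different handles avoid each other; it only says the extended parts meet transversely. Your argument still goes through because only the mod-$2$ counts matter: two arcs in the same strip have non-interleaving endpoints on the strip boundary and hence cross evenly; arcs in strips from different handles have non-interleaving endpoints on $\partial\mathcal{P}$ and hence (after closing up by chords inside $\mathcal{P}$) cross evenly; an $a_i$-arc and a $b_i$-arc have interleaving endpoints on $\partial\mathcal{P}$ and hence cross oddly. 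Recasting your geometric claims in this mod-$2$ form would remove the reliance on a specific ``nice'' drawing and make the argument as robust as the paper's, while retaining the explicit bookkeeping that your version provides.
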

\begin{proof}
	For $i=1,2$ let $e_i'$ be any loop obtained from $e_i$ by connecting their two endpoints in an arbitrary way. By the fixed drawing of $G$, it is obvious that $$[e_1]\cdot[e_2]+e_1\cdot e_2=e_1'\cdot e_2'.$$ The last quantity is always 0 modulo 2 because any two loops in the plane, if cross each other transversely, will cross an even number of times.
\end{proof}
Consequently, we get the following result, which is due to Loebl-Masbaum  \cite[Proposition 2.6]{Loeb11}.
\begin{proposition} \label{pro: outside intersections}
	For every $M\subset E(G)$, the number $t(M)$ of its self-intersections in the plane has the same parity as $q_0([M])$, where $[M]\in H_1(\Sigma, \mathcal{P};\mathbb{Z}_2)$ is the relative homology class of $M$.\end{proposition}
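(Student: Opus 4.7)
The plan is induction on $|M|$, after reducing to the case where every edge of $M$ is an outside edge. An inside edge is entirely contained in $\mathcal{P}$, hence is zero in the relative chain complex $C_1(\Sigma,\mathcal{P};\Z_2)$; inside edges also have no crossings with any other edge in our drawing. Deleting them from $M$ therefore changes neither $[M]\in H_1(\Sigma,\mathcal{P};\Z_2)$ nor the self-intersection count $t(M)$, so we may assume $M$ consists of outside edges only.

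The base case $|M|\leq 1$ is immediate: $t(M)=0$, while $[M]$ is either $0$ or a single element of the basis $\mathcal{B}$ (since an outside edge crosses exactly one pair of identified sides of $\mathcal{P}$, corresponding to one generator $[a_i]$ or $[b_j]$ under the identification $H_1(\Sigma;\Z_2)\cong H_1(\Sigma,\mathcal{P};\Z_2)$), and $q_0$ vanishes on $\mathcal{B}$ by definition. For the inductive step, pick an outside edge $e\in M$ and set $M'=M\setminus\{e\}$, so that $[M]=[M']+[e]$. On the topological side, the defining relation of a quadratic form yields
$$q_0([M])=q_0([M'])+q_0([e])+[M']\cdot[e]\pmod 2.$$
On the geometric side, partitioning crossings by whether they involve $e$ gives the exact equality
$$t(M)=t(M')+\#\{\text{crossings of }e\text{ with edges of }M'\}.$$
By Lemma~\ref{lem: intersection}, extended bilinearly over $M'$, the second summand is congruent to $[M']\cdot[e]\pmod 2$. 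Combining this with the inductive hypothesis $t(M')\equiv q_0([M'])$ and with $q_0([e])=0$ from the base case delivers $t(M)\equiv q_0([M])\pmod 2$, as desired.

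I do not expect any real obstacle. The only point requiring care is the reduction to outside-only $M$, which relies on the explicit drawing of Section~\ref{subsec: graph drawing and good orientations}, and the bilinear extension of Lemma~\ref{lem: intersection}, which is immediate since both sides of that lemma are bilinear in $(M_1,M_2)\mapsto [M_1]\cdot[M_2]$ and $(M_1,M_2)\mapsto M_1\cdot M_2$.
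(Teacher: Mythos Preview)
Your overall strategy—reduce to outside edges, then peel off one edge at a time using the quadratic-form relation and Lemma~\ref{lem: intersection}—is the same as the paper's (the paper just writes the full expansion $q_0(\sum_k[e_k])=\sum_k q_0([e_k])+\sum_{k<l}[e_k]\cdot[e_l]$ at once rather than by induction). However, your base case is wrong, and the error propagates into the inductive step.

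A single outside edge $e$ need not cross exactly one pair of identified sides of $\mathcal{P}$. In the drawing of Section~\ref{subsec: graph drawing and good orientations} an edge of $G$ may intersect several sides of the polygon, so its relative class has the general form $[e]=\sum_i\alpha_i[a_i]+\sum_i\beta_i[b_i]$ with possibly many nonzero coefficients. In particular $q_0([e])$ need not vanish: by the quadratic-form relation and $[a_i]\cdot[b_j]=\delta_{ij}$ one finds $q_0([e])\equiv\sum_i\alpha_i\beta_i\pmod 2$. Correspondingly, the portions of $e$ running through the $a_i$-strip and the $b_i$-strip can cross, so a single outside edge can have self-intersections; the paper records this as $t(e)\equiv\sum_i\alpha_i\beta_i\pmod 2$. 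Thus the correct base case is $t(\{e\})\equiv q_0([e])$, not $0=0$. Your inductive step must then read
\[
t(M)=t(M')+t(\{e\})+\#\{\text{crossings of }e\text{ with }M'\},
\]
and combining this with $q_0([M])=q_0([M'])+q_0([e])+[M']\cdot[e]$, the inductive hypothesis, Lemma~\ref{lem: intersection}, and the corrected base case gives the result. So the fix is local, but as written the argument has a genuine gap at $|M|=1$.
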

\begin{proof}
	Assume that $M=\{e_k\}_{1\le k\le m}$, and for each $k$ write $[e_k]=\sum_{i=1}^g \alpha_i^k [a_i]+\sum_{i=1}^g \beta_i^k [b_i]$ with $\alpha_i^k,\beta_i^k\in \mathbb{Z}_2$. By the fixed drawing of $G$ and using the fact that $[a_i]\cdot[b_j]=\delta_{ij}$ and $[a_i]\cdot[a_j]=[b_i]\cdot[b_j]=0$, one sees that the number $t(e_k)$ of self-intersections of $e_k$ is equal to $\sum_{i=1}^g \alpha_i^k\beta_i^k$. Using this fact and the definition of quadratic forms, as well as Lemma \ref{lem: intersection}, we have the following equalities modulo 2
	\begin{align}
	q_0([M])&=q_0(\sum_{k}[e_k])=\sum_{k} q_0([e_k])+\sum_{k<l} [e_k]\cdot[e_l] \nonumber \\
	&= \sum_{k} q_0(\sum_{i=1}^g \alpha_i^k [a_i]+\sum_{i=1}^g \beta_i^k [b_i])+\sum_{k<l} [e_k]\cdot[e_l] \nonumber \\
	&=\sum_k \big(q_0(\sum_{i=1}^g \alpha_i^k [a_i])+q_0(\sum_{i=1}^g \beta_i^k [b_i])+(\sum_{i=1}^g \alpha_i^k[a_i])\cdot (\sum_{i=1}^g\beta_i^k[b_i])\big)+\sum_{k<l} [e_k]\cdot[e_l] \nonumber \\
	&=\sum_{k} (\sum_{i=1}^g \alpha_i^k[a_i])\cdot (\sum_{i=1}^g\beta_i^k[b_i])+\sum_{k<l} [e_k]\cdot[e_l] =\sum_{k}\sum_{i=1}^g\alpha_i^k\beta_i^k+\sum_{k<l} [e_k]\cdot[e_l] \nonumber \\
	&=\sum_{k} t(e_k)+\sum_{k<l} e_k\cdot e_l =t(M). \nonumber
	\end{align}Note that the fifth equality comes from the fact that $q_0([a_i])=q_0([b_i])=0$. The proof is concluded.
\end{proof}
We will also need the two following properties of the Arf invariant which can be found in \cite[Lemma 2.10]{Loeb11} and in \cite[Lemma 1]{CimRes07}. Alternatively, the reader can wait until Lemma \ref{lem: Brown} from which we can get back these two properties.
\begin{lemma}\label{lem: Arf} Let $q$ be a quadratic form on $(H,\cdot)$, then we have:
	\begin{enumerate}
		\item[(i)] The equality $\frac{1}{\sqrt{|H|}}\sum\limits_{q\in \mathcal{Q}(H,\cdot)}(-1)^{\text{Arf}(q)+q(x)}=1$ holds for every $x\in H$, where the sum is taken over all the set of quadratic forms.
		\item[(ii)] If $q'$ is also a quadratic form, then $\text{Arf}(q)+\text{Arf}(q')=q(\Delta)=q'(\Delta)$, where $\Delta\in H$ satisfies $(q+q')(x)=\Delta \cdot x$ for every $x\in H$.\qed
	\end{enumerate}
\end{lemma}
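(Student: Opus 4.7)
The plan is to establish part (ii) first, since part (i) then follows from it together with the affine structure of $\mathcal{Q}(H,\cdot)$.

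For (ii), the starting observation is that the sum $q+q'$ of two quadratic forms on $(H,\cdot)$ is a linear form on $H$, because the bilinear terms $x\cdot y$ coming from the two quadratic identities cancel in characteristic $2$. Non-degeneracy of $(\cdot)$ then supplies a unique $\Delta\in H$ with $(q+q')(x)=\Delta\cdot x$ for all $x$. I would then reindex the sum defining $\text{Arf}(q')$ via $x\mapsto x+\Delta$: expanding $q'(x+\Delta)=q(x+\Delta)+\Delta\cdot(x+\Delta)$ using the quadratic identity together with the fact that $\Delta\cdot\Delta=0$ for an alternating form, all the mixed terms cancel and one is left with $q'(x+\Delta)=q(x)+q(\Delta)$. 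Since the substitution is a bijection of $H$, this yields
\[\sum_{x\in H}(-1)^{q'(x)}=(-1)^{q(\Delta)}\sum_{x\in H}(-1)^{q(x)},\]
from which $\text{Arf}(q)+\text{Arf}(q')=q(\Delta)$ follows at once from the definition of Arf. Finally, $q'(\Delta)-q(\Delta)=(q+q')(\Delta)=\Delta\cdot\Delta=0$, giving the second equality $q(\Delta)=q'(\Delta)$.

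For (i), I would fix an arbitrary $q_0\in\mathcal{Q}(H,\cdot)$. Non-degeneracy, combined with the remark above that $q+q_0$ is always linear, shows that the map $\Delta\mapsto q_\Delta:=q_0+\ell_\Delta$ (with $\ell_\Delta(y):=\Delta\cdot y$) is a bijection between $H$ and $\mathcal{Q}(H,\cdot)$. Part (ii) then rewrites $\text{Arf}(q_\Delta)=\text{Arf}(q_0)+q_0(\Delta)$, while $q_\Delta(x)=q_0(x)+\Delta\cdot x$ by construction. Substituting into the left-hand side of (i) and pulling out the $\Delta$-independent factor $(-1)^{\text{Arf}(q_0)+q_0(x)}$, the claim reduces to showing
\[\sum_{\Delta\in H}(-1)^{q_0(\Delta)+\Delta\cdot x}=\sqrt{|H|}\,(-1)^{\text{Arf}(q_0)+q_0(x)}.\]
But the inner sum is exactly $\sqrt{|H|}\,(-1)^{\text{Arf}(\widetilde q)}$ for the new quadratic form $\widetilde q:=q_0+\ell_x$, and a second application of (ii), now comparing $q_0$ with $\widetilde q$ (whose difference corresponds to $\Delta = x$), gives $\text{Arf}(\widetilde q)=\text{Arf}(q_0)+q_0(x)$, finishing the computation.

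I do not foresee any serious obstacle: the argument is pure linear algebra over $\mathbb{Z}_2$. The one spot that demands a little care is the cancellation when expanding $q'(x+\Delta)$, where one must use both $\Delta\cdot\Delta=0$ and $x\cdot\Delta+\Delta\cdot x=0$ for the alternating form; everything else is bookkeeping, and (i) and (ii) end up being two facets of the same translation-invariance of the Arf sum.
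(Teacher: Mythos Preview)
Your argument is correct and coincides with the paper's. The paper actually defers the proof of this lemma to the more general Lemma~\ref{lem: Brown} on quadratic enhancements and the Brown invariant, whose proof is exactly the computation you wrote, specialized from $\mathbb{Z}_4$ to $\mathbb{Z}_2$: one fixes a base form $q_*$, parametrizes all forms as $q_z=q_*+\ell_z$, uses the substitution $x\mapsto x+z$ to obtain $\text{Arf}(q_z)=\text{Arf}(q_*)+q_*(z)$, and then collapses the sum. The only cosmetic difference is the order: the paper derives the key identity (your part~(ii)) inside the proof of part~(i) and then states~(ii) separately, whereas you isolate~(ii) first and invoke it twice in~(i).
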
 

Next let us move on by showing how to transform the partition function of the Ising model on $G$ (twisted with some signs) to that of the dimer model on $G^T$. To do so, we need some notions and terminology. Recall that $\mathcal{D}(G^T)$ denotes the set of dimer configurations on $G^T$. We also write $\mathcal{D}_{2n}$ for $\mathcal{D}({K_{2n})}$. For $D\in\mathcal{D}(G^T)$, we define $\trong(D)$ (resp. $\ngoai(D)$) its number of self-intersections lying inside (resp. outside) the polygon $\mathcal{P}$. Note that $\trong(D)$ counts the number of crossings created by short edges of $D$, while $\ngoai(D)$ counts the number of those created by long edges. It is trivial that the total number $t(D)$ of self-intersections of $D$ satisfies $t(D)=\trong(D)+\ngoai(D)$. Representing $K_{d(v)}$ inside a closed disc with its vertices on the boundary of the disc, we see that, for every $D\in \mathcal{D}(G^T)$, the parity of $\trong(D)$ does not depend on the way $K_{d(v)}$ is drawn. Furthermore, if $d(v)$ is even, we have the following result which is due to Chelkak-Cimasoni-Kassel \cite[Lemma 2.1]{Chel15}.
\begin{lemma} \label{lem: dimer} For any integer $n\geq 1$, $\sum_{D\in \mathcal{D}_{2n}}(-1)^{\trong(D)}=1$.
\end{lemma}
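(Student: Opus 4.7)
My plan is to interpret the alternating sum as the Pfaffian of an explicit skew-symmetric matrix, and then to evaluate that Pfaffian by induction on $n$. Since the parity of $\trong(D)$ does not depend on the specific drawing of $K_{2n}$ inside the disc (as noted just before the statement), I may choose any convenient representation: place the $2n$ vertices at positions $1, 2, \ldots, 2n$ on the boundary in cyclic order and realise every edge of $D$ by a straight chord.

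With this convention, I would introduce the $2n \times 2n$ skew-symmetric matrix $J$ defined by $J_{ij} = \sign(j-i)$. By the combinatorial definition of the Pfaffian,
$$\Pf(J) = \sum_{D \in \D_{2n}} \e(D) \prod_{\{i,j\} \in D,\, i<j} J_{ij} = \sum_{D \in \D_{2n}} \e(D),$$
where $\e(D) \in \{\pm 1\}$ is the sign of the permutation mapping $(1, 2, \ldots, 2n)$ to $(i_1, j_1, i_2, j_2, \ldots, i_n, j_n)$, for the unique enumeration of $D$ with $i_k < j_k$ and $i_1 < i_2 < \cdots < i_n$. The key combinatorial identity $\e(D) = (-1)^{\trong(D)}$ would then follow from classifying pairs of chords in $D$: two crossing chords contribute exactly one inversion to this permutation, two nested chords contribute two, and two disjoint chords contribute zero. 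Consequently the parity of the inversion count equals $\trong(D)$ modulo $2$.

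It remains to show that $\Pf(J) = 1$, which I would prove by induction on $n$ via the Pfaffian cofactor expansion along the first row:
$$\Pf(J_n) = \sum_{j=2}^{2n} (-1)^j J_{1j}\, \Pf(J^{(1,j)}),$$
where $J^{(1,j)}$ denotes the submatrix obtained by deleting rows and columns $1$ and $j$. Because $J_{1j} = 1$ for every $j \geq 2$ and the minor $J^{(1,j)}$ has exactly the same sign pattern as $J_{n-1}$ (its entries depend only on the relative order of the remaining indices), this gives
$$\Pf(J_n) = \Pf(J_{n-1}) \sum_{j=2}^{2n} (-1)^j = \Pf(J_{n-1}),$$
since the alternating sum of $2n-1$ terms starting with $+1$ equals $1$. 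Combined with the trivial base case $\Pf(J_1) = 1$, induction yields $\Pf(J_n) = 1$, which is the desired identity.

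The only step that will require some care is the identification $\e(D) = (-1)^{\trong(D)}$, but this is a classical fact about chord diagrams that is settled by the inversion count above. The Pfaffian recursion itself is a direct algebraic manipulation, so I do not anticipate any serious obstacle.
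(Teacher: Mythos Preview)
Your proposal is correct, but it takes a genuinely different route from the paper's proof. The paper argues by a sign-reversing involution: fix two adjacent vertices of $K_{2n}$ and let $\sigma$ swap them; configurations not matching these two vertices cancel in pairs because $\trong(D)$ and $\trong(\sigma(D))$ have opposite parity, while the fixed-point set is naturally identified with $\mathcal{D}_{2(n-1)}$, yielding the induction step directly at the level of the signed sum. Your argument instead recognises the alternating sum as $\Pf(J)$ for the explicit matrix $J_{ij}=\sign(j-i)$, proves the identification $\epsilon(D)=(-1)^{\trong(D)}$ via an inversion count on chord diagrams, and then evaluates $\Pf(J)$ by the algebraic cofactor expansion along the first row. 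Both proofs are inductive on $n$, but the paper's is a one-line combinatorial cancellation requiring no Pfaffian machinery, whereas yours front-loads the Pfaffian interpretation---which is natural in the context of the paper and makes the link to the adjacency-matrix Pfaffians explicit---at the cost of the auxiliary identification step. The trade-off is that your approach exhibits the structural reason behind the identity (it is literally a Pfaffian equal to $1$), while the paper's involution is shorter and entirely self-contained.
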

\begin{proof}
	Fix two vertices of $K_{2n}$ and consider the map $\sigma: \mathcal{D}_{2n}\rightarrow\mathcal{D}_{2n}$ given by exchanging them. The fixed point set $\text{Fix}(\sigma)$ consists of all the dimer configurations of $K_{2n}$ matching these two vertices. Since $t(D)=t(\sigma(D))+1$ for $D\notin\text{Fix}(\Sigma)$ and $\sigma$ is bijective, we get
	\begin{eqnarray*}
		\sum_{D\in \mathcal{D}_{2n}}(-1)^{\trong(D)} &=& \sum_{D\in \text{Fix}(\sigma)}(-1)^{\trong(D)}+\sum_{D\notin \text{Fix}(\sigma)}(-1)^{\trong(D)}\\
		&=&  \sum_{D'\in \mathcal{D}_{2(n-1)}}(-1)^{\trong(D')}+\dfrac{1}{2}\sum_{D\notin \text{Fix}(\sigma)}\left( (-1)^{\trong(D)}+(-1)^{\trong(\sigma(D))}\right )\\
		&=&  \sum_{D'\in \mathcal{D}_{2(n-1)}}(-1)^{\trong(D')}.
	\end{eqnarray*}
	The lemma now follows by induction on $n\geq 1$.
\end{proof}
Now let us denote by $Z_1(G;\mathbb{Z}_2)$ the set of all 1-cycles modulo 2 in $G$, that is, $$Z_1(G;\mathbb{Z}_2)=\{ P=\sum_e e\in C_1(G;\mathbb{Z}_2):\partial_1 P=0\in C_0(G;\mathbb{Z}_2)\}$$ where the sum is over some finite set of different edges of $G$. In other words, $Z_1(G;\mathbb{Z}_2)$ is exactly the set of even subgraphs of $G$. Setting $x(P):=\prod_{e\in P}x_e$ for each $P\in Z_1(G;\mathbb{Z}_2)$, we have the following result.
\begin{lemma} \label{lem: Z}
	Setting $Z^q(G,x):= \sum\limits_{\alpha\in H_1(\Sigma;\mathbb{Z}_2)}(-1)^{q(\alpha)}\sum\limits_{[P]=\alpha}x(P)$ for each quadratic form $q\in \mathcal{Q}(\Sigma)$, we have
	\begin{equation}
	Z^q(G,x)= \sum_{D\in \mathcal{D}(G^T)}(-1)^{q([D\Delta D_0])+\trong(D)}x^{T}(D).
	\end{equation}
	In this formula, $D_0$ is the standard dimer configuration of $G^T$, i.e., the one consisting of all the long edges.
\end{lemma}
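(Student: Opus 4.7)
The plan is to reorganise the right-hand side by grouping dimer configurations of $G^T$ according to the even subgraph of $G$ they induce, then apply Lemma \ref{lem: dimer} fibrewise. Concretely, define $\phi\colon \mathcal{D}(G^T)\to Z_1(G;\mathbb{Z}_2)$ by sending $D$ to the subgraph $P_D\subset G$ whose edges correspond to the long edges of $G^T$ absent from $D$ (equivalently, to the long edges in $D\Delta D_0$). At each vertex $v$ of $G$, those vertices of $K_{d(v)}$ not covered by their long edge must be paired up by short edges of $D$, so the number of missing long edges at $v$, which equals $\deg_{P_D}(v)$, is even. Hence $P_D$ is an even subgraph, and the fibre $\phi^{-1}(P)$ is canonically identified with $\prod_v \mathcal{D}_{|S_v(P)|}$, where $S_v(P):=\{v_i:e_i\in P\}$.

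For a fixed $P\in Z_1(G;\mathbb{Z}_2)$ and any $D\in\phi^{-1}(P)$, two facts need to be checked. First, $x^T(D)=x(P)$: the short edges of $D$ at $v$ form a perfect matching of $S_v(P)$, and summing $\sqrt{x_{e_i}x_{e_j}}$ over this matching, each $v_i\in S_v(P)$ contributes $\sqrt{x_{e_i}}$ exactly once; taking the product over all $v$ doubles each $\sqrt{x_e}$ (once per endpoint of $e$), yielding $\prod_{e\in P}x_e=x(P)$. Second, $[D\Delta D_0]=[P]$ in $H_1(\Sigma;\mathbb{Z}_2)$: inside each contractible disc $D_v$ containing $K_{d(v)}$, the short-edge portion of $D\Delta D_0$ has the same boundary $\sum_{v_i\in S_v(P)} v_i$ as the ``radial'' $1$-chain consisting of half-edges from $v$ to each $v_i\in S_v(P)$, so the two chains differ by a boundary in $D_v$; after this substitution, the global $1$-chain assembled with the long edges of $D\Delta D_0$ is exactly $P$.

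Combining these two facts, the right-hand side rewrites as
\begin{equation*}
\sum_{P\in Z_1(G;\mathbb{Z}_2)} (-1)^{q([P])} x(P) \sum_{D\in\phi^{-1}(P)} (-1)^{\trong(D)}.
\end{equation*}
Short edges at different vertices of $G$ lie in disjoint discs and cannot cross one another, so $\trong(D)=\sum_v t(M_v)$ where $M_v$ denotes the short-edge matching of $D$ at $v$, and the inner sum factorises over $v$. Each factor $\sum_{M_v\in \mathcal{D}_{|S_v(P)|}} (-1)^{t(M_v)}$ equals $1$ by Lemma \ref{lem: dimer}, so the inner sum is $1$ and one recovers $\sum_P (-1)^{q([P])} x(P)=Z^q(G,x)$. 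The delicate point to scrutinise is the applicability of Lemma \ref{lem: dimer} to matchings on the subset $S_v(P)\subsetneq\{v_1,\ldots,v_{d(v)}\}$: one needs that these vertices sit on the boundary of the disc containing $K_{d(v)}$ in their induced cyclic order, so that matchings of $S_v(P)$ inside $K_{d(v)}$ correspond bijectively and parity-preservingly to elements of $\mathcal{D}_{|S_v(P)|}$ drawn inside a disc; everything else is bookkeeping.
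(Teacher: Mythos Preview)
Your proof is correct and follows essentially the same approach as the paper: define the fibration $\phi\colon \mathcal{D}(G^T)\to Z_1(G;\mathbb{Z}_2)$ sending $D$ to the even subgraph $G\setminus D_G$ (your $P_D$), identify the fibres with products of $\mathcal{D}_{2n(v)}$, and collapse them using Lemma~\ref{lem: dimer}. You supply more detail than the paper does on the identities $x^T(D)=x(P)$ and $[D\Delta D_0]=[P]$, and your closing remark about the applicability of Lemma~\ref{lem: dimer} to the subset $S_v(P)$ in its induced cyclic order is a point the paper leaves implicit.
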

Note that this equality was mentioned in \cite[Subsection 4.2]{Chel15} but not with full details, so let us give its proof here for the sake of completeness.
\begin{proof}[Proof of Lemma $\ref{lem: Z}$]
	Recall that $(G^T,x^T)$ is the associated terminal graph of the embedded graph $(G,x)$ in $\Sigma$. Given a dimer configuration $D\in \mathcal{D}(G^T)$, let $D_G$ denote the subgraph of $G$ given by the edges of $G$ corresponding to the long edges of $D$. We claim that $G\setminus D_G$ is an even subgraph of $G$. Indeed, consider a vertex $v$ of $G$, and denote by $m(v)$  the degree of $v$ in $D_G$. In and around the complete graph $K_{d(v)}$, the number of vertices that $D$ matches is exactly $d(v)+m(v)$, which is even since $D$ is a dimer configuration. Since the degree of $v$ in $G\setminus D_G$ is equal to $d(v)-m(v)$, the claim follows. Therefore the assignment $D\mapsto G\setminus D_G$ defines a map $\varphi: \mathcal{D}(G^T)\rightarrow Z_1(G;\mathbb{Z}_2)$. Moreover, we have $x(P)=x(\varphi(D))=x^T(D)$, and $\varphi^{-1}(P)=\prod_{v\in V}\mathcal{D}_{2n(v)}$, where $V$ denotes the set of all vertices of $G$, and $2n(v)$ denotes the degree of $v$ in $P$. From these facts and Lemma \ref{lem: dimer}, we can write
	\begin{eqnarray}
	Z^q(G,x)&=& \sum_{\alpha\in H_1(\Sigma;\mathbb{Z}_2)}(-1)^{q(\alpha)}\sum_{[P]=\alpha}x(P) \nonumber\\
	&=&\sum_{\alpha\in H_1(\Sigma;\mathbb{Z}_2)}(-1)^{q(\alpha)}\sum_{[P]=\alpha} \big( \prod_{v\in V}\sum_{D_v\in \mathcal{D}_{2n(v)}}(-1)^{\trong(D_v)} \big)x(P) \nonumber\\
	&=&\sum_{\alpha\in H_1(\Sigma;\mathbb{Z}_2)}(-1)^{q(\alpha)}\sum_{[P]=\alpha}\sum_{D\in \varphi^{-1}(P)}(-1)^{\trong(D)}x^{T}(D)\nonumber \\
	&=& \sum_{D\in \mathcal{D}(G^T)}(-1)^{q([G\setminus D_G])+\trong(D)}x^{T}(D)\nonumber\\
	&=&\sum_{D\in \mathcal{D}(G^T)}(-1)^{q([D\Delta D_0])+\trong(D)}x^{T}(D).\nonumber
	\end{eqnarray}
	The last equality comes from the fact that $G\setminus D_G$ is homologous to $D\Delta D_0$.
\end{proof}
Let us now discuss good orientations. The proposition below shows that they do exist, and how to construct one of them.
\begin{proposition} \label{pro: Kasteleyn}
	Fixing a drawing of the graph $G$ and its terminal graph $G^T$ as before, there always exists good orientations on $G^T$.
\end{proposition}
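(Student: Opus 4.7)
The plan is to construct a good orientation explicitly in three stages, exploiting the structural simplicity of the terminal graph. Condition (i) is imposed by fiat on the short edges, and the task reduces to showing that the long edges can then be oriented so that (ii) and (iii) hold simultaneously.

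First, I would orient every short edge according to (i). Next, let $H$ be the subgraph of $G^{T}$ obtained by deleting all outside long edges, where each crossing of short edges (which can only occur inside the small disc around a vertex $v$ of $G$) is regarded as a virtual $4$-valent vertex. Then $H$ is planar, and its bounded faces are precisely the inside faces of $G^{T}$. By the classical Kasteleyn theorem for planar graphs, $H$ admits an orientation making every bounded face Kasteleyn-odd, and two such orientations differ by flipping a cycle of $H$. To enforce compatibility with (i), I would start from any such Kasteleyn orientation and flip a cycle of $H$ chosen so as to correct exactly the short edges on which the starting orientation disagrees with (i). The flexibility needed for this correction comes from the rich cycle space of each complete graph $K_{d(v)}$, together with the canonical labelling of its vertices.

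Finally, I would reinsert the outside long edges one at a time. For each outside long edge $e$, the outside face $f_{e}$ is by definition bounded by $e$ together with edges lying in $H$, which have already been oriented. The parity of $n^{K}(\partial f_{e})$ therefore depends only on the orientation of $e$, and I can choose it freely to make the count odd. Crucially, the construction of $f_{e}$ ignores every other outside edge, so the orientations assigned to distinct outside edges do not interact, and condition (iii) can be satisfied for all outside edges simultaneously.

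The main obstacle I anticipate is the compatibility step within $H$: verifying that every pattern of short edges determined by (i) can indeed be matched by the short-edge restriction of some cycle of $H$. I expect this to follow from a direct analysis within each $K_{d(v)}$ block, using the abundance of short 3- and 4-cycles in a complete graph together with the linear ordering of its vertices, so that local adjustments can be performed without disturbing the Kasteleyn-odd condition established in the second stage.
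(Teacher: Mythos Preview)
Your final step — orienting the outside long edges last, using the fact that the conditions (iii) for different outside edges are independent — matches the paper exactly. The divergence is in how you handle the inside part.

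The paper proceeds in the opposite order to you. It first fixes all short edges according to condition~(i), then orients the inside long edges arbitrarily, and finally corrects each ``bad'' inside face by drawing a path from its interior to the exterior of $\mathcal{P}$ that crosses only long edges, flipping $K$ on every inside long edge crossed. Because the short edges are never touched after the first step, no compatibility question ever arises: the only degrees of freedom used in the correction are the orientations of inside long edges.

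Your route --- first invoking planar Kasteleyn on $H$, then correcting the short edges --- runs into two genuine difficulties. First, once you replace each crossing by a virtual $4$-valent vertex, the classical Kasteleyn theorem yields an orientation on the \emph{half-edges} of $H$, with no guarantee that the two halves of a single short edge of $G^{T}$ are oriented consistently; pushing this back to an orientation of $G^{T}$ is itself a nontrivial constraint that the black-box theorem does not provide. Second, your correction mechanism is misstated: two Kasteleyn orientations of a planar graph differ by a \emph{cocycle} (the edge boundary $\delta(S)$ of some vertex set $S$), not by a cycle; flipping the edge set of a cycle of $H$ generally destroys the odd-face condition on the faces bordering that cycle. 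So the device you propose for reconciling the Kasteleyn orientation with condition~(i) is not available in the form you state, and you give no argument that the set of ``wrong-way'' short edges always constitutes a cocycle of $H$. The obstacle you flag at the end is therefore real and unresolved, whereas the paper's ordering of the construction makes it disappear entirely.
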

\begin{proof}
	Firstly let us label vertices in each complete graph increasingly with respect to the clockwise orientation, and orient short edges by condition $(i)$ of Definition $\ref{def: Kasteleyn}$. Secondly let us pick an arbitrary orientation $K$ on the inside long edges of $G^T$. For each inside face $f$ of $G^T$ with $n^K(\partial f)$ even, we draw a path from the interior of $f$ to the outside of $\mathcal{P}$ so that this path only crosses long edges transversely, and invert $K$ on each inside long edge crossed by this path. Repeating this procedure for each face $f$ with $n^K(\partial f)$ even, together with the orientation on short edges, we get an orientation satisfying both conditions $(i)$ and $(ii)$ of Definition $\ref{def: Kasteleyn}$. Finally, the orientations on the outside long edges are determined uniquely by condition $(iii)$ of Definition $\ref{def: Kasteleyn}$. 
\end{proof}
\begin{remark}\label{rem: good orientations}
	The definition of good orientations can be extended to subgraphs of $G^T$ in the following sense. Suppose that $G'$ is a subgraph of $G^T$ obtained by removing some interior short edges of complete graphs so that $G'$ has no self-intersections inside $\mathcal{P}$. A good orientation $K$ on $G^T$ restricts to $K'$ on $G'$, which is still a good orientation by Definition $\ref{def: Kasteleyn}$. Indeed, the conditions $(i)$ and $(iii)$ hold for $K'$ immediately. We only need to check that the condition $(ii)$ also holds for newly-created inside faces of $G'$. Let $f'$ be such a face. By the way we label vertices in each complete graph and by condition $(i)$, $n^{K'}(\partial f')$ is always equal to 1, which is odd. We will use this fact later in our proof.
\end{remark}
\subsection{Proof of Theorems $\ref{theo: first}$ and $\ref{theo: main}$}\label{subsec: proof}
The aim of this subsection is to prove the two theorems stated in Subsection \ref{subsec: state Pf for Ising}. In order to do that, we first recall several facts about Pfaffians of adjacency matrices.

By definition the \emph{Pfaffian}\index{Pfaffian} of a skew-symmetric matrix $A=(a_{ij})_{1\leq i,j\leq 2n}$ is given by $$\Pf(A)=\frac{1}{2^n n!}\sum_{\sigma\in S_{2n}} \text{sign} (\sigma)a_{\sigma(1)\sigma(2)}\cdots a_{\sigma(2n-1)\sigma(2n)}.$$
As $A$ is skew-symmetric, each term in the right-hand side corresponding to $\sigma$ only depends on the matching of $\{1,\dots ,2n\}$ into $n$ unordered pairs $\{\sigma(1),\sigma(2)\},\dots ,\{\sigma(2n-1),\sigma(2n)\}$. Since there are exactly $2^n n!$ permutations representing a same matching, one can write $$\Pf(A)=\sum_{[\sigma]}\text{sign}(\sigma)a_{\sigma(1)\sigma(2)}\cdots a_{\sigma(2n-1)\sigma(2n)},$$ where the sum is over the set of matchings of $\{1,\dots ,2n\}$.

When $A$ is $A^k(G,x)$, the adjacency matrix of an edge-weighted graph $(G,x)$ with respect to an orientation $K$, it is clear from the previous equation that a matching of the vertices of $G$ contributes to the Pfaffian of $A^k(G,x)$ if and only if it is realised by a dimer configuration of $G$. Therefore we can write \begin{equation} \label{eq: Pfaffian} \Pf(A^K(G))=\sum\limits
_{D\in \mathcal{D}(G)} \epsilon^K(D)x(D),
\end{equation}
where the sum is taken over the set $\mathcal{D}(G)$ of dimer configurations of $G$, $x(D):=\prod\limits_{e\in D}x(e)$ and the sign $\epsilon^K(D)=\pm 1$ can be computed as follows. If the dimer configuration $D$ is given by edges $e_l$ matching vertices $i_l$ and $j_l$ for $1\leq l\leq n$, let $\sigma$ denote the permutation mapping the set $\{1,2,\dots,2n-1, 2n\}$ to $\{i_1,j_1,\dots,i_n,j_n\}$. The sign then is equal to 
\begin{equation}\label{eq: sign}
\epsilon^K(D)=\text{sign}(\sigma)\prod_{l=1}^n\epsilon_{i_lj_l}^K(e_l),
\end{equation} recalling that $\epsilon^K_{i_lj_l}(e_l)= 
1$ if $e_l$ is oriented by $K$ from $i_l$ to $j_l$, and $\epsilon^K_{i_lj_l}(e_l)= -1$ otherwise.
\smallskip

The proofs of Theorems \ref{theo: first} and \ref{theo: main} rely on the following key result.
\begin{proposition} \label{pro: main1}
	For every dimer configuration $D$ of $G^T$, we have
	$$\epsilon^{K_0}(D)=\epsilon_0(-1)^{t(D)}$$
	where $\epsilon_0=\pm 1$ is a constant.
\end{proposition}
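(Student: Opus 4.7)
The plan is to fix the standard dimer configuration $D_0$ as a reference, set $\epsilon_0 := \epsilon^{K_0}(D_0)(-1)^{t(D_0)}$, and prove that for every $D\in\mathcal{D}(G^T)$ the product identity
\[
\epsilon^{K_0}(D)\,\epsilon^{K_0}(D_0) = (-1)^{t(D)+t(D_0)}
\]
holds. Plugging this back in yields $\epsilon^{K_0}(D)=\epsilon_0(-1)^{t(D)}$, which is exactly the claim.

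First I would apply the classical cycle decomposition of Pfaffian signs. The symmetric difference $D\Delta D_0$ decomposes as a disjoint union of simple cycles $C_1,\dots,C_k$ in $G^T$, each of even length $2\ell_j$ and alternating between the edges of $D$ and the edges of $D_0$. Because $D_0$ consists of \emph{all} long edges, every $C_j$ strictly alternates between a short edge (lying in $D$) and a long edge (lying in $D_0$); in particular the $D$-edges of each $C_j$ all sit inside complete graphs of the form $K_{d(v)}$. The standard cycle-sign lemma for adjacency Pfaffians then gives
\[
\epsilon^{K_0}(D)\,\epsilon^{K_0}(D_0) \;=\; \prod_{j=1}^k (-1)^{\ell_j+1+n^{K_0}(C_j)},
\]
where $n^{K_0}(C_j)$ counts the edges of $C_j$ on which $K_0$ disagrees with a fixed cyclic orientation of $C_j$.

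Next I would show that the crossing count on the right-hand side also factorises per cycle. Two elementary but crucial observations reduce the bookkeeping: (a) short edges are confined to the small discs around the vertices of $G$, so a short edge can never cross a long edge; and (b) any crossing lying inside $D\cap D_0$ contributes simultaneously to $t(D)$ and $t(D_0)$ and hence cancels modulo $2$. Combined with the fact that short edges belonging to different copies $K_{d(v)}$ lie in disjoint discs, this proves that $t(D)+t(D_0)\equiv\sum_{j}t_{C_j}\pmod 2$, where $t_{C_j}$ denotes the number of self-intersections of $C_j$ (as a curve in the plane) together with the crossings between $C_j\cap D_0$ and $C_{j'}\cap D_0$ for $j\neq j'$; a further parity argument (using that these inter-cycle contributions come in pairs) shows they can be absorbed into the per-cycle quantities.

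Finally, for a single cycle $C$ I would exploit the good orientation property to relate $n^{K_0}(C)+\ell+1$ to $t_C$ modulo $2$. Walking once around $C$ yields a closed curve in the plane, possibly self-intersecting and possibly crossing the sides of $\mathcal{P}$; after resolving its self-crossings it bounds a union of inside and outside faces of the subgraph spanned by $C\cup D_0$ (here I would invoke Remark~\ref{rem: good orientations} to keep the good orientation property after deleting the short edges of the other complete graphs). Summing the parities $n^{K_0}(\partial f)\equiv 1\pmod 2$ from conditions (ii)–(iii) of Definition~\ref{def: Kasteleyn} over these faces, and tracking the edges of $C$ counted twice, converts $n^{K_0}(C)$ into a count whose parity equals $\ell+1+t_C$. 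The main obstacle is precisely this last face-counting step: one must verify that the outside-face contributions match the self-intersections of $C$ lying outside $\mathcal{P}$, which requires the particular choice of drawing from Subsection~\ref{subsec: graph drawing and good orientations} and a careful Jordan/Euler-type argument at each crossing of $C$ with the boundary of $\mathcal{P}$.
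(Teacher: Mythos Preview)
Your strategy—fix $D_0$, set $\epsilon_0=\epsilon^{K_0}(D_0)(-1)^{t(D_0)}$, decompose $D\Delta D_0$ into alternating cycles, and analyse each cycle via the face conditions (ii)–(iii)—is the paper's. But two steps in your outline do not go through as written.

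First, the crossing bookkeeping is incomplete. Your reduction $t(D)+t(D_0)\equiv\sum_j t_{C_j}$ ignores a whole family of crossings: those between a long edge of $D\cap D_0$ (which lies in \emph{no} cycle) and a long edge of $D_0\setminus D$ (which lies in some $C_j$). Such a crossing contributes to $t(D_0)$ but not to $t(D)$, hence survives in $t(D)+t(D_0)$, yet it is invisible to every $t_{C_j}$. In the paper these are precisely the term $D\cdot(D_0\setminus D)$ of Equation~(\ref{eq: final}), and they are recovered not by a direct crossing count but via the vertex-counting Claim: one shows $n^K(C)+1\equiv V_{\text{int}}\pmod 2$ for a simple $C$, and then matches each interior vertex to its dimer partner, which forces exactly the missing crossing with $C$.

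Second, you never invoke condition~(i) of Definition~\ref{def: Kasteleyn}, and without it the face argument is not available. The paper's first move is a local uncrossing of short edges (Figure~\ref{fig: transformation}): replacing a crossing pair of short dimers by a non-crossing pair flips $t_{\text{in}}$ by one, and condition~(i) guarantees that $\epsilon^{K_0}$ flips as well. Iterating reduces to $t_{\text{in}}(D)=0$; only then can one pass to a subgraph $G'$ with no inside crossings (Remark~\ref{rem: good orientations}), where the cycles $C_j$ have at worst \emph{outside} self-intersections and the face count using (ii)–(iii) applies (first for simple cycles via the Claim, then for non-simple ones by an explicit smoothing that preserves goodness of the orientation). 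Your plan to ``resolve self-crossings'' at the end does not cover inside crossings of short edges within a single $K_{d(v)}$, since these do not touch conditions (ii)–(iii) at all; condition~(i) is exactly what controls them.

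(A minor point: the cycle-sign identity is $\epsilon^{K_0}(D)\,\epsilon^{K_0}(D_0)=\prod_j(-1)^{n^{K_0}(C_j)+1}$, with no $\ell_j$.)
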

It should be mentioned that, by this proposition, the good orientation $K_0$ by our definition is a $\emph{crossing orientation}$ as defined by Tesler \cite{Tes2000}. Moreover, using this property of crossing orientations and some properties of the Arf invariant, Loebl and Masbaum \cite{Loeb11} gave a direct proof of the Pfaffian formula for the dimer partition function. It turns out that their proof can be adapted to our situation for the Ising partition function, as we will show now. The proof of Proposition $\ref{pro: main1}$ will be left until the end of this section.
\begin{proof}[Proof of Theorem \ref{theo: main}]
	First of all we recall that the orientation $K_q$ is obtained by inverting $K_0$ on every edge $e$ such that $q([e])\neq q_0([e])$. Hence by Equation ($\ref{eq: sign}$) we get $ \epsilon^{K_q}(D)=\epsilon^{K_0}(D)(-1)^{|\{e\in D: q([e])\neq q_0([e])\}|}$. By definition of quadratic forms, we can write $$ q([D])-q_0([D])=\sum_{e\in D}q([e])-\sum_{e\in D}q_0([e])= |\{e\in D: q([e])\neq q_0([e])\}| \in\Z_2,$$ which implies that $ \epsilon^{K_q}(D)=\epsilon^{K_0}(D)(-1)^{q([D])-q_0([D])}$. Now by Proposition \ref{pro: outside intersections} we have $q_0([D])=\ngoai(D)$ modulo 2 for every $D\in \mathcal{D}(G^T)$. This fact and Proposition $\ref{pro: main1}$ lead to
	\begin{eqnarray*}
		\epsilon^{K_q}(D)&=&\epsilon_0(-1)^{t(D)}(-1)^{q([D])-q_0([D])}=\epsilon_0(-1)^{\trong(D)+\ngoai(D)}(-1)^{q([D])-q_0([D])}\\
		&=& \epsilon_0(-1)^{\trong(D)+q_0([D])}(-1)^{q([D])-q_0([D])}=\epsilon_0(-1)^{\trong(D)+q([D])}.
	\end{eqnarray*}
	Secondly, for $D_0$ the standard dimer configuration of $G^T$, its relative homology class $[D_0]\in H_1(\Sigma,\mathcal{P};\mathbb{Z}_2)$ induces a dual, denoted by $[D_0]^*\in Hom(H_1(\Sigma,\mathcal{P};\mathbb{Z}_2);\mathbb{Z}_2)=H^1(\Sigma,\mathcal{P};\mathbb{Z}_2)$, which can be simply given by $[D_0]^*([D]):=[D]\cdot [D_0]$ for every $[D]\in H_1(\Sigma,\mathcal{P};\mathbb{Z}_2)$. Note that since $[D_0]^*$ is linear, $q+[D_0]^*$ is still a quadratic form. Using the equation above for  $q+[D_0]^*$ together with Lemma $\ref{lem: Z}$, one can write 
	\begin{eqnarray}
	Z^q(G,x)&=&\sum_{D\in \mathcal{D}(G^T)}(-1)^{q([D\Delta D_0])+\trong(D)}x^T(D) \nonumber \\ \nonumber
	&=&(-1)^{q([D_0])}\sum_{D\in \mathcal{D}(G^T)}(-1)^{q([D])+[D]\cdot [D_0]+\trong(D)}x^T(D)\\  \nonumber
	&=&(-1)^{q([D_0])}\sum_{D\in \mathcal{D}(G^T)}(-1)^{(q+D_0^*)([D])+\trong(D)}x^T(D)\\  \nonumber
	&=&\epsilon_0(-1)^{q([D_0])}\sum_{D\in \mathcal{D}(G^T)}\epsilon^{K_{q+[D_0]^*}}(D)x^T(D)\\ 
	&\overset{(\ref{eq: Pfaffian})}{=}& \epsilon_0(-1)^{q([D_0])} \text{Pf}(A^{K_{q+[D_0]^*}}(G^T,x^T)). \label{eq: Z^q}
	\end{eqnarray}
	Finally, let us recall that our purpose is to give a formula to calculate the Ising partition function on $G$ $$Z_{\mathcal{I}}(G,x)=\sum_{G'\subset G}\prod_{e\in E(G')}x_e,$$ where the sum is over all even subgraphs $G'$ of $G$, or in other words, over the set $Z_1(G;\mathbb{Z}_2)$ of all 1-cycles modulo 2 in $G$. Therefore the Ising partition function can be rewritten as
	\begin{eqnarray*}
		Z_{\mathcal{I}}(G,x)&=&\sum_{P\in Z_1(G;\mathbb{Z}_2)}x(P)=\sum_{\alpha\in H_1(\Sigma;\mathbb{Z}_2)}\sum_{[P]=\alpha} x(P)\\
		&\overset{\text{Lem.}\, \ref{lem: Arf}(i)}{=}&\sum_{\alpha\in H_1(\Sigma;\mathbb{Z}_2)}\bigg(\frac{1}{2^g}\sum_{q\in \mathcal{Q}(\Sigma)}(-1)^{\text{Arf}(q)+q(\alpha)}\bigg)\sum_{[P]=\alpha} x(P)\\
		&=&\frac{1}{2^g}\sum_{q\in \mathcal{Q}(\Sigma)}(-1)^{\text{Arf}(q)}\sum_{\alpha\in H_1(\Sigma;\mathbb{Z}_2)}(-1)^{q(\alpha)}\sum_{[P]=\alpha} x(P)\\
		&=&\frac{1}{2^g}\sum_{q\in \mathcal{Q}(\Sigma)}(-1)^{\text{Arf}(q)}Z^q(G,x)\\
		&\overset{(\ref{eq: Z^q})}{=}&\frac{\epsilon_0}{2^g}\sum_{q\in \mathcal{Q}(\Sigma)}(-1)^{\text{Arf}(q)+q([D_0])}\text{Pf}(A^{K_{q+[D_0]^*}}(G^T,x^T))\\
		&\overset{\text{Lem.}\, \ref{lem: Arf}(ii)}{=}&\frac{\epsilon_0}{2^g}\sum_{q\in \mathcal{Q}(\Sigma)}(-1)^{\text{Arf}(q+[D_0]^*)}\text{Pf}(A^{K_{q+[D_0]^*}}(G^T,x^T))\\
		&=&\frac{\epsilon_0}{2^g}\sum_{q\in \mathcal{Q}(\Sigma)}(-1)^{\text{Arf}(q)}\text{Pf}(A^{K_{q}}(G^T,x^T)).
	\end{eqnarray*}This concludes the proof.
\end{proof}
Let us continue by showing how Theorem $\ref{theo: main}$ implies Theorem $\ref{theo: first}$.
\begin{proof}[Proof of Theorem $\ref{theo: first}$] From Theorem $\ref{theo: main}$, one can write $$Z_{\mathcal{I}}(G,x)=\frac{1}{2^g}\bigg|\sum_{q\in \mathcal{Q}(\Sigma)} (-1)^{\text{Arf}(q)+\text{Arf}(q_0)} \text{Pf}(A^{K_q}(G^T,x^T)) \bigg|.$$
	By part $(ii)$ of Lemma $\ref{lem: Arf}$, for each quadratic form $q$, there exists an element $\Delta_q\in H_1(\Sigma;\mathbb{Z}_2)$ such that $\text{Arf}(q)+\text{Arf}(q_0)=q_0(\Delta_q)$. This element $\Delta_q$ is determined by its Poincaré dual $\Delta_q^*\in Hom(H_1(\Sigma;\mathbb{Z}_2);\mathbb{Z}_2)$ given by $\Delta_q^*(\alpha)=q(\alpha)+q_0(\alpha)$ for every $\alpha\in H_1(\Sigma;\mathbb{Z}_2)$. Moreover, the correspondence $q\mapsto \Delta_q$ is bijective, thus we can write $$Z_{\mathcal{I}}(G,x)=\frac{1}{2^g}\bigg|\sum_{\Delta\in H_1(\Sigma;\mathbb{Z}_2)} (-1)^{q_0(\Delta)} \text{Pf}(A^{K_{q_0+\Delta^*}}(G^T,x^T)) \bigg|.$$
	Now recall that, fixing the basis $\mathcal{B}=\{[a_1],\dots,[a_g],[b_1],\dots,[b_g]\}$ of $H_1(\Sigma;\mathbb{Z}_2)$, each element $\Delta\in H_1(\Sigma;\mathbb{Z}_2)$ can be written as $\Delta=\sum_{i=1}^g\epsilon_i[a_i]+\sum_{i=1}^g\epsilon'_i[b_i]$, and thus $\Delta$ can be identified with $(\epsilon,\epsilon')\in \mathbb{Z}_2^{2g}$. Since $q_0([a_i])=q_0([b_i])=0$ and $[a_i]\cdot[b_j]=\delta_{ij}$ while $[a_i]\cdot[a_j]=[b_i]\cdot[b_j]=0$ for every $1\le i,j\le g$, by definition of quadratic forms we get $$q_0(\Delta)=\sum_{i=1}^g \epsilon_i\epsilon'_i.$$
	Therefore, the orientation $K_{q_0+\Delta^*}$, which is obtained by inverting $K_0=K_{q_0}$ on every edge $e$ such that $q_0([e])+\Delta^*([e])\neq q_0([e])$, or equivalently on every edge $e$ such that $1=\Delta^*([e])=\Delta\cdot[e]$, can be understood as $K_{\epsilon,\epsilon'}$ defined before. This fact together with the two previous equations leads to the formula stated in Theorem $\ref{theo: first}$.
\end{proof}
Now we are only left with the proof of Proposition  $\ref{pro: main1}$.
\begin{figure}[t]
	\centering
	\includegraphics[height=90pt]{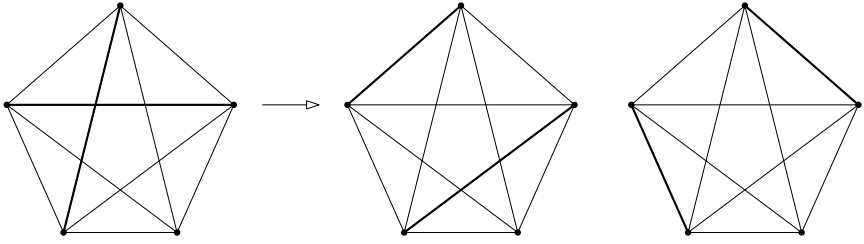}
	\caption{Reduce the number of self-intersections of a dimer configuration.}\label{fig: transformation}
\end{figure}
\subsection{Proof of Proposition $\ref{pro: main1}$}\label{subsec: main proposition Ising}
Let us recall that we need to prove the following equality 
\begin{equation} \label{eq: main}
\epsilon^{K_0}(D)=\epsilon_0(-1)^{t(D)}
\end{equation}
with $\epsilon_0=\pm 1$ a constant. To do that, we first reduce this equation to the case where the dimer configuration $D$ of $G^T$ has no inside self-intersections. More precisely, if $D$ has a self-intersection created by two short edges, let us replace these two by new ones (there are two ways to do so) as in Figure $\ref{fig: transformation}$ to obtain a new dimer configuration $D'$. It is obvious that $\trong(D)$ and $\trong(D')$ have opposite parity, and $\ngoai(D)=\ngoai(D')$. Furthermore, by the choice of $K_0$ on short edges of $G^T$ (recall Definition $\ref{def: Kasteleyn}$, condition $(i)$), one can verify easily that $\epsilon^{K_0}(D)=-\epsilon^{K_0}(D')$. By repeating this transformation (each time one decreases $\trong(D)$), one can finally obtain $D'\in \mathcal{D}(G^T)$ with $\trong(D')=0$, $\ngoai(D')=\ngoai(D)$ and $$\epsilon^{K_0}(D)=(-1)^{\trong(D)}\epsilon^{K_0}(D').$$ Hence Equation (\ref{eq: main}) now is equivalent to 
$$\epsilon^{K_0}(D')=\epsilon_0(-1)^{\ngoai(D')}.$$
Moreover, since $\trong(D')=0$, we can remove some interior short edges inside complete graphs to obtain $G'$ so that $G'$ has no more crossings inside $\mathcal{P}$ and $D'$ is still a dimer configuration of $G'$. Additionally, $K_0$ restricted on $G'$ is still a good orientation by Remark \ref{rem: good orientations}. Therefore, Proposition $\ref{pro: main1}$ boils down to the following one.
\begin{proposition} \label{pro: reduce}
	Let $G'$ be a subgraph of the terminal graph $G^T$ with no inside crossings, and $K$ a good orientation of $G'$. Then for every dimer configuration $D$ of $G'$ we have $$\epsilon^{K}(D)=\epsilon_0(-1)^{t(D)},$$
	where $\epsilon_0=\pm 1$ is a constant.
\end{proposition}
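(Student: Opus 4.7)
The plan is to show that $\epsilon^K(D)(-1)^{t(D)}$ is independent of $D\in\mathcal{D}(G')$, which is equivalent to proving
$$\epsilon^K(D_1)\,\epsilon^K(D_2)=(-1)^{t(D_1)+t(D_2)}$$
for any two dimer configurations $D_1,D_2$ of $G'$. The symmetric difference $D_1\Delta D_2$ decomposes into edge-disjoint simple cycles $C_1,\dots,C_m$, each of even length $2\ell_j$ and alternating between $D_1$- and $D_2$-edges. A standard expansion of the Pfaffian (the Kasteleyn cycle-rearrangement lemma, which amounts to computing the sign of the permutation that reshuffles $D_1$ into $D_2$ along each cycle) yields
$$\epsilon^K(D_1)\,\epsilon^K(D_2)=\prod_{j=1}^m (-1)^{1+\ell_j+n^K(C_j)},$$
where $n^K(C_j)$ counts the edges of $C_j$ whose $K$-orientation disagrees with a fixed cyclic traversal of $C_j$. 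On the crossing side, edges in $D_1\cap D_2$ contribute identically to $t(D_1)$ and $t(D_2)$, crossings between $D_1\cap D_2$ and $D_1\Delta D_2$ cancel in pairs, and any two edge-disjoint closed curves in the plane meet an even number of times, so $t(D_1)+t(D_2)\equiv \sum_j t(C_j)\pmod 2$. Hence the proposition reduces to the single-cycle identity
$$1+\ell+n^K(C)\equiv t(C)\pmod 2 \qquad (\star)$$
for every simple alternating cycle $C$ of length $2\ell$ drawn in our fixed picture.

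To prove $(\star)$, I would planarize $C$ by resolving each of its $t(C)$ outside self-crossings into a four-valent vertex, producing a simple closed curve $\widetilde C$ in the plane that bounds a topological disc $R$. Euler's formula applied to the subcomplex of $G'$ enclosed by $\widetilde C$ (including the new vertices from resolved crossings) relates the interior vertex count, edge count, and face count. I would then combine three ingredients: condition (i) of Definition~\ref{def: Kasteleyn}, which forces each enclosed complete-graph short edge to contribute an even amount to the parity sum; condition (ii), giving $n^K(\partial f)\equiv 1\pmod 2$ for every enclosed inside face $f$; and condition (iii), giving the analogous parity for every enclosed outside face $f_e$. Summing $\sum_f n^K(\partial f)$ over the enclosed faces, interior edges are counted twice and cancel modulo~2, and the boundary contribution reduces to $n^K(C)$, while each resolved crossing contributes exactly one extra face and therefore one extra unit to the parity sum. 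Rearranging produces $(\star)$.

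The hard part will be the Euler-type bookkeeping in $(\star)$ for cycles that traverse outside edges. Unlike in the planar Kasteleyn setting, a single outside edge $e$ has its odd-parity guarantee only on its own associated outside face $f_e$ (cf.\ Figure~\ref{fig: Kasteleyn}), and these faces may overlap geometrically for distinct outside edges, so one cannot naively sum the face parities in an inclusion-exclusion. I plan to circumvent this by processing the outside strips of $C$ one at a time: for each outside edge $e$ crossed by $C$, apply condition (iii) to $f_e$, then contract the corresponding strip and remove $e$, updating $n^K(C)$ and $t(C)$ accordingly. After all outside strips of $C$ are exhausted, the remaining configuration is a planar graph inside $\mathcal{P}$ carrying the inherited orientation, which is a Tesler crossing orientation in the purely planar sense, and $(\star)$ collapses to the classical planar identity behind the Kasteleyn--Tesler Pfaffian formula.
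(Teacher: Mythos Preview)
Your overall strategy---reduce to a per-cycle parity identity and then verify it by an Euler/face-count argument using the three conditions on $K$---is the right shape, and it is essentially how the paper proceeds. But two of your reductions are incorrect, and as a result the target identity $(\star)$ is false.

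First, the cycle-rearrangement formula should read
\[
\epsilon^K(D_1)\,\epsilon^K(D_2)=\prod_{j=1}^m (-1)^{1+n^K(C_j)},
\]
with no $\ell_j$. Check this on a single $6$-cycle with all edges oriented cyclically: $n^K(C)=0$, the two matchings differ by a cyclic permutation of six letters of sign $-1$, and the product of the edge signs is $+1$, so $\epsilon^K(D_1)\epsilon^K(D_2)=-1=(-1)^{1+0}$, not $(-1)^{1+3+0}=+1$.

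Second, your crossing bookkeeping drops a term: in fact
\[
t(D_1)+t(D_2)\equiv \sum_j t(C_j) \;+\; (D_1\setminus D_2)\cdot(D_2\setminus D_1)\pmod 2,
\]
and this last cross term need not vanish for an arbitrary pair $D_1,D_2$. Even after correcting both points, the per-cycle statement $1+n^K(C)\equiv t(C)$ is still wrong: for a simple closed $C$ lying entirely inside $\mathcal{P}$ one has $t(C)=0$, while the Kasteleyn count gives $n^K(C)+1\equiv V_{\text{int}}(C)$, and $V_{\text{int}}(C)$ can perfectly well be odd (a single interior vertex matched to the exterior by a long outside edge).

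What rescues the argument---and what the paper does---is to compare every $D$ with the \emph{specific} reference matching $D_0$ consisting of all long edges. Then $D\setminus D_0$ contains only short edges, which in $G'$ cross nothing; this kills the cross term above and, more importantly, lets one account for the interior vertices of each $C_j$ by the long edges of $D$ and of $D_0\setminus D$ that cross $C_j$. The paper then proves the global identity $\sum_j(n^K(C_j)+1)\equiv t(D)+t(D_0)$ by first establishing $n^K(C)+1\equiv V_{\text{int}}(C)$ for simple $C$ (handling outside edges via a positive/negative face decomposition rather than your strip-by-strip contraction) and then reducing non-simple cycles by smoothing their outside self-intersections. Your Euler bookkeeping idea is sound for that step, but it has to be run after the correct reduction, not after $(\star)$.
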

\begin{proof} First of all, until the end of this section, without stating explicitly, all equations and equalities will be understood in $\mathbb{Z}_2$. Note that $D_0$ is still the standard dimer configuration of $G'$, and recall that the symmetric difference $D\Delta D_0$ is a vertex-disjoint union of cycles $C_j$'s whose edges are long and short alternatively. Recall also that by Equation (\ref{eq: epsilon}) one has $$\epsilon^K(D_0)\epsilon^K(D)=(-1)^{\sum_j (n^K(C_j)+1)},$$ where $n^K(C_j)$ denotes the number of edges of $C_j$ where $K$ is different from a fixed orientation on $C_j$. We have the following result  \begin{equation} \label{eq: Tesler}
	\sum\limits_j (n^K(C_j)+1)=t(D)+t(D_0),
	\end{equation} whose proof will be given later. Now let us finish the proof of Proposition $\ref{pro: reduce}$ as well as Proposition \ref{pro: main1} by using this equation. Indeed, with Equation $(\ref{eq: Tesler})$ one gets
	$$\epsilon^K(D)=\epsilon^K(D_0)(-1)^{\sum_j (n^K(C_j)+1)}=\epsilon^K(D_0)(-1)^{t(D)+t(D_0)}=\epsilon_0 (-1)^{t(D)}$$ with $\epsilon _0:=\epsilon^K(D_0)(-1)^{t(D_0)}$. 
\end{proof}
In the rest of this section, let us prove Equation ($\ref{eq: Tesler}$), which is in fact a particular case of the main result shown in \cite{Tes2000}. However, in our context, with the standard dimer configuration $D_0$ the proof in \cite{Tes2000} can be simplified extremely so that we get a very elementary one, as we will show now.

Let us start by recalling that all the self-intersections (or simply crossings) of $D$ now are only created by its outside edges, that is, the ones lying outside the polygon $\mathcal{P}$, and that $D_0$ consists of all the long edges, thus contains all the outside edges. Hence we can rewrite the right-hand side of Equation ($\ref{eq: Tesler}$) as $$t(D)+t(D_0)=(D_0\setminus D)\cdot (D_0\setminus D)+D\cdot (D_0\setminus D).$$ Indeed, each self-intersection of $D$ is also a self-intersection of $D_0$, therefore $t(D)+t(D_0)$ has the same parity as the number of crossings created by either 2 outside edges not in $D$, or by one outside edge in $D$ and one not in $D$. The corresponding crossings contribute to $(D_0\setminus D)\cdot (D_0\setminus D)$ and $D\cdot (D_0\setminus D)$ respectively. Now we are left with showing \begin{equation} \label{eq: final}
\sum\limits_j (n^K(C_j)+1)=(D_0\setminus D)\cdot (D_0\setminus D)+D\cdot (D_0\setminus D).
\end{equation} This equation will be proved in two steps: we start with the case where all $C_j$'s are simple; if some of them are not, we can come back to the previous case by smoothing their self-intersections. Let us now treat the first case.
\begin{lemma} \label{lem: simple cycles}
	Equation $(\ref{eq: final})$ holds if all the cycles $C_j$'s are simple.
\end{lemma}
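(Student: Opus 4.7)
My plan is to prove Equation~(\ref{eq: final}) cycle by cycle and then sum over $j$. Throughout, it will be convenient to write $A:=D\cap D_0$ for the unchanged long edges and $B:=D_0\setminus D=\bigsqcup_j B_j$ where $B_j:=B\cap C_j$, so that the right-hand side of~(\ref{eq: final}) equals $t(B)+A\cdot B$. Short edges sit inside disjoint small disks and therefore create no plane crossings, so $t(C_j)=t(B_j)$ and every crossing of $C_j$ with an off-$C_j$ edge is in fact between two long edges.

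For a fixed simple cycle $C=C_j$, the heart of the argument will be a planar-Kasteleyn-style parity computation. I will treat each plane crossing of $C$ as a virtual $4$-valent vertex, resolve each of the $t(B_j)$ self-crossings of $C$ by a consistent smoothing, and thereby replace $C$ by a disjoint union $\tilde C^{(1)}\sqcup\cdots\sqcup\tilde C^{(s)}$ of simple plane closed curves bounding disks $R_1,\ldots,R_s$. Summing the good-orientation congruences $n^K(\partial f)\equiv 1\pmod{2}$ (inside faces, Definition~\ref{def: Kasteleyn}(ii)) and $n^K(\partial f_e)\equiv 1\pmod{2}$ (outside faces, Definition~\ref{def: Kasteleyn}(iii)) over the faces contained in $\bigcup_i R_i$, cancelling each interior edge (shared between two enclosed faces with opposite induced orientations), and applying Euler's formula $V-E+F=1$ for each disk $R_i$ should yield
\[
n^K(C)+1\equiv V_{\mathrm{int}}\pmod{2},
\]
where $V_{\mathrm{int}}$ is the parity of the number of vertices of the virtualised planar graph lying strictly inside $\bigcup_i R_i$.

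The key simplification coming from $D_0$ being the all-long-edge standard configuration is that true $G^T$-vertices inside $\bigcup_i R_i$ come in pairs through the matchings $D_0$ and $D$: a $G^T$-vertex lying on $C$ already uses both its $D_0$- and its $D$-match on edges of $C$, so interior $G^T$-vertices must be matched among themselves, the residual parity being controlled purely by how the edges of $A$ and of the other $B_k$ cross $\bigcup_i R_i$. Consequently $V_{\mathrm{int}}$ reduces, modulo $2$, to a count of virtual $4$-valent vertices inside $\bigcup_i R_i$: the self-crossings of $B_j$ together with the crossings of $B_j$ with off-$C_j$ long edges in $A$ and in $B_k$ for $k\ne j$. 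Summing $n^K(C_j)+1$ over $j$, these contributions are intended to reconstruct $\sum_j t(B_j)+\sum_j A\cdot B_j+\sum_{j<k}B_j\cdot B_k=t(B)+A\cdot B$, the right-hand side of Equation~(\ref{eq: final}). The main obstacle will be the careful bookkeeping needed to make this precise: the smoothings must be chosen consistently, the outside-face condition~(iii) must be applied correctly when an outside face $f_e$ meets several $R_i$'s, and above all each cross-cycle crossing between $B_j$ and $B_k$ must contribute exactly once (not zero or twice) in the total sum. The simple-cycle hypothesis of Lemma~\ref{lem: simple cycles} is exactly what makes ``inside $\tilde C^{(i)}$'' well defined so that the counting goes through cleanly.
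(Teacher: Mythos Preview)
Your overall strategy is the same as the paper's: establish the per-cycle congruence $n^K(C_j)+1\equiv V_{\mathrm{int}}(C_j)\pmod 2$ via the good-orientation conditions and Euler's formula, then convert $V_{\mathrm{int}}(C_j)$ into a crossing count using the standard matching $D_0$. However, the execution has two genuine gaps.

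\medskip
\textbf{The outside-face step does not go through as written.} You propose to virtualise the crossings and then sum $n^K(\partial f)\equiv 1$ over the faces of the virtualised planar graph lying in the disk $R_1$ bounded by $C_j$. But Definition~\ref{def: Kasteleyn} gives you this congruence only for inside faces of $G'$ (condition~(ii)) and for the special outside faces $f_e$ (condition~(iii)), where $f_e$ is obtained after \emph{removing all other outside edges}. A face of the virtualised graph lying outside $\mathcal{P}$ is in general bounded by pieces of several outside edges, and nothing guarantees that $n^K$ is odd on its boundary. The paper never virtualises: it proves the Claim $n^K(C)+1\equiv V_{\mathrm{int}}$ (with $V_{\mathrm{int}}$ counting only genuine $G'$-vertices) by summing condition~(ii) over the inside faces in the disk together with condition~(iii) over the $m$ outside faces $f_e$ attached to the outside edges of $C$. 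These $f_e$ overlap and may be traversed against the orientation of $C$, so the paper splits them into \emph{positive} and \emph{negative} ones and tracks the resulting corrections $\sum_{f\text{ neg}}(|\partial f|\pm 1)$. This positive/negative bookkeeping is the missing ingredient in your plan.

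\medskip
\textbf{The counting of $V_{\mathrm{int}}$ is garbled.} You assert that $V_{\mathrm{int}}$ reduces to the virtual $4$-valent vertices coming from crossings of $B_j$ with off-$C_j$ long edges. But those crossing points lie \emph{on} $C_j$, not strictly inside, so they are not interior vertices at all. What actually happens (and what the paper does) is simpler: with $V_{\mathrm{int}}$ counting only true $G'$-vertices, one matches each interior vertex by the $D_0$-edge through it; pairs entirely inside contribute $0$, while an edge leaving the disk crosses $C_j$ an odd number of times, is therefore a long edge, and lies either in $A=D\cap D_0$ or in some $B_k$ with $k\neq j$. This gives $V_{\mathrm{int}}(C_j)\equiv B_j\cdot A+\sum_{k\neq j}B_j\cdot B_k\pmod 2$ directly, with no virtual vertices needed.

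\medskip
As for your worry that a cross-cycle crossing between $B_j$ and $B_k$ might be counted twice: it \emph{is} counted twice when you sum over $j$, but since $C_j$ and $C_k$ are disjoint simple closed curves in the plane, $B_j\cdot B_k=C_j\cdot C_k$ is even. Hence $2\sum_{j<k}B_j\cdot B_k\equiv 0\equiv \sum_{j<k}B_j\cdot B_k=t(B)\pmod 2$, and the identity follows. This elementary topological fact is what dissolves the obstacle you flagged.
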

In order to prove this lemma, we need the following result.
\begin{claim}
	If $C$ is simple, then $n^K(C)+1$ has the same parity as $V_{\text{int}}$, the number of vertices of $G'$ inside $C$.
\end{claim}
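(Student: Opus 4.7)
The plan is to derive the congruence $n^K(C) + 1 \equiv V_{\text{int}} \pmod{2}$ by computing a boundary sum in two different ways and then invoking Euler's formula for the planar disc bounded by $C$. Orient $C$ counterclockwise in the plane, let $F$ be the set of planar faces of $G'$ lying inside $C$, and let $E_{\text{int}}$ denote the number of edges of $G'$ strictly inside $C$. Applying Euler's formula to the planar subgraph consisting of $C$ together with everything it encloses gives
\[
|F| \;=\; 1 + E_{\text{int}} - V_{\text{int}}, \qquad\text{hence}\qquad |F| \equiv 1 + E_{\text{int}} + V_{\text{int}} \pmod{2}.
\]

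I would then evaluate $S := \sum_{f \in F} n^K(\partial f)$ in two ways. First, by the good orientation property, each $n^K(\partial f)$ is odd: this follows from condition (ii) of Definition \ref{def: Kasteleyn} together with Remark \ref{rem: good orientations} for the newly created inside faces of $G'$, and from condition (iii) in the case where $C$ or an adjacent face involves outside edges. This yields $S \equiv |F| \pmod{2}$. Second, I would double count contributions edge by edge: every edge of $C$ lies on the boundary of a unique $f \in F$ (the face inside $C$ adjacent to it), and on this edge the CCW orientation of $\partial f$ agrees with the CCW orientation of $C$, so these edges together contribute $n^K(C)$ to $S$; every edge strictly inside $C$ lies on exactly two boundaries $\partial f_1, \partial f_2$ with $f_1, f_2 \in F$, whose CCW orientations on the edge are opposite, so $K$ disagrees with precisely one of them and the edge contributes $1$. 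Hence $S \equiv n^K(C) + E_{\text{int}} \pmod{2}$.

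Equating the two expressions for $S$ and using the Euler formula above gives $n^K(C) + E_{\text{int}} \equiv 1 + E_{\text{int}} + V_{\text{int}} \pmod{2}$, which simplifies at once to $n^K(C) + 1 \equiv V_{\text{int}} \pmod{2}$, as required. The step I expect to be the most delicate is the first evaluation $S \equiv |F| \pmod{2}$: when the simple cycle $C$ uses outside long edges, certain faces of $F$ will lie partly outside $\mathcal{P}$ and will not literally be inside faces in the sense of Definition \ref{def: Kasteleyn}. For those one has to either identify $f$ with a suitable outside face $f_e$ in order to invoke condition (iii), or decompose $f$ into inside/outside pieces whose boundary parities combine correctly. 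This bookkeeping across the strips glued outside the polygon is where most of the care will be required.
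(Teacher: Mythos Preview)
For the case where $C$ has no outside edges ($m=0$), your argument is correct and is exactly the standard Kasteleyn computation the paper also invokes. The gap is in the case $m\ge 2$ with several outside edges of $C$ lying in the same strip, which you flag but do not resolve, and it is more than bookkeeping.

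The planar faces of $G'$ that lie inside the disc bounded by $C$ but outside $\mathcal{P}$ are in general \emph{not} outside faces $f_e$ in the sense of Definition~\ref{def: Kasteleyn}: the outside face $f_e$ is defined only after deleting all other outside edges, so if two outside edges $e_1,e_2$ of $C$ run through the same strip, the planar region between them is neither $f_{e_1}$ nor $f_{e_2}$, and condition~(iii) tells you nothing about its parity. Worse, other outside edges of $G'$ (not on $C$) may traverse that same strip and cross one another there; since such crossings are not vertices of $G'$, the ``planar faces'' you want to sum over have boundaries that are not closed walks in $G'$, and both your Euler count and your edge-by-edge double count lose their meaning.

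The paper avoids this by abandoning the actual planar faces in the strip region and instead summing over the $m$ outside faces $f_{e_1},\dots,f_{e_m}$ (each one defined with the other outside edges removed) together with the genuine inside faces of the disc. Because some of these $f_{e_i}$ then lie on the wrong side of $C$, one classifies them as \emph{positive} or \emph{negative} according to whether the counterclockwise run along $C$ traverses their boundary with the same or the opposite orientation, and one carries the correction terms $|\partial f|\pm 1$ arising from reversing $\partial f$ on the negative faces. This orientation-sensitive accounting with the \emph{defined} outside faces, rather than any identification of your planar faces with the $f_e$, is the missing ingredient.
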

With this claim, one can prove Lemma $\ref{lem: simple cycles}$ as follows. 
\begin{proof} [Proof of Lemma $\ref{lem: simple cycles}$]
	In fact we will show that the contributions of each $C_j$ to both sides of $(\ref{eq: final})$ are equal, or equivalently, the number of crossings created by outside edges of $C_j$ with $D$ or $D_0\setminus D$ has the same parity as the number of vertices inside $C_j$. Indeed, let $v$ be a vertex inside $C_j$, since $G'\setminus C_j$ admits a dimer configuration, $v$ is matched to another vertex $v'$ by a dimer of this configuration. As $v'$ is not on $C_j$, there are only two cases: either $v'$ is inside $C_j$, or $v'$ is outside. In the first case, whatever the dimer matching $v$ and $v'$ is a long or short edge, it must intersect $C_j$ an even number of times (remember that $C_j$ is drawn on the plane). Hence $v$ (or the pair $(v,v')$) contributes $0\in \mathbb{Z}_2$ to both $V_{\text{int}}$ and the quantity $(D_0\setminus D)\cdot (D_0\setminus D)+D\cdot (D_0\setminus D)$. In the latter case, the dimer matching $v$ and $v'$ must intersect $C_j$ an odd number of times, and so it is a long edge. Depending on whether this long edge belongs to $D$ or $D_0\setminus D$, its crossings with outside edges of $C_j$ (which are all long edges, and so belong to $D_0\setminus D$ since $C_j$ contains long and short edges alternatively) contribute to $D\cdot (D_0\setminus D)$ or $(D_0\setminus D)\cdot (D_0\setminus D)$, but not both. Therefore, in this case $v$ contributes $1\in \mathbb{Z}_2$ to both $V_{\text{int}}$ and $(D_0\setminus D)\cdot (D_0\setminus D)+D\cdot (D_0\setminus D)$. 
\end{proof}
Now let us complete the proof of Lemma $\ref{lem: simple cycles}$ by proving the claim.
\begin{proof} [Proof of the claim] First of all, set $m$ to be the number of outside edges belonging to $C$. Since $C$ is a simple closed curve drawn on the plane, it must bound a disc. Let us consider the case where this disc is inside $\mathcal{P}$, that is, $m=0$. In fact this case is the same as Kasteleyn's theorem displayed in Section \ref{sec: Kasteleyn}: using the fact that $K$ is good (recall Definition \ref{def: Kasteleyn}, condition $(ii)$), we get the statement in the claim.
	\begin{figure}[t]
		\centering
		\includegraphics[height=200pt]{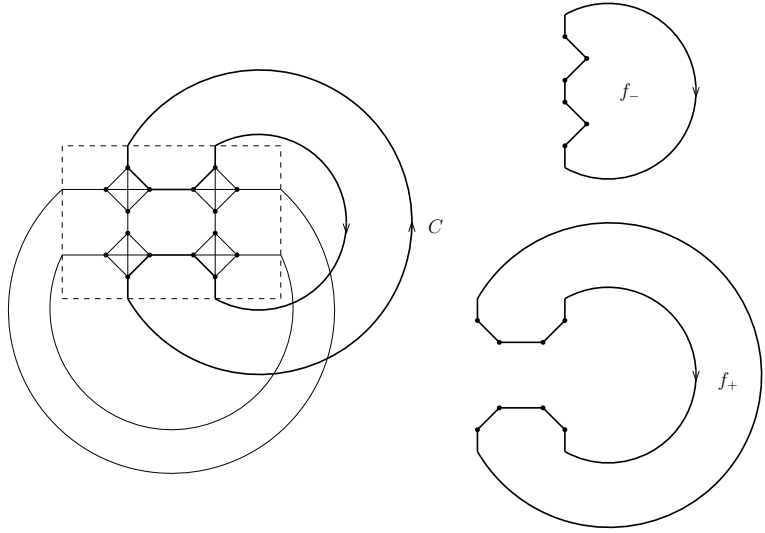}
		\caption{The case $m=2$: a simple cycle $C$ gives rise to an outside positive face $f_+$ and an outside negative face $f_-$.}\label{fig: m=2}
	\end{figure}  
	
	Let us now continue with the case $m\geq 1$. Note that if $m=1$, one can verify easily that the argument above works completely well with an additional fact: the orientation $K$ satisfies condition $(iii)$ of Definition $\ref{def: Kasteleyn}$ for the unique outside edge of $C$. This is also the case for general $m$ if no two of the $m$ outside edges are homologous. The situation only becomes more intricate when we have some outside edges of same homology classes. This is the content of what we will show next.
	
	Recall that to each outside edge, there is a unique face, which is called the outside face, formed by this edge and some edges along the boundary of the subgraph obtained from $G'$ by removing all the other outside edges (cf. Definition $\ref{def: Kasteleyn}$). Let us consider the $m$ outside faces corresponding to $m$ outside edges of $C$. Note that when we travel along $C$ with the counterclockwise orientation, some outside faces have boundaries partly travelled with the same orientation, while the others have boundaries partly travelled with the opposite one. Let us call the former ones $\emph{positive}$ and the latter ones $\emph{negative}$ (see Figure $\ref{fig: m=2}$ for example). For inside faces of the disc bounded by $C$, we also call them $\emph{positive}$. With these terminology and the same notations as before, using conditions $(ii)$ and $(iii)$ of Definition $\ref{def: Kasteleyn}$ one gets \begin{eqnarray*}
		0&=&\sum_{f\, \text{positive}}(n^K(\partial f)+1)+\sum_{f\, \text{negative}}(n^K(\partial f)+1)\\
		&=& \sum_{f\, \text{positive}}1+\sum_{f\, \text{positive}}n^K(\partial f)+\sum_{f\, \text{negative}}(n^K(-\partial f)+|\partial f|+1)\\
		&=&F+\big(\sum_{f\, \text{positive}}n^K(\partial f)+\sum_{f\, \text{negative}}n^K(-\partial f)\big)+\sum_{f\,\text{negative}}(|\partial f|+1)\\
		&=&F+\big(n^K(C)+E_{\text{int}}+\sum_{f\,\text{negative}}(|\partial f|-1)\big)+\sum_{f\,\text{negative}}(|\partial f|+1)\\
		&=&F+n^K(C)+E_{\text{int}}\\
		&=&n^K(C)+V_{\text{int}}+1,
	\end{eqnarray*} which implies the claim.
	In this equation, the third equality comes from the fact that each positive face contributes 1 to $F$, while negative ones do not contribute. The forth equality can be explained as follows: each edge belonging to $\partial f$ (resp. $-\partial f$) for $f$ positive (resp. negative) that is not a common edge of any two faces (among faces bounded by $C$) contributes to $n^K(C)$; moreover, common edges of two positive faces contribute to $E_{\text{int}}$, while common edges of a positive face and a negative face contribute to the length of that negative face minus 1. Finally, the last equality comes from the argument for $m=0$ shown before. We have done with the proof of the claim, as well as Lemma $\ref{lem: simple cycles}$.
\end{proof}
We now finish the proof of Equation (\ref{eq: final}) by considering the remaining case where some of the $C_j$'s are not simple.
\begin{lemma}
	Equation $(\ref{eq: final})$ holds for every collection of cycles $\{C_j\}$.
\end{lemma}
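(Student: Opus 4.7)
The plan is to proceed by induction on $s := \sum_j s(C_j)$, where $s(C_j)$ denotes the number of self-intersections of $C_j$ in the planar drawing of $G'$. The base case $s = 0$ is exactly Lemma \ref{lem: simple cycles}. For the inductive step, assume $s \geq 1$ and pick a self-intersection point $p$ of some $C_j$. Since crossings in $G'$ occur only between outside long edges, and since the long edges of any $C_i$ necessarily belong to $D_0 \setminus D$ (because $D_0$ contains every long edge and the $C_i$'s alternate long and short), the two edges $e_1, e_2$ meeting at $p$ both lie in $D_0 \setminus D$.

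The next step is to \emph{smooth} the crossing at $p$: inside a small disc around $p$, replace the crossing strands $e_1, e_2$ by two disjoint arcs $\tilde e_1, \tilde e_2$ reconnecting the same four endpoints, choosing the smoothing compatible with the orientation of $C_j$ through $p$. Depending on how the two visits of $C_j$ to $p$ sit in the cyclic order along $C_j$, this operation yields either (a) a splitting of $C_j$ into two closed curves $C_j', C_j''$, or (b) a single closed curve $\tilde C_j$. In both cases the total self-intersection count drops to $s - 1$ and the other cycles $C_i$ ($i \neq j$) are untouched, so the inductive hypothesis applies to the smoothed family.

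The core of the argument is then to show that both sides of Equation $(\ref{eq: final})$ change by the same parity across the smoothing. For the right-hand side this is essentially transparent: performing the smoothing inside a small disc around $p$ ensures that any edge of $G'$ entering this disc meets $\{e_1, e_2\}$ and $\{\tilde e_1, \tilde e_2\}$ the same number of times modulo $2$, so crossings involving edges other than $e_1, e_2$ are preserved mod $2$; the only change is the loss of the crossing at $p$ itself, which contributed $1$ to $(D_0 \setminus D) \cdot (D_0 \setminus D)$. Hence the RHS flips by exactly $1$ mod $2$. For the left-hand side, one must inherit $K$-orientations on $\tilde e_1, \tilde e_2$ from those of $e_1, e_2$ at their endpoints and compare $\sum_k (n^K(C_k) + 1)$ before and after: in case (a) the extra cycle contributes an extra $+1$ while $\sum_k n^K(C_k)$ is preserved mod $2$, and in case (b) the number of cycles is unchanged but a local reorientation of one strand flips $n^K$ by $1$ mod $2$. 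Either way the LHS also flips by exactly $1$ mod $2$, matching the RHS, and the induction closes.

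The hard part will be the left-hand side bookkeeping at the smoothed crossing: one has to verify through a short local case analysis that in both smoothing types (a) and (b), and for each possible configuration of the $K$-orientations on $e_1, e_2$ and the traversal directions of $C_j$, the combined change in the number of cycles and in $\sum_k n^K(C_k)$ is exactly $1$ mod $2$. This is the single-crossing specialisation of Tesler's argument in \cite{Tes2000}, and it is where the standing hypothesis that $D_0$ is the standard dimer configuration (so that both $e_1$ and $e_2$ automatically lie in $D_0 \setminus D$) yields a drastic simplification: only one local picture needs to be analysed, as opposed to the full case distinction carried out in \cite{Tes2000}.
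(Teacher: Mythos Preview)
Your inductive strategy matches the paper's, but there is a genuine gap in the inductive step. After smoothing at $p$ you are working on a \emph{new} graph $G''$ (the two outside edges $e_1,e_2$ are replaced by $\tilde e_1,\tilde e_2$), with a new standard dimer configuration $D_0'$ and an orientation $K'$ inherited from $K$. To apply the inductive hypothesis (equivalently, Lemma~\ref{lem: simple cycles}) to the smoothed family you need $K'$ to be a \emph{good} orientation on $G''$ in the sense of Definition~\ref{def: Kasteleyn}: the Claim used in the proof of Lemma~\ref{lem: simple cycles} relies essentially on conditions~(ii) and~(iii). This is not automatic. The two new outside edges $\tilde e_1,\tilde e_2$ bound two new outside faces $f_1,f_2$, and one must check condition~(iii) for them. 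The paper does this explicitly: writing $f_3,f_4$ for the old outside faces of $e_1,e_2$ and $l$ for their common boundary arc, one has
\[
n^{K'}(\partial f_1)+n^{K'}(\partial f_2)\equiv n^{K}(\partial f_3)+n^{K}(\partial f_4)+2\,n^{K}(l)\equiv 0 \pmod 2,
\]
so either both $n^{K'}(\partial f_i)$ are odd (and $K'$ is good) or both are even. In the even case the paper exploits a deliberate flexibility in the definition of $K'$---there are two consistent ways to orient $\tilde e_1,\tilde e_2$, and switching between them reverses both parities---to force condition~(iii). Without this verification your induction does not close, because Equation~(\ref{eq: final}) is only asserted, and only provable via the Claim, under the standing hypothesis that the orientation is good.

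A minor side remark: once you fix an orientation on $C_j$ and perform the orientation-compatible (Seifert) smoothing at a self-crossing, the result is \emph{always} two closed curves; your case~(b) does not occur. The paper accordingly treats only the splitting case and checks directly, via the local pictures of Figure~\ref{fig: smooth K}, that one may choose $K'$ so that $n^{K}(C_j)=n^{K'}(C')+n^{K'}(C'')$, whence the left-hand side of~(\ref{eq: final}) changes by exactly $1$ modulo $2$, matching the right-hand side.
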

\begin{proof}
	The idea of the proof is to transform the collection of cycles $\{C_j\}$ together with the good orientation $K$ to have new ones so that we can apply Lemma $\ref{lem: simple cycles}$. Let us start by supposing that there is only one cycle $C$ of this collection which is not simple, and that $C$ has only one self-intersection. Remember that since this self-intersection is created by two outside edges (which are also long edges and so belong to $D_0\setminus D$), it contributes to the quantity $(D_0\setminus D)\cdot (D_0\setminus D)$. Smoothing $C$ at its self-intersection as in Figure $\ref{fig: smooth}$ we create two new outside edges splitting $C$ into two simple closed curves $C'$ and $C''$. Moreover, doing so introduces a new standard dimer configuration $D'_0$, but still keeps $D$ the same on the resulting graph. It is clear that $C'$ and $C''$ still have alternative edges between $D$ and $D'_0\setminus D$, and so have even lengths.
	\begin{figure}[b]
		\centering
		\includegraphics[height=90pt]{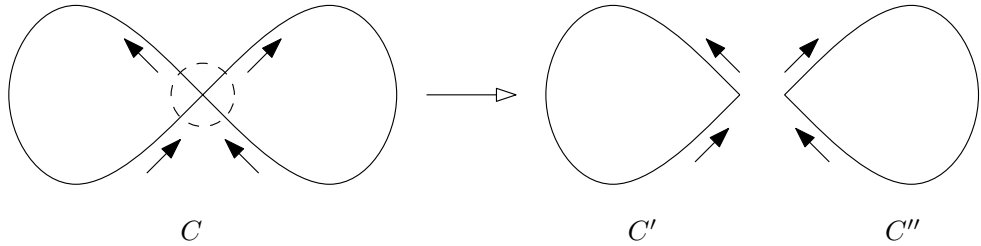}
		\caption{Smooth $C$ at its self-intersection.}\label{fig: smooth}
	\end{figure}  
	\begin{figure}[t]
		\centering
		\includegraphics[height=140pt]{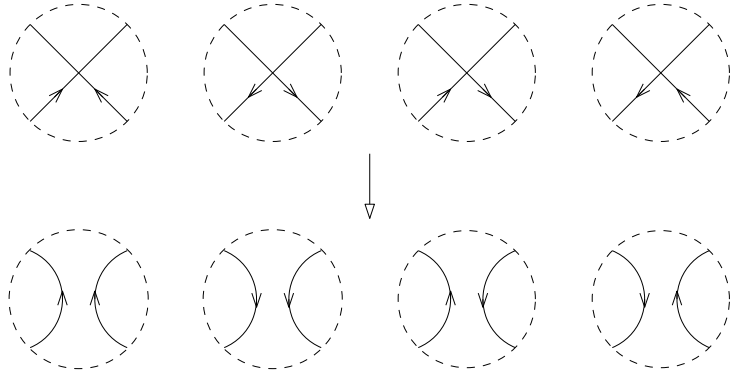}
		\caption{Transform $K$ at the self-intersection of $C$.}\label{fig: smooth K}
	\end{figure} 
	
	Next we transform the orientation $K$ to $K'$ as in Figure $\ref{fig: smooth K}$, in which we have a certain flexibility. More precisely, one can switch two choices of $K'$ in the first two cases, as well as those in the last two ones. With all these choices of $K'$, one clearly has $$n^K(C)=n^{K'}(C')+n^{K'}(C'').$$ This implies that replacing the cycle $C$ by the pair $(C',C'')$ (resulting to replace the collection $\{C_j\}$ by a new one of all simple cycles whose edges are still long and short alternatively) and the orientation $K$ by $K'$ changes the parity of both sides of Equation $(\ref{eq: final})$, and hence does not affect to the validity of this equation. To apply Lemma $\ref{lem: simple cycles}$ for this new collection of cycles and the orientation $K'$, one only needs to verify if $K'$ is good on the new graph obtained by smoothing $C$. Recall that by smoothing $C$ we introduce two new outside edges which give rise to two new outside faces, say $f_1,f_2$. We only need to check that $K'$ satisfies condition $(iii)$ of Definition $\ref{def: Kasteleyn}$ on these two new faces. Indeed, let us denote by $f_3,f_4$ the two outside faces corresponding to the two outside edges that create the self-intersection of $C$. By definition of $K$ and $K'$, it follows that $$0=n^K(\partial f_3)+n^K(\partial f_4)=n^{K'}(\partial f_1)+n^{K'}(\partial f_2)+2n^K(l)=n^{K'}(\partial f_1)+n^{K'}(\partial f_2),$$ where $l$ is the common path of $\partial f_3$ and $\partial f_4$. Now if both $n^{K'}(\partial f_1)$ and $n^{K'}(\partial f_2)$ are odd, we are done. If both of them are even, we only need to invert $K'$ on the two newly created outside edges, using the flexibility mentioned above.
	
	If $C$ has many self-intersections, we smooth them one by one, and at each step, we transform $K$ to $K'$ as above. If $\{C_j\}$ has many cycles which are not simple, we treat them one by one as in the previous case. This concludes the proof of Equation $(\ref{eq: final})$, as well as Proposition $\ref{pro: reduce}$ and Theorem $\ref{theo: main}$.
\end{proof}
\section{The non-orientable case} \label{sec: non-orientable case Ising}
The method we have developed to calculate the Ising partition function for graphs embedded in orientable surfaces can be extended to the case of non-orientable surfaces with some slight modifications. This section is spent to show our Pfaffian formulas in this case. Similarly to the orientable case we obtain a practical version of the Pfaffian formula (see Theorem $\ref{theo: second}$ below), which is an extension of Theorem $\ref{theo: first}$ to non-orientable surfaces. Also, using quadratic enhancements and the Brown invariant, which are generalisations of quadratic forms and the Arf invariant to possibly non-orientable surfaces, we get a theoretical version of the Pfaffian formula (Theorem $\ref{theo: general}$) which generalises Theorem $\ref{theo: main}$. As a consequence, when we restrict our general Pfaffian formulas to the orientable case, we get back the ones stated before. Let us now go into detail with some new terminology and then state our general Pfaffian formulas. Their proofs will be left until the end of this section.
\subsection{Statement of the general Pfaffian formulas} Throughout this section, let us assume that the graph $G$ is embedded in a non-orientable surface $\Sigma$. We first need to fix a representation of $\Sigma$ as follows. If $\Sigma$ is homeomorphic to the connected sum of an orientable surface $\Sigma_g$ of genus $g$ with the Klein bottle $\mathscr{K}$, we represent $\Sigma$ as a polygon $\mathcal{P}$ with sides identified following the word $a_1b_1a_1^{-1}b_1^{-1}\cdots a_gb_ga_g^{-1}b_g^{-1}aba^{-1}b$; if $\Sigma$ is homeomorphic to the connected sum of $\Sigma_g$ with the projective plane $\mathbb{R}P^2$, we represent $\Sigma$ as a polygon $\mathcal{P}$ with sides identified following the word $a_1b_1a_1^{-1}b_1^{-1}\cdots a_gb_ga_g^{-1}b_g^{-1}cc$. Note that we still have the identification $H_1(\Sigma;\mathbb{Z}_2)\equiv H_1(\Sigma,\mathcal{P};\mathbb{Z}_2)$ as in the orientable context. The fixed drawing of $G$ as well as of $G^T$ can be defined in the same way as in the orientable case.

Secondly, for further purposes, let us define a function $\omega:E(G)\to\Z_2$ as follows. If $\Sigma$ is represented by the word $a_1b_1a_1^{-1}b_1^{-1}\cdots a_gb_ga_g^{-1}b_g^{-1}aba^{-1}b$ (resp. $a_1b_1a_1^{-1}b_1^{-1}\cdots a_gb_ga_g^{-1}b_g^{-1}cc$), for each $e\in E(G)$, we define $\omega(e)$ as the parity of the number of intersections of $e$ with side $b$ (resp. with side $c$). Note that this function can be extended naturally to a function (that we still denote by $\omega$) on the edges of $G^T$.

Next we will define orientations of interest. Clearly one still can define good orientations by Definition $\ref{def: Kasteleyn}$ as before, and Proposition $\ref{pro: Kasteleyn}$ holds in this new representation of $\Sigma$. Let us pick a good orientation $K_0$ and show how to derive other particular orientations from it. Recall that if $\Sigma=\Sigma_g\# \mathscr{K}$ the word representing $\Sigma$ induces the basis $\mathcal{B}':=\{[a_1],\dots,[a_g],[b_1],\dots,[b_g],[a],[b]\}$ of $H_1(\Sigma;\mathbb{Z}_2)$. Then each element $\Delta$ of $ H_1(\Sigma;\mathbb{Z}_2)$ can be written as $$\Delta=\sum_{i=1}^g\epsilon_i[a_i]+\sum_{i=1}^g\epsilon'_i[b_i]+\alpha[a]+\beta[b],$$ and hence can be identified with $(\epsilon,\epsilon',\alpha,\beta)\in \mathbb{Z}_2^{b_1}$. For such $(\epsilon,\epsilon',\alpha,\beta)\in \mathbb{Z}_2^{b_1}$, let us denote by $K_{\epsilon,\epsilon',\alpha,\beta}$ the orientation obtained by inverting $K_0$ on every edge $e$ each time $e$ crosses a side $a_i$ of $\mathcal{P}$ (resp. $b_j$) with $\epsilon_i=1$ (resp. $\epsilon'_j=1$) as well as the side $a$ (resp. $b$) with $\alpha=1$ (resp. $\beta=1$). The set $\{K_{\epsilon,\epsilon',\alpha,\beta}: (\epsilon,\epsilon',\alpha,\beta)\in \mathbb{Z}_2^{2g+2}\}$ consists of orientations that we are interested in. Similarly if $\Sigma=\Sigma_g\# \mathbb{R}P^2$ we have the basis $\mathcal{B}'':=\{[a_1],\dots,[a_g],[b_1]\dots,[b_g],[c]\}$ of $H_1(\Sigma;\mathbb{Z}_2)$, and each $\Delta\in H_1(\Sigma;\mathbb{Z}_2)$ can be identified with $(\epsilon,\epsilon',\gamma)\in \mathbb{Z}_2^{2g+1}$. Then the orientations $K_{\epsilon,\epsilon',\gamma}$ of interest are obtained by inverting $K_0$ on every edge $e$ each time $e$ crosses a side $a_i$ of $\mathcal{P}$ (resp. $b_j$) with $\epsilon_i=1$ (resp. $\epsilon_j'=1$) as well as the side $c$ with $\gamma=1$.

Finally, we need a twisted version of adjacency matrices that can be defined as follows (cf.\cite{Cim09}). Let $(G,x)$ be an edge-weighted graph with $2n$ vertices labelled by $\{1,\dots,2n\}$.  Suppose that $K$ is an arbitrary orientation on the edges of $G$ and $w$ is a function on $E(G)$. The \emph{twisted adjacency matrix} of $G$ with respect to $K$ and $\omega$, denoted by $A^{K,\omega}(G)$, has entries given by
\begin{equation}\label{eq: twisted adj matrix}
a_{ij}=\sum_{e=(i,j)}\epsilon^K_{ij}(e)i^{\omega(e)}x(e),
\end{equation}
where the sum is taken over all the edge $e$ of $G$ connecting vertices $i,j$, and $\epsilon^K_{ij}(e)$ is defined as before.
\smallskip

We are now ready to state the hands-on version of the general Pfaffian formula.
\begin{theorem}
	\label{theo: second}
	Let $(G,x)$ be a weighted graph embedded in the non-orientable surface $\Sigma$. Then the Ising partition function of $G$ is given by
	$$Z_{\mathcal{I}}(G,x)=\frac{1}{2^{g+1}}\bigg| \sum_{(\epsilon,\epsilon',\alpha,\beta)\in \mathbb{Z}_2^{2g+2}} (-1)^{\sum_{i=1}^g \epsilon_i\epsilon_i'+\alpha\beta}(-i)^{\alpha}\text{Pf}(A^{K_{\epsilon,\epsilon',\alpha,\beta},\omega}(G^T,x^T))\bigg|$$ if $\Sigma=\Sigma_g\#\mathscr{K}$, and by 
	$$Z_{\mathcal{I}}(G,x)=\frac{1}{2^{g+1/2}}\bigg| \sum_{(\epsilon,\epsilon',\gamma)\in \mathbb{Z}_2^{2g+1}} (-1)^{\sum_{i=1}^g \epsilon_i\epsilon_i'}\,i^{\gamma}\,\text{Pf}(A^{K_{\epsilon,\epsilon',\gamma},\omega}(G^T,x^T))\bigg|$$ if $\Sigma=\Sigma_g\#\mathbb{R}P^2$. In the above formulas, $A^{K_{\epsilon,\epsilon',\alpha,\beta},\omega}(G^T,x^T)$ and $A^{K_{\epsilon,\epsilon',\gamma},\omega}(G^T,x^T)$ are twisted adjacency matrices of the terminal graph $(G^T,x^T)$ with respect to $\omega$ and the orientations $K_{\epsilon,\epsilon',\alpha,\beta}$, $K_{\epsilon,\epsilon',\gamma}$ respectively.
\end{theorem}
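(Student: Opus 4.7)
My plan is to mirror, step by step, the derivation of Theorem \ref{theo: first} from Theorem \ref{theo: main} in Subsection \ref{subsec: proof}, with the Arf invariant replaced by the Brown invariant of a $\mathbb{Z}_4$-valued quadratic enhancement of $\Sigma$. Concretely, I will first establish the ``theoretical'' Pfaffian formula (Theorem \ref{theo: general}), which expresses $Z_{\mathcal{I}}(G,x)$ as a weighted sum of $\Pf(A^{K_q,\omega}(G^T,x^T))$ over $q\in\mathcal{Q}(\Sigma)$ with weights $\exp(i\pi/4)^{-\text{Br}(q)}$, and then deduce Theorem \ref{theo: second} by expanding this sum over a concrete parametrization of $\mathcal{Q}(\Sigma)$ coming from the basis $\mathcal{B}'$ in the Klein-bottle case (or $\mathcal{B}''$ in the projective-plane case).

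For Theorem \ref{theo: general}, the key observation is that Proposition \ref{pro: main1} remains valid without change in the non-orientable setting, since its proof uses only the plane drawing of $G^T$ and the combinatorics of good orientations. The two new ingredients needed are: (a) a $\mathbb{Z}_4$-refinement of Proposition \ref{pro: outside intersections} relating $q_0([M])\pmod 4$ to the pair $(t(M),\omega(M))$ for a distinguished enhancement $q_0$, where $\omega(M):=\sum_{e\in M}\omega(e)$; and (b) a Brown-invariant analogue of Lemma \ref{lem: Arf}(i), namely $\frac{1}{\sqrt{|H|}}\sum_{q\in\mathcal{Q}(\Sigma)}\exp(i\pi/4)^{-\text{Br}(q)}\,i^{q(\alpha)}=1$ for every $\alpha\in H_1(\Sigma;\mathbb{Z}_2)$, which replaces the character-orthogonality step. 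The $\omega$-twist $i^{\omega(e)}$ built into $A^{K,\omega}$ is precisely what allows the product $\epsilon^{K_q}(D)\,i^{\omega(D)}$ to be rewritten as $\epsilon_0\, i^{q([D\Delta D_0])}$ after invoking Proposition \ref{pro: main1}. Combining (a), (b) with the map $D\mapsto G\setminus D_G$ and Lemma \ref{lem: dimer} then transcribes the proof of Theorem \ref{theo: main} verbatim to yield Theorem \ref{theo: general}.

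To derive Theorem \ref{theo: second}, I fix the distinguished quadratic enhancement $q_0$ that vanishes on each orientable basis element $[a_i],[b_i]$ and takes the unique admissible odd value in $\mathbb{Z}_4$ on each non-orientable generator ($[a],[b]$ in the Klein-bottle case, $[c]$ in the projective-plane case). Every $q\in\mathcal{Q}(\Sigma)$ can then be written uniquely as $q=q_0+2\Delta^*$ with $\Delta\in H_1(\Sigma;\mathbb{Z}_2)$, and writing $\Delta$ in the chosen basis matches $K_q$ with $K_{\epsilon,\epsilon',\alpha,\beta}$ or $K_{\epsilon,\epsilon',\gamma}$. The Brown-invariant shift $\text{Br}(q_0+2\Delta^*)-\text{Br}(q_0)$ is then computed using the $\mathbb{Z}_4$-analogue of Lemma \ref{lem: Arf}(ii): the relations $[a_i]\cdot[b_j]=\delta_{ij}$ produce the sign $(-1)^{\sum_i\epsilon_i\epsilon'_i}$, the extra relation $[a]\cdot[b]=1$ in $\mathscr{K}$ produces $(-1)^{\alpha\beta}$, and the odd values of $q_0$ on the non-orientable generators produce the residual fourth-root phases $(-i)^\alpha$ (resp.\ $i^\gamma$) that distinguish the two statements. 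Absorbing the $\Delta$-independent constant $\exp(i\pi/4)^{-\text{Br}(q_0)}$ into the overall sign and taking absolute values gives the stated formulas.

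The hardest part, I expect, is the bookkeeping in this $\mathbb{Z}_8$-graded Brown-invariant calculation: unlike the $\mathbb{Z}_2$-valued Arf invariant, where every coefficient is $\pm 1$, here fourth roots of unity arise from two distinct sources—the $\omega$-twist in $A^{K,\omega}$ and the odd values of $q_0$ on non-orientable basis elements—and ensuring that they recombine into exactly the stated phases, with no stray factor of $\pm 1$ or $\pm i$, requires careful tracking. Once (a) and (b) are in place, however, the remainder is a routine transcription of the orientable proof.
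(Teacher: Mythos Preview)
Your plan is essentially the paper's own: first establish Theorem \ref{theo: general} via Proposition \ref{pro: main1} (unchanged), a $\mathbb{Z}_4$-refinement of Proposition \ref{pro: outside intersections}, and the Brown-invariant analogue of Lemma \ref{lem: Arf}; then parametrize $\mathcal{Q}(\Sigma,\omega)$ as $\tilde q_0+2\Delta^*$ and compute $\tilde q_0(\Delta)$ on the explicit basis to extract the stated phases.

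One concrete slip to fix in your description of $\tilde q_0$. In the Klein-bottle word $aba^{-1}b$ the class $[b]$ is \emph{orientation-preserving} ($\omega([b])=0$), so $\tilde q_0([b])$ must be even; the paper takes $\tilde q_0([b])=0$ and $\tilde q_0([a])=3$. Only $[a]$ (not ``$[a],[b]$'') is the non-orientable generator here, and there is no ``unique admissible odd value'': both $1$ and $3$ are allowed, and the paper chooses $3$ for $[a]$ and $1$ for $[c]$. These specific choices are exactly what produce the phases $(-i)^{\alpha}$ and $i^{\gamma}$ you quote; had $\tilde q_0([b])$ been odd you would pick up an unwanted $(\pm i)^{\beta}$ factor.
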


Note that if $\Sigma=\Sigma_g$ is an orientable surface of genus $g$, we can omit the indices $\alpha,\beta$ (resp. the index $\gamma$) in the first formula (resp. in the second) as well as $\omega$ (since $\omega\equiv 0$ then). We can also replace $g+1$ (resp. $g+1/2$) by $g$ (which is natural since they all indicate $\dfrac{1}{2}\dim H_1(\Sigma;\mathbb{Z}_2)$, see the proof below), thus we get back the formula in Theorem $\ref{theo: first}$.

Before giving another version of Theorem \ref{theo: second}, let us consider a simple example to see how the Pfaffian formula works.
\begin{figure}
	\centering
	\includegraphics[height=120pt]{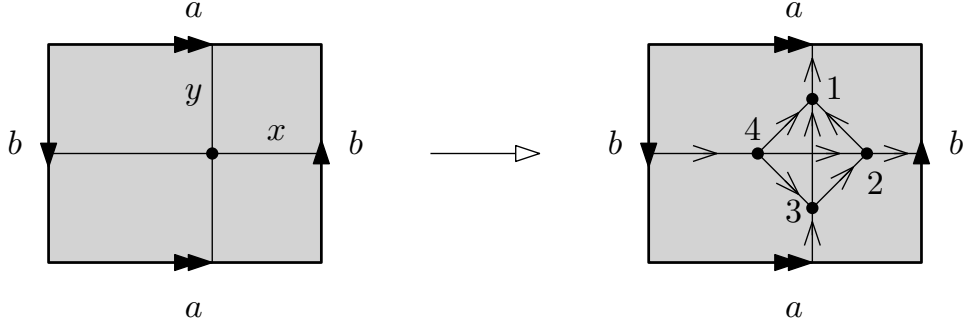}
	\caption{The graph $G\subset\K$ and its associated terminal graph $G^T$ with a good orientation $K$.\label{fig: exampleKlein1}}
\end{figure}
\begin{example}
	Let $G$ be the $1\times 1$ square lattice embedded in the Klein bottle $\mathscr{K}$ with horizontal weight $x$ and vertical weight $y$. Label the vertices of $G^T$ as in Figure \ref{fig: exampleKlein1}, and choose $K$ a good orientation. By definition we have $$A^{K}(G^T,x^T)=\begin{pmatrix} 0 & -\sqrt{xy} & 1-y & -\sqrt{xy}\\
	\sqrt{xy} & 0 & -\sqrt{xy} & i-x \\
	-1+y & \sqrt{xy} & 0 & -\sqrt{xy} \\
	\sqrt{xy} & -i+x & \sqrt{xy} & 0
	\end{pmatrix}$$ whose Pfaffian is $\Pf(A^{K}(G^T,x^T))=\Pf(A^{K_{0,0}}(G^T,x^T))=xy-i+x+iy$. Similarly one can find $$\Pf (A^{K_{1,0}}(G^T,x^T))=xy+i-x+iy,$$ $$\Pf (A^{K_{0,1}}(G^T,x^T))=xy+i+x-iy,$$ 
	$$\Pf (A^{K_{1,1}}(G^T,x^T))=xy-i-x-iy.$$
	 Then, dropping $\omega$ for simplicity, the formula in Theorem \ref{theo: second} gives
	\begin{multline*}
	Z_{\mathcal{I}}(G,x)=\frac{1}{2}\big|\Pf (A^{K_{0,0}}(G^T,x^T))+\Pf (A^{K_{0,1}}(G^T,x^T))\\-i\,\Pf (A^{K_{1,0}}(G^T,x^T))+i\,\Pf (A^{K_{1,1}}(G^T,x^T)) \big|,\end{multline*} leading to $Z_{\mathcal{I}}(G,x)=xy+x+y+1$ as we found before.
\end{example}
\smallskip

Let us now continue with the alternative version of the general Pfaffian formula. To do so, we need the terminology of  quadratic enhancements and Brown invariant that we now briefly recall. Readers are referred to \cite{KirTay90} for more details.

Suppose that $(H,\cdot)$ is a finite dimensional vector space over $\mathbb{Z}_2$ together with a symmetric bilinear form $(\cdot):H\times H\rightarrow \mathbb{Z}_2$. Given a linear function $\omega:H\rightarrow \mathbb{Z}_2$, a function $q: H\rightarrow \mathbb{Z}_4$ is called a $\emph{quadratic enhancement}$ \index{quadratic enhancement} of $(H,\cdot)$ with respect to $\omega$ if for every $x,y\in H$ we have $q(x)-\omega(x)\in 2\Z_2$, and $$q(x+y)=q(x)+q(y)+2(x\cdot y),$$ where $2:\mathbb{Z}_2\rightarrow \mathbb{Z}_4$ denotes the inclusion homomorphism. Note that a quadratic enhancement is completely determined by its values on a basis of $H$, and these values can be chosen freely according to the condition $q(x)-\omega(x)\in 2\Z_2$. If the bilinear form $(\cdot)$ is non-degenerate, the \emph{Brown invariant} $\text{Br}(q)$  of a quadratic enhancement $q$ is an integer modulo 8 defined by $$\exp\left(\frac{i\pi}{4}\right)^{\text{Br}(q)}=\frac{1}{\sqrt{|H|}}\sum_{x\in H}i^{q(x)}.$$ 
\smallskip

Coming back to our context, the function $\omega:E(G)\to\Z_2$ defined before induces a $\Z_2$-valued function on the set of cycles of $G$ (that is, if $C$ is a cycle then we can define $\omega(C):=\sum_{e\in C}\omega(e)$). Hence it turns out that $\omega$ induces a linear $\Z_2$-valued function on $H_1(\Sigma;\Z_2)$ which represents the first Stiefel$-$Whitney class of $\Sigma$ (cf. \cite[Section 4]{Cim09}). This class characterises the orientability of $\Sigma$ and it is of course identically $0$ if $\Sigma$ is orientable. By abuse of notation, we still denote this class by $\omega$ and focus on quadratic enhancements on $(H_1(\Sigma;\Z_2),\cdot)$ with respect to $\omega$. Denote the set of all such enhancements by $\mathcal{Q}(\Sigma,\omega)$.

We now define a particular quadratic enhancement $\tilde{q}_0\in \mathcal{Q}(\Sigma,\omega)$ as follows.  If $\Sigma=\Sigma_g\#\K$, $\tilde{q}_0$ is given by its values on the basis $\mathcal{B}'$ of $H_1(\Sigma;\Z_2)$ by $$\tilde{q}_0([a_i])=\tilde{q}_0([b_i])=\tilde{q}_0([b])=0\quad \text{and}\quad \tilde{q}_0([a])=3.$$ If $\Sigma=\Sigma_g\#\R P^2$, we define $\tilde{q}_0$  by setting $$\tilde{q}_0([a_i])=\tilde{q}_0([b_i])=0\quad \text{and}\quad \tilde{q}_0([c])=1$$ on the basis $\mathcal{B}''$. Then by fixing a good orientation $K_0$, for each quadratic enhancement $q\in \mathcal{Q}(\Sigma,\omega)$ we denote by $K_q$ the orientation obtained by inverting $K_0$ on every edge $e$ such that $q([e])\neq \tilde{q}_0([e])$. We now can state the general version of the Pfaffian formula.
\begin{theorem}
	\label{theo: general}
	Let $(G,x)$ be a weighted graph embedded in a possibly non-orientable surface $\Sigma$. Then the Ising partition function on $G$ is given by $$Z_{\mathcal{I}}(G,x)=\frac{\epsilon_0}{2^{b_1/2}}\sum_{q\in \mathcal{Q}(\Sigma,\omega)}\exp \left(\frac{i\pi}{4}\right)^{-\text{Br}(q)} \text{Pf}(A^{K_q,\omega}(G^T,x^T)),$$ where $\epsilon_0=\pm 1$ is a constant and $b_1:=\dim H_1(\Sigma;\mathbb{Z}_2)$. In this formula, $A^{K_q,\omega}(G^T,x^T)$ is the twisted adjacency matrix of the terminal graph $(G^T,x^T)$ with respect to the orientation $K_q$ and the standard representation $\omega$ of $w_1$.
\end{theorem}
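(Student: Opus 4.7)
The plan is to mirror the proof of Theorem \ref{theo: main}, replacing quadratic forms and the Arf invariant by quadratic enhancements and the Brown invariant, and adjacency matrices by their twisted analogs. First I would establish the Brown-invariant analog of Lemma \ref{lem: Arf}: the orthogonality identity
$$\frac{1}{\sqrt{|H|}}\sum_{q\in\mathcal{Q}(\Sigma,\omega)} \exp(i\pi/4)^{-\text{Br}(q)}\,i^{q(\alpha)} = 1$$
for every $\alpha \in H := H_1(\Sigma;\Z_2)$, together with a transformation rule of the form $\text{Br}(q+2\ell_\Delta)=\text{Br}(q)-2q(\Delta) \pmod 8$ (where $\ell_\Delta(x)=\Delta\cdot x$). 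Both follow from the definition $\exp(i\pi\text{Br}(q)/4)=|H|^{-1/2}\sum_x i^{q(x)}$ by substitutions parallel to those used in Lemma \ref{lem: Arf}.

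Setting $Z^q(G,x):=\sum_\alpha i^{q(\alpha)}\sum_{[P]=\alpha}x(P)$, the orthogonality identity immediately gives $Z_\mathcal{I}(G,x) = 2^{-b_1/2}\sum_q \exp(i\pi/4)^{-\text{Br}(q)} Z^q(G,x)$. The analog of Lemma \ref{lem: Z} then reads
$$Z^q(G,x) = \sum_{D\in\mathcal{D}(G^T)} i^{q([D\Delta D_0])}\,(-1)^{\trong(D)}\,x^T(D),$$
proved exactly as in the orientable case, since Lemma \ref{lem: dimer} is purely local. Moreover, Proposition \ref{pro: main1} remains valid in the non-orientable setting: its proof only uses the planar drawing of $G^T$ inside $\mathcal{P}$ and the combinatorial definition of a good orientation, neither of which depends on the orientability of $\Sigma$. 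Combining these with a careful expansion of the twisted Pfaffian (and noting that Lemma \ref{lem: intersection} still holds, since planar closed loops cross an even number of times) should lead, after a $q$-dependent phase and the shift $q\mapsto q+2[D_0]^*$, to an equality of the form $\text{Pf}(A^{K_q,\omega}(G^T,x^T)) = \epsilon_0\,i^{q([D_0])}\,Z^{q+2[D_0]^*}(G,x)$. The shift is then absorbed via the transformation rule for $\text{Br}$, exactly as in the orientable proof.

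The main obstacle is a $\Z_4$-valued refinement of Proposition \ref{pro: outside intersections}, namely
$$\tilde{q}_0([M]) \equiv 2\,t(M) + \omega(M) \pmod 4$$
for every collection $M$ of outside edges. Unlike the clean $\Z_2$ identity of the orientable case, one must carefully track the $\Z_4$-valued contributions of outside edges that cross the non-orientable sides $a,b$ (or $c$) of $\mathcal{P}$. The specific choices $\tilde{q}_0([a])=3$ and $\tilde{q}_0([c])=1$ are exactly what enforce compatibility with $\omega$, and the factor $i^{\omega(D)}$ appearing in the twisted matrix $A^{K_q,\omega}$ (cf.\ (\ref{eq: twisted adj matrix})) is precisely what is needed to match $\epsilon^{K_q}(D)\,i^{\omega(D)}$ with $\epsilon_0\,(-1)^{\trong(D)}\,i^{q'([D\Delta D_0])}$ for the shifted enhancement $q' = q + 2[D_0]^*$, closing the argument.
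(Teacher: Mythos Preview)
Your proposal is correct and follows essentially the same route as the paper: the Brown-invariant orthogonality and shift identities you describe are exactly Lemma~\ref{lem: Brown}, your $Z^q$ identity is Lemma~\ref{lem: general Z}, and the ``main obstacle'' you single out is precisely Proposition~\ref{pro: general intersection}, after which the chain of equalities proceeds just as you outline. One small remark: the exact phase in your claimed relation $\text{Pf}(A^{K_q,\omega})=\epsilon_0\,i^{q([D_0])}Z^{q+2[D_0]^*}$ is slightly off (compare \eqref{eq: general Z^q}, which goes in the other direction), but the discrepancy is a constant absorbed into $\epsilon_0$ and does not affect the argument.
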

Observe that if $\Sigma$ is orientable then $\omega\equiv 0$ and a quadratic enhancement $q\in\mathcal{Q}(\Sigma,\omega)$ induces a quadratic form $\dfrac{q}{2}$ whose Arf invariant is equal to $\dfrac{1}{4}\,\text{Br}(q)$. So the formula in Theorem \ref{theo: general} restricted to orientable surfaces is the one stated in Theorem $\ref{theo: main}$.
\subsection{Proof of Theorems $\ref{theo: second}$ and $\ref{theo: general}$} The proofs of Theorems $\ref{theo: second}$ and $\ref{theo: general}$ are similar to those of Theorems $\ref{theo: first}$ and $\ref{theo: main}$, with some slight differences. However, our key result in the previous section, Proposition $\ref{pro: main1}$, is still valid in the non-orientable context. As in the previous section, let us go through new terminology and definitions once again to give more details and results that are needed, before giving our proofs.

First of all, Proposition $\ref{pro: outside intersections}$ can be generalised to the following one.
\begin{proposition} \label{pro: general intersection}
	For every $M\subset E(G)$, the number $t(M)$ of its self-intersections in the plane has the same parity as the number $\dfrac{1}{2}(\tilde{q}_0([M])-\omega(M))$ where $[M]\in H_1(\Sigma,\mathcal{P};\mathbb{Z}_2)$ is the relative homology class of $M$.
\end{proposition}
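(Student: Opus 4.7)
The plan is to imitate the proof of Proposition \ref{pro: outside intersections}, with $q_0$ replaced by the quadratic enhancement $\tilde{q}_0$ and careful tracking of the $\omega$-correction throughout. First I would decompose $[M]=\sum_k [e_k]$ along the edges of $M$ and apply the enhancement identity $\tilde{q}_0(x+y)=\tilde{q}_0(x)+\tilde{q}_0(y)+2(x\cdot y)$ iteratively in $\Z_4$ to obtain
$$\tilde{q}_0([M])=\sum_k\tilde{q}_0([e_k])+2\sum_{k<l}[e_k]\cdot[e_l].$$
Since $\omega$ extends linearly to homology classes, $\omega(M)=\sum_k\omega(e_k)$ in $\Z_2\subset\Z_4$. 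Subtracting, and using that each $\tilde{q}_0([e_k])-\omega(e_k)$ already lies in $2\Z_2\subset\Z_4$ by the very definition of a quadratic enhancement, division by $2$ yields the $\Z_2$-identity
$$\tfrac{1}{2}\bigl(\tilde{q}_0([M])-\omega(M)\bigr)\equiv\sum_k \tfrac{1}{2}\bigl(\tilde{q}_0([e_k])-\omega(e_k)\bigr)+\sum_{k<l}[e_k]\cdot[e_l]\pmod{2}.$$

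Next I would invoke Lemma \ref{lem: intersection}, whose proof is a purely planar statement about pairs of arcs in the plane and therefore carries over verbatim to the non-orientable setting, to replace the homological intersection $[e_k]\cdot[e_l]$ by the geometric intersection $e_k\cdot e_l$ modulo $2$. Combined with the standard decomposition $t(M)=\sum_k t(e_k)+\sum_{k<l}e_k\cdot e_l$, the proposition then reduces to the single-edge base case
$$t(e)\equiv\tfrac{1}{2}\bigl(\tilde{q}_0([e])-\omega(e)\bigr)\pmod{2}\qquad\text{for every }e\in E(G).$$

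For the base case I would expand $[e]$ in the fixed basis $\mathcal{B}'$ or $\mathcal{B}''$, depending on whether $\Sigma$ contains a Klein-bottle summand $\K$ or a projective-plane summand $\R P^2$, and compute $\tilde{q}_0([e])$ via the enhancement identity, the intersection numbers among the basis elements (with the extra pairings $[a]\cdot[a]=0$, $[a]\cdot[b]=1$, $[b]\cdot[b]=1$ in the first case, and $[c]\cdot[c]=1$ in the second), and the prescribed values $\tilde{q}_0([a_i])=\tilde{q}_0([b_i])=\tilde{q}_0([b])=0$, $\tilde{q}_0([a])=3$ (respectively $\tilde{q}_0([c])=1$). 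The linear contribution of $\omega$ would then cancel precisely the degree-one part of this expansion, leaving $\tfrac{1}{2}(\tilde{q}_0([e])-\omega(e))$ as a purely quadratic expression in the basis coefficients of $[e]$; on the geometric side, $t(e)$ can be counted vertex-by-vertex of $\mathcal{P}$ and strip-by-strip in the fixed drawing, and I expect the two expressions to match modulo $2$.

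The main obstacle lies in this base case. In the orientable argument the strips of $\mathcal{P}$ are all untwisted, so a single edge $e$ only self-crosses at polygon vertices where two sides that $e$ meets come together, and one reads off $t(e)\equiv \sum_i\alpha_i^e\beta_i^e\pmod{2}$ directly. In the present setting the twisted strips attached to the $aba^{-1}b$ or $cc$ blocks can force a single edge to pick up extra self-crossings that must be balanced against the odd-valued enhancement entries $\tilde{q}_0([a])=3$, $\tilde{q}_0([c])=1$ and the corresponding $\omega$-shift. Checking that the parity of these extra self-crossings in the fixed drawing matches exactly the corresponding contribution to $\tfrac{1}{2}(\tilde{q}_0([e])-\omega(e))$ — so that the cancellation is as clean as in the orientable case — is the heart of the argument.
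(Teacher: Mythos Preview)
Your approach is exactly the paper's: decompose $\tilde q_0([M])$ via the quadratic-enhancement identity, invoke Lemma~\ref{lem: intersection} (which indeed carries over verbatim) to convert the homological cross terms $[e_k]\cdot[e_l]$ into geometric ones, and reduce to the single-edge base case, just as in the proof of Proposition~\ref{pro: outside intersections}; the paper's own proof says no more than this. One minor slip that does not affect the argument: for the $aba^{-1}b$ block the self-pairings are actually $[a]\cdot[a]=1$ and $[b]\cdot[b]=0$ (this is forced by the Wu relation $x\cdot x=\omega(x)$ together with $\tilde q_0([a])=3\equiv 1$, $\tilde q_0([b])=0\equiv 0\pmod 2$), but since the basis coefficients lie in $\{0,1\}$ only the cross-pairing $[a]\cdot[b]=1$ ever enters the computation of $\tilde q_0(\alpha[a]+\beta[b])$, so your plan goes through unchanged.
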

\begin{proof}
	Note that Lemma $\ref{lem: intersection}$ is still valid when $\Sigma$ is non-orientable. Using the fact that $[a_i]\cdot[b_j]=\delta_{ij}$ and $[a_i]\cdot[a_j]=[b_i]\cdot[b_j]=0$ as before, together with $[a]\cdot[b]=1$ and $[a]\cdot [a_i]=[a]\cdot[b_i]=[b]\cdot[a_i]=[b]\cdot [b_i]=0$ if $\Sigma=\Sigma_g\#\mathcal{K}$, while $[c]\cdot[c]=1$ and $[c]\cdot[a_i]=[c]\cdot[b_i]=0$ if $\Sigma=\Sigma_g\#\mathbb{R}P^2$, the proof of Proposition $\ref{pro: outside intersections}$ can be extended.
\end{proof}
We will also need the two following properties of Brown invariants.
\begin{lemma}\label{lem: Brown}
	Given $(H,\cdot,w)$ a finite dimensional vector space over $\mathbb{Z}_2$ together with a non-degenerate bilinear form $(\cdot)$ and a fixed linear function $w:H\rightarrow \mathbb{Z}_2$,  we have:
	\begin{enumerate}
		\item[(i)] The equality $\frac{1}{\sqrt{|H|}}\sum\limits_{q\in \mathcal{Q}(H,\cdot,w)}\exp(i\pi/4)^{-\text{Br}(q)}i^{q(x)}=1$ holds for every $x\in H$, where the sum is taken over all the set of quadratic enhancements;
		\item[(ii)] If $q_1,q_2$ are quadratic enhancements on $(H,\cdot,\omega)$, then $\text{Br}(q_1)-\text{Br}(q_2)=2q_1(\Delta)=2q_2(\Delta)$, where $\Delta\in H$ satisfies $q_1(x)+2(\Delta \cdot x)=q_2(x)$ for every $x\in H$.
	\end{enumerate}
\end{lemma}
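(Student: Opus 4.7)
The plan is to establish part (ii) first via a Gauss-sum manipulation, then deduce part (i) by Fourier inversion on $H$.

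\textbf{Part (ii).} The first observation is that the difference $q_1-q_2$ is a $\Z_4$-valued \emph{linear} map. Indeed, subtracting the quadratic-enhancement identities for $q_1$ and $q_2$ on $x+y$ cancels the bilinear correction $2(x\cdot y)$, and the values lie in $2\Z_2\subset\Z_4$ because both $q_i$ reduce modulo $2$ to the same $\omega$. Non-degeneracy of $(\cdot)$ then yields a unique $\Delta\in H$ with $q_2(x)-q_1(x)=2(\Delta\cdot x)$ for every $x$. To get the Brown invariant identity, I would plug this relation into the Gauss sum
$$\sqrt{|H|}\,\exp(i\pi/4)^{\text{Br}(q_2)}=\sum_{x\in H}i^{q_2(x)}=\sum_{x\in H}i^{q_1(x)}(-1)^{\Delta\cdot x},$$
and then apply the bijective change of variables $x\mapsto x+\Delta$. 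Expanding $q_1(x+\Delta)$ via the quadratic-enhancement axiom and using that the cross terms $(-1)^{x\cdot\Delta}\cdot(-1)^{\Delta\cdot x}$ cancel by symmetry of $(\cdot)$ produces $\text{Br}(q_2)-\text{Br}(q_1)\equiv 2q_2(\Delta)\pmod 8$; the factor $(-1)^{\Delta\cdot\Delta}$ is absorbed via the relation $q_2(\Delta)=q_1(\Delta)+2(\Delta\cdot\Delta)$. The symmetric statement involving $q_1(\Delta)$ follows by interchanging the roles of $q_1$ and $q_2$ (using $-2\equiv 2\pmod 4$), and the compatibility of the two expressions rests on the Wu-type identity $\omega(\Delta)=\Delta\cdot\Delta$, which is forced by evaluating the quadratic-enhancement axiom at $y=x$ together with $q(x+x)=q(0)=0$.

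\textbf{Part (i).} Since $\text{Br}(q)\in\Z_8$, I would first rewrite $\exp(i\pi/4)^{-\text{Br}(q)}=\overline{\exp(i\pi/4)^{\text{Br}(q)}}=\tfrac{1}{\sqrt{|H|}}\sum_y i^{-q(y)}$, substitute this into the left-hand side, and swap the order of summation to obtain
$$\frac{1}{\sqrt{|H|}}\sum_{q}\exp(i\pi/4)^{-\text{Br}(q)}\,i^{q(x)}=\frac{1}{|H|}\sum_{y\in H}\sum_{q\in\mathcal{Q}(H,\cdot,w)}i^{q(x)-q(y)}.$$
Next, I would fix a base enhancement $q_0$ and use the bijection from part (ii) to parameterise $\mathcal{Q}(H,\cdot,w)$ by $H$: each $q$ corresponds to a unique $\Delta\in H$ with $q=q_0+2(\Delta\cdot\,\cdot\,)$. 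Under this parameterisation, $q(x)-q(y)=q_0(x)-q_0(y)+2(\Delta\cdot(x+y))$ in $\Z_4$, so the inner sum factorises as $i^{q_0(x)-q_0(y)}\sum_{\Delta\in H}(-1)^{\Delta\cdot(x+y)}$. By orthogonality of characters of the finite abelian group $H$ — which uses precisely the non-degeneracy of $(\cdot)$ — the $\Delta$-sum equals $|H|$ when $x=y$ and vanishes otherwise, so collapsing the outer sum over $y$ produces the claimed value $1$.

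\textbf{Main obstacle.} The substantive care lies entirely in part (ii), where one must faithfully track how the $\Z_4$-valued quadratic enhancement and the $\Z_2$-valued bilinear form interact when lifted to the $\Z_8$-graded world of eighth roots of unity. The essential input making the two alternative expressions $2q_1(\Delta)$ and $2q_2(\Delta)$ agree is the Wu relation $\omega(\Delta)=\Delta\cdot\Delta$, without which the bookkeeping would fail. Once (ii) is in place, (i) is a clean Fourier inversion and presents no serious difficulty.
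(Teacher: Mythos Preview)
Your argument is correct and close in spirit to the paper's. For part~(ii) the Gauss-sum manipulation via $x\mapsto x+\Delta$ is essentially the same, though the paper orders the substitution so that no stray $(-1)^{\Delta\cdot\Delta}$ ever appears: it writes $\sum_x i^{q_1(x)}=\sum_x i^{q_1(x+\Delta)}=i^{q_1(\Delta)}\sum_x i^{q_2(x)}$ using $q_1+2(\Delta\cdot\,-)=q_2$ directly, obtaining $\text{Br}(q_1)-\text{Br}(q_2)=2q_1(\Delta)$ in one stroke. A word of caution on the last clause of~(ii): the Wu identity $\omega(\Delta)=\Delta\cdot\Delta$ does \emph{not} by itself force $2q_1(\Delta)=2q_2(\Delta)$ in $\Z_8$ (take $H=\Z_2$ with the nonzero form, $q_1(1)=1$, $q_2(1)=3$, $\Delta=1$: then $2q_1(\Delta)=2\neq 6=2q_2(\Delta)$). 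The paper asserts this equality without justification and your appeal to Wu does not close the gap either; fortunately only the relation $\text{Br}(q_1)-\text{Br}(q_2)=2q_1(\Delta)$ is actually used later.

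For part~(i) the routes genuinely differ. The paper parametrises $q_z=q_*+2(z\cdot\,-)$, feeds in the relation $\text{Br}(q_z)=\text{Br}(q_*)-2q_*(z)$ just established, and collapses everything to the single Gauss sum $\exp(i\pi/4)^{-\text{Br}(q_*)}\sum_z i^{q_*(x+z)}=\sqrt{|H|}$. You instead expand $\exp(i\pi/4)^{-\text{Br}(q)}$ as a conjugate Gauss sum over a second variable $y$ and kill the resulting double sum by orthogonality of characters. Both arguments are valid; the paper's is a shade more economical (one sum over $H$ rather than two), while yours makes the underlying Fourier inversion explicit and does not require computing $\text{Br}(q_z)$ beforehand.
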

Recall that for $\omega\equiv 0$, the quadratic form $q'=\dfrac{q}{2}$ induced from $q$ has $\text{Arf}(q')=\dfrac{1}{4}\,\text{Br}(q)$. Hence Lemma \ref{lem: Brown} above implies Lemma \ref{lem: Arf} stated before.
\begin{proof}[Proof of Lemma \ref{lem: Brown}] For the first part, let us fix a quadratic enhancement $q_*$. For each $z\in H$, 
	we define a function $q_z:H\rightarrow \mathbb{Z}_4$ by $q_z(x):=q_*(x)+2(z\cdot x)$. It is clear by definition that $q_z$ is also a quadratic enhancement.
	With this notation we have
	$$\sum_{x\in H} i^{q_z(x)} = \sum_{x\in H} i^{q_*(x)+2(z\cdot x)}
	= \sum_{x\in H} i^{q_*(x+z)-q_*(z)}
	= i^{-q_*(z)}\sum_{x\in H} i^{q_*(x+z)}
	= i^{-q_*(z)}\sum_{x\in H} i^{q_*(x)},$$
	which means that $$\exp\left(\frac{i\pi}{4}\right)^{\text{Br}(q_z)}=i^{-q_*(z)}\exp\left(\frac{i\pi}{4}\right)^{\text{Br}(q_*)},$$ 
	or equivalently, $\text{Br}(q_z)+2q_*(z)=\text{Br}(q_*)$. Therefore, observing that $z \mapsto q_z$ is a bijection between $H$ and the set of quadratic enhancements on it, we get
	\begin{align*}
	\sum_{q\in \mathcal{Q}(H,\cdot,\omega)}\exp\left(\frac{i\pi}{4}\right)^{-\text{Br}(q)}i^{q(x)}
	&= \sum_{z\in H} \exp\left(\frac{i\pi}{4}\right)^{-\text{Br}(q_z)}i^{q_z(x)}
	= \sum_{z\in H} \exp\left(\frac{i\pi}{4}\right)^{-\text{Br}(q_*)+2q_*(z)}i^{q_z(x)}\\
	&= \exp\left(\frac{i\pi}{4}\right)^{-\text{Br}(q_*)}\sum_{z\in H}i^{q_z(x)+q_*(z)}= \exp\left(\frac{i\pi}{4}\right)^{-\text{Br}(q_*)}\sum_{z\in H}i^{q_*(x+z)}\\
	&= \exp\left(\frac{i\pi}{4}\right)^{-\text{Br}(q_*)}\sum_{z\in H}i^{q_*(z)}=\sqrt{|H|}.
	\end{align*}
	This concludes the proof of part $(i)$. Let us move on to the second part. By definition $q_2-q_1$ is a linear form taking even values, so there exists an element $\Delta\in H$ such that $q_2(x)-q_1(x)=2\,\Delta\cdot x$ for every $x\in H$. Thus we get $2q_1(\Delta)=2q_2(\Delta)$. Now we can write
	\begin{eqnarray*}
		\exp\left(\frac{i\pi}{4}\right)^{\text{Br}(q_1)} &=& \frac{1}{2^{b_1/2}} \sum_{x\in H}i^{q_1(x)}
		= \frac{1}{2^{b_1/2}} \sum_{x\in H}i^{q_1(x+\Delta)}= \frac{1}{2^{b_1/2}} \sum_{x\in H}i^{q_1(x)+q_1(\Delta)+2(\Delta\cdot x)}\\ &=& \frac{1}{2^{b_1/2}} \sum_{x\in H}i^{q_2(x)+q_1(\Delta)}= i^{q_1(\Delta)}\exp\left(\frac{i\pi}{4}\right)^{\text{Br}(q_2)},
	\end{eqnarray*}
	which implies that $\text{Br}(q_1)-\text{Br}(q_2)=2q_1(\Delta)$. This completes the proof of part $(ii)$.
\end{proof}
 
We continue with the transformation of the twisted partition function of the Ising model on $G$ to that of the dimer model on $G^T$. Lemma \ref{lem: general Z} below is a generalisation of Lemma $\ref{lem: Z}$.
\begin{lemma}\label{lem: general Z}
	Setting $Z^q(G,x):= \sum\limits_{\alpha\in H_1(\Sigma;\mathbb{Z}_2)}i^{q(\alpha)}\sum\limits_{[P]=\alpha}x(P)$ for each quadratic enhancement $q\in \mathcal{Q}(\Sigma,\omega)$, we have
	$$ Z^q(G,x)= \sum_{D\in \mathcal{D}(G^T)}i^{q([D\Delta D_0])+2\trong(D)}x^{T}(D),$$ where $D_0$ is the standard dimer configuration of $G^T$.
\end{lemma}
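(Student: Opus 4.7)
The plan is to mimic the proof of Lemma \ref{lem: Z} almost verbatim, replacing the character $(-1)$ with $i$ and absorbing the new exponent $2\trong(D)$ via the identity $i^{2k}=(-1)^k$, so that Lemma \ref{lem: dimer} can be applied without any change.

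First, I reuse the same map $\varphi\colon\mathcal{D}(G^T)\to Z_1(G;\Z_2)$ defined by $\varphi(D)=G\setminus D_G$, where $D_G$ is the subgraph of $G$ consisting of those edges of $G$ whose corresponding long edge lies in $D$. The three facts established in the proof of Lemma \ref{lem: Z} need nothing about orientability: (a) $G\setminus D_G$ is an even subgraph of $G$, so $\varphi$ is well-defined; (b) $x(\varphi(D))=x^T(D)$; and (c) for fixed $P\in Z_1(G;\Z_2)$, $\varphi^{-1}(P)=\prod_{v\in V}\mathcal{D}_{2n(v)}$ where $2n(v)$ is the degree of $v$ in $P$.

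Next, starting from the definition of $Z^q$, I insert the factor $\prod_{v\in V}\sum_{D_v\in\mathcal{D}_{2n(v)}}(-1)^{\trong(D_v)}=1$ supplied by Lemma \ref{lem: dimer} (this equality lives in $\Z$, so it is unaffected by the switch to $\Z_4$-valued exponents). Using $\trong(D)=\sum_v \trong(D_v)$ for $D\in\varphi^{-1}(P)$ together with $(-1)^{\trong(D)}=i^{2\trong(D)}$, the chain of equalities from the orientable case gives
\begin{equation*}
Z^q(G,x)=\sum_{\alpha\in H_1(\Sigma;\Z_2)}i^{q(\alpha)}\sum_{[P]=\alpha}\sum_{D\in\varphi^{-1}(P)}i^{2\trong(D)}x^{T}(D)=\sum_{D\in\mathcal{D}(G^T)}i^{q([G\setminus D_G])+2\trong(D)}x^{T}(D).
\end{equation*}

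Finally, I invoke the homological identity $[G\setminus D_G]=[D\Delta D_0]$ in $H_1(\Sigma,\mathcal{P};\Z_2)$, which was the concluding step in the orientable proof and which again does not depend on orientability. Since $q$ is a $\Z_4$-valued function on $H_1(\Sigma;\Z_2)\equiv H_1(\Sigma,\mathcal{P};\Z_2)$, its value depends only on the $\Z_2$-homology class of its argument, and substituting yields the desired formula. The only potential subtlety worth double-checking is precisely this last point: although $q$ takes values in $\Z_4$, it is still a function of a $\Z_2$-homology class, so the equality $q([G\setminus D_G])=q([D\Delta D_0])\in\Z_4$ follows from the $\Z_2$-level homology — no new argument is required.
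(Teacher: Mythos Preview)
Your proof is correct and follows essentially the same approach as the paper: reuse the map $\varphi(D)=G\setminus D_G$ and the three facts from Lemma~\ref{lem: Z}, insert the factor $1=\prod_v\sum_{D_v}(-1)^{\trong(D_v)}$ from Lemma~\ref{lem: dimer}, rewrite $(-1)^{\trong(D)}=i^{2\trong(D)}$, and finish with $[G\setminus D_G]=[D\Delta D_0]$. The paper's own proof is in fact even terser, simply saying ``using the same argument and notations as in Lemma~\ref{lem: Z}'' before writing down the chain of equalities; your additional remark that $q$, though $\Z_4$-valued, is still a function of a $\Z_2$-homology class is a helpful clarification but not a new idea.
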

\begin{proof}
	Using the same argument and notations as in Lemma \ref{lem: Z} one can write
	\begin{align*}
		Z^q(G,x)&= \sum_{\alpha\in H_1(\Sigma;\mathbb{Z}_2)}i^{q(\alpha)}\sum_{[P]=\alpha}x(P) 
		=\sum_{\alpha\in H_1(\Sigma;\mathbb{Z}_2)}i^{q(\alpha)}\sum_{[P]=\alpha} \big( \prod_{v\in V}\sum_{D_v\in \mathcal{D}_{2n(v)}}(-1)^{\trong(D_v)} \big)x(P) \\
		&=\sum_{\alpha\in H_1(\Sigma;\mathbb{Z}_2)}i^{q(\alpha)}\sum_{[P]=\alpha}\sum_{D\in \varphi^{-1}(P)}(-1)^{\trong(D)}x^{T}(D)
		= \sum_{D\in \mathcal{D}(G^T)}i^{q([G\setminus D_G])+2\trong(D)}x^{T}(D)\\
		&= \sum_{D\in \mathcal{D}(G^T)}i^{q([D\Delta D_0])+2\trong(D)}x^{T}(D),
	\end{align*}
	which concludes our proof.
\end{proof}
Finally before going to the proofs of Theorem $\ref{theo: second}$ and $\ref{theo: general}$, let us recall that the twisted adjacency matrix $A^{K,\omega}(G,x)$ defined by Equation (\ref{eq: twisted adj matrix}) satisfies the following equation  \begin{equation} \label{eq: general Pfaffian}
\text{Pf}(A^{K,\omega}(G,x))=\sum\limits
_{D\in \mathcal{D}(G)} \epsilon^K(D)i^{\omega(D)}x(D),
\end{equation} where $\epsilon^K(D)$ is given by Equation (\ref{eq: sign}) and $\omega(D)=\sum_{e\in D}\omega(e)$.
\begin{proof}[Proof of Theorem $\ref{theo: general}$]
	First of all, as in the proof of Theorem $\ref{theo: main}$, using Equation ($\ref{eq: sign}$) one can write $ \epsilon^{K_q}(D)=\epsilon^{K_0}(D)(-1)^{|\{e\in D: q([e])\neq \tilde{q}_0([e])\}}$. Note that $q([e])$ and $\tilde{q}_0([e])$ have the same parity (both have the parity of $\omega(e)$), thus the fact that $q([e])\neq \tilde{q}_0([e])$ is equivalent to $q([e])-\tilde{q}_0([e])=2\in \mathbb{Z}_4$. Hence by definition of quadratic enhancements we can write$$ q([D])-\tilde{q}_0([D])=\sum_{e\in D}q([e])-\sum_{e\in D}\tilde{q}_0([e])=2 |\{e\in D: q([e])\neq \tilde{q}_0([e])\}|\in \Z_4,$$ which implies that $ \epsilon^{K_q}(D)=\epsilon^{K_0}(D)i^{q([D])-\tilde{q}_0([D])}$. By Proposition $\ref{pro: general intersection}$ we have $\tilde{q}_0([D])-\omega(D)=2\ngoai(D)$ modulo 4 for every $D\in \mathcal{D}(G^T)$. Using this fact and Proposition $\ref{pro: main1}$ (which is still valid in the non-orientable case) we have
	\begin{align*}
		\epsilon^{K_q}(D)&=\epsilon_0(-1)^{t(D)}i^{q([D])-\tilde{q}_0([D])}=\epsilon_0(-1)^{\trong(D)+\ngoai(D)}i^{q([D])-\tilde{q}_0([D])}\\
		&= \epsilon_0 i^{2\trong(D)+\tilde{q}_0([D])-\omega(D)}i^{q([D])-\tilde{q}_0([D])}=\epsilon_0 i^{2\trong(D)+q([D])-\omega(D)}.
	\end{align*}
	Secondly, recall that the dual $[D_0]^*$ of the relative homology class $[D_0]\in H_1(\Sigma,\mathcal{P};\mathbb{Z}_2)$ is given by $[D_0]^*([D]):=[D]\cdot [D_0]$ for every $[D]\in H_1(\Sigma,\mathcal{P};\mathbb{Z}_2)$. Since $[D_0]^*$ is linear, $q+2[D_0]^*$ is still a quadratic enhancement. Using the previous equation for  $q+2[D_0]^*$ together with Lemma $\ref{lem: general Z}$, one can write 
	\begin{align}
	Z^q(G,x)&=\sum_{D\in \mathcal{D}(G^T)}i^{q([D\Delta D_0])+2\trong(D)}x^T(D)\nonumber\\
	&=i^{q([D_0])}\sum_{D\in \mathcal{D}(G^T)}i^{q([D])+2[D]\cdot [D_0]+2\trong(D)}x^T(D)\nonumber\\  
	&=i^{q([D_0])}\sum_{D\in \mathcal{D}(G^T)}i^{(q+2D_0^*)([D])+2\trong(D)}x^T(D)\nonumber\\&=\epsilon_0 i^{q([D_0])}\sum_{D\in \mathcal{D}(G^T)}\epsilon^{K_{q+2[D_0]^*}}(D)i^{\omega(D)}x^T(D)\nonumber\\ 
	&\overset{(\ref{eq: general Pfaffian})}{=} \epsilon_0 i^{q([D_0])} \text{Pf}(A^{K_{q+2[D_0]^*,\omega}}(G^T,x^T)). \label{eq: general Z^q}
	\end{align}
	Finally, dropping the index $\omega$ for simplicity, the Ising partition function can be written as
	\begin{eqnarray*}
		Z_{\mathcal{I}}(G,x)&=&\sum_{P\in Z_1(G;\mathbb{Z}_2)}x(P)=\sum_{\alpha\in H_1(\Sigma;\mathbb{Z}_2)}\sum_{[P]=\alpha} x(P)\\
		&\overset{\text{Lem.}\, \ref{lem: Brown}(i)}{=}&\sum_{\alpha\in H_1(\Sigma;\mathbb{Z}_2)}\bigg(\frac{1}{2^{b_1/2}}\sum_{q\in\mathcal{Q}(\Sigma,\omega)}\exp(i\pi/4)^{-\text{Br}(q)}i^{q(\alpha)}\bigg)\sum_{[P]=\alpha} x(P)\\
		&=&\frac{1}{2^{b_1/2}}\sum_{q\in \mathcal{Q}(\Sigma,\omega)}\exp(i\pi/4)^{-\text{Br}(q)}\sum_{\alpha\in H_1(\Sigma;\mathbb{Z}_2)}i^{q(\alpha)}\sum_{[P]=\alpha} x(P)\\
		&=&\frac{1}{2^{b_1/2}}\sum_{q\in \mathcal{Q}(\Sigma,\omega)}\exp(i\pi/4)^{-\text{Br}(q)}Z^q(G,x)\\
		&\overset{(\ref{eq: general Z^q})}{=}&\frac{\epsilon_0}{2^{b_1/2}}\sum_{q\in \mathcal{Q}(\Sigma,\omega)}\exp(i\pi/4)^{-\text{Br}(q)+2q([D_0])}\text{Pf}(A^{K_{q+2[D_0]^*}}(G^T,x^T))\\
		&\overset{\text{Lem.}\, \ref{lem: Brown}(ii)}{=}&\frac{\epsilon_0}{2^{b_1/2}}\sum_{q\in \mathcal{Q}(\Sigma,\omega)}\exp(i\pi/4)^{-\text{Br}(q+2[D_0]^*)}\text{Pf}(A^{K_{q+2[D_0]^*}}(G^T,x^T))\\
		&=&\frac{\epsilon_0}{2^{b_1/2}}\sum_{q\in\mathcal{Q}(\Sigma,\omega)}\exp(i\pi/4)^{-\text{Br}(q)}\text{Pf}(A^{K_{q}}(G^T,x^T)).
	\end{eqnarray*} This completes the proof of Theorem $\ref{theo: general}$.
\end{proof}
Now let us show the proof of Theorem $\ref{theo: second}$.
\begin{proof}[Proof of Theorem $\ref{theo: second}$] By the same manner of the proof of Theorem $\ref{theo: first}$, using Theorem $\ref{theo: general}$ and Lemma $\ref{lem: Brown}$ part $(ii)$ one can write 
	\begin{eqnarray}
	Z_{\mathcal{I}}(G,x)&=&\frac{1}{2^{b_1/2}}\bigg| \sum_{q\in \mathcal{Q}(\Sigma,\omega)}\exp(i\pi/4)^{\text{Br}(\tilde{q}_0)-\text{Br}(q)}\text{Pf}(A^{K_q}(G^T,x^T))\bigg| \nonumber\\
	&=&\frac{1}{2^{b_1/2}}\bigg| \sum_{\Delta\in H_1(\Sigma;\mathbb{Z}_2)}i^{\tilde{q}_0(\Delta)}\text{Pf}(A^{K_{\tilde{q}_0+2\Delta^*}}(G^T,x^T))\bigg|. \nonumber
	\end{eqnarray}
	Now recall that if $\Sigma=\Sigma_g\#\mathcal{K}$, the set $\mathcal{B}'$ given by $\{[a_1],\dots,[a_g],[b_1],\dots,[b_g],[a],[b]\}$ is a basis of $H_1(\Sigma;\mathbb{Z}_2)$. In this case $b_1=2g+2$ and each element $\Delta\in H_1(\Sigma;\mathbb{Z}_2)$ can be written as $\Delta=\sum_{i=1}^g\epsilon_i[a_i]+\sum_{i=1}^g\epsilon'_i[b_i]+\alpha[a]+\beta[b]$, and hence can be identified with $(\epsilon,\epsilon',\alpha,\beta)\in \mathbb{Z}_2^{2g+2}$. Using the fact that $[a]\cdot[b]=1$ while $[a_i]\cdot[b_j]=\delta_{ij}$, $[a_i]\cdot[a_j]=[b_i]\cdot[b_j]=0$ and $[a]\cdot[a_i]=[a]\cdot[b_j]=[b]\cdot[a_i]=[b]\cdot[b_j]=0$ for every $i,j=1,\dots,g$, by definition of $\tilde{q}_0$ and quadratic enhancements we get  
	\begin{eqnarray}
	\tilde{q}_0(\Delta)&=&\tilde{q}_0(\sum_{i=1}^g\epsilon_i[a_i]+\sum_{i=1}^g\epsilon'_i[b_i]+\alpha[a]+\beta[b]) \nonumber\\
	&=&\tilde{q}_0(\sum_{i=1}^g\epsilon_i[a_i]+\sum_{i=1}^g\epsilon'_i[b_i])+\tilde{q}_0(\alpha[a]+\beta[b])\nonumber\\
	&=&2\sum_{i=1}^g\epsilon_i\epsilon'_i +2\alpha\beta+3\alpha. \nonumber
	\end{eqnarray}
	Moreover the orientation $K_{\tilde{q}_0+2\Delta^*}$, which is obtained by inverting $K_0=K_{\tilde{q}_0}$ on every edge $e$ such that $\tilde{q}_0([e])+2\Delta^*([e])\neq \tilde{q}_0([e])$, or equivalently on every edge $e$ such that $1=\Delta^*([e])=\Delta\cdot [e]$, is exactly $K_{\epsilon,\epsilon',\alpha,\beta}$ defined before. This fact together with the equations above leads to the first formula stated in Theorem $\ref{theo: second}$. The case $\Sigma=\Sigma_g\#\mathbb{R}P^2$ is treated similarly, concluding the proof.
\end{proof}
\bibliographystyle{plain}
\bibliography{reference}

\begin{thebibliography}{10}

\bibitem{Chel15}
Dmitry Chelkak, David Cimasoni, and Adrien Kassel.
\newblock Revisiting the combinatorics of the 2{D} {I}sing model.
\newblock {\em Ann. Inst. Henri Poincar\'{e} D}, 4(3):309--385, 2017.

\bibitem{Cim09}
David Cimasoni.
\newblock Dimers on graphs in non-orientable surfaces.
\newblock {\em Lett. Math. Phys.}, 87(1-2):149--179, 2009.

\bibitem{Cim10}
David Cimasoni.
\newblock A generalized {K}ac–{W}ard formula.
\newblock {\em Journal of Statistical Mechanics: Theory and Experiment},
  2010(07):P07023, 2010.

\bibitem{CimRes07}
David Cimasoni and Nicolai Reshetikhin.
\newblock Dimers on surface graphs and spin structures. {I}.
\newblock {\em Comm. Math. Phys.}, 275(1):187--208, 2007.

\bibitem{Dol99}
N.~P. Dolbilin, Yu.~M. Zinovev, A.~S. Mishchenko, M.~A. Shtanko, and M.~I.
  Shtogrin.
\newblock The two-dimensional {I}sing model and the {K}ac-{W}ard determinant.
\newblock {\em Izv. Ross. Akad. Nauk Ser. Mat.}, 63(4):79--100, 1999.

\bibitem{Fis61}
Michael~E. Fisher.
\newblock Statistical mechanics of dimers on a plane lattice.
\newblock {\em Phys. Rev.}, 124:1664--1672, Dec 1961.

\bibitem{Gal99}
Anna Galluccio and Martin Loebl.
\newblock On the theory of {P}faffian orientations. {I}. {P}erfect matchings
  and permanents.
\newblock {\em Electron. J. combin}, 6(1):R6, 1999.

\bibitem{Hurst60}
Charles~A. Hurst and Herbert~S. Green.
\newblock New solution of the ising problem for a rectangular lattice.
\newblock {\em The Journal of Chemical Physics}, 33(4):1059--1062, 1960.

\bibitem{Kac52}
Mark Kac and John~C. Ward.
\newblock A combinatorial solution of the two-dimensional {I}sing model.
\newblock {\em Physical Review}, 88(6):1332, 1952.

\bibitem{Kas61}
P.~W. Kasteleyn.
\newblock The statistics of dimers on a lattice: {I}. {T}he number of dimer
  arrangements on a quadratic lattice.
\newblock {\em Physica}, 27(12):1209--1225, 1961.

\bibitem{Kas63}
P.~W. Kasteleyn.
\newblock Dimer statistics and phase transitions.
\newblock {\em Journal of Mathematical Physics}, 4(2):287--293, 1963.

\bibitem{Kas67}
P.~W. Kasteleyn and F.~Harary.
\newblock Graph theory and theoretical physics.
\newblock {\em Graph Theory and Crystal Physics, Academic Press, London}, pages
  43--110, 1967.

\bibitem{KirTay90}
R.~C. Kirby and L.~R. Taylor.
\newblock {${\rm Pin}$} structures on low-dimensional manifolds.
\newblock In {\em Geometry of low-dimensional manifolds, 2 ({D}urham, 1989)},
  volume 151 of {\em London Math. Soc. Lecture Note Ser.}, pages 177--242.
  Cambridge Univ. Press, Cambridge, 1990.

\bibitem{Loeb11}
Martin Loebl and Gregor Masbaum.
\newblock On the optimality of the {A}rf invariant formula for graph
  polynomials.
\newblock {\em Advances in Mathematics}, 226(1):332--349, 2011.

\bibitem{Onsager44}
Lars Onsager.
\newblock Crystal statistics. {I}. {A} two-dimensional model with an
  order-disorder transition.
\newblock {\em Physical Review}, 65(3-4):117, 1944.

\bibitem{Saveliev12}
Nikolai Saveliev.
\newblock {\em Lectures on the topology of 3-manifolds: An introduction to the
  Casson invariant}.
\newblock De Gruyter Textbook. Walter de Gruyter \& Co., Berlin, revised
  edition, 2012.

\bibitem{Sherman60}
Seymour Sherman.
\newblock Combinatorial aspects of the {I}sing model for ferromagnetism. {I}.
  {A} conjecture of {F}eynman on paths and graphs.
\newblock {\em Journal of Mathematical Physics}, 1(3):202--217, 1960.

\bibitem{Tes2000}
Glenn Tesler.
\newblock Matchings in graphs on non-orientable surfaces.
\newblock {\em Journal of Combinatorial Theory, Series B}, 78(2):198--231,
  2000.

\bibitem{Wae41}
B.~L. van~der Waerden.
\newblock Die lange reichweite der regelmassigen atomanordnung in
  mischkristallen.
\newblock {\em Z. Phys.}, 118(7-8):473, 1941.

\end{thebibliography}
\end{document}